\DeclareMathOperator{\Var}{Var}
\DeclareMathOperator*{\argmax}{arg\,max}
\DeclareMathOperator*{\argmin}{arg\,min}
\newcommand{\R}{\mathbb{R}}
\newcommand{\cP}{\mathcal{P}}
\newcommand{\E}{\mathbb{E}}
\newtheorem{theorem}{Theorem}
\newtheorem{proposition}{Proposition}
\newtheorem{lemma}{Lemma}
\theoremstyle{definition}
\newtheorem{definition}{Definition}
\newtheorem{assumption}{Assumption}
\theoremstyle{remark}
\title{Wage Dispersion, On-the-Job Search, and Stochastic Match Productivity: A Mean Field Game Approach\thanks{I thank several colleagues in economics and mathematics for helpful conversations on this project, including at the idea stage. I gratefully acknowledge the use of \emph{Refine} (\url{https://www.refine.ink/}) for assistance in checking the consistency and completeness of the proofs in a preliminary version of this manuscript. All remaining errors are my own. Comments are very welcome.}}
\author{I.\ Sebastian Buhai\thanks{Contact email: \texttt{sebastian.buhai@sofi.su.se}. Full coordinates: \url{https://www.sebastianbuhai.com}.}\\
  \small SOFI, Stockholm University\\
  \small Instituto de Economia, UC Chile\\
  \small NIPE, University of Minho
}
\date{\footnotesize{Version dated 07/12/2025. \href{https://www.sebastianbuhai.com/papers/publications/stochastic_prod_mfg.pdf}{Latest version.}}}
\begin{document}

\maketitle

\begin{abstract}
\footnotesize{
Wage dispersion and job-to-job mobility are central features of modern labour markets, yet canonical equilibrium search models with exogenous job-offer ladders struggle to jointly account for these facts and the magnitude of frictional wage inequality. We develop a continuous-time equilibrium search model in which match surplus follows a diffusion process, workers choose on-the-job search and separation, firms post state-contingent wages, and the cross-sectional distribution of match states endogenously determines both outside options and the job ladder.

On the theoretical side, we formulate the problem as a stationary mean field game with a one-dimensional surplus state, characterize stationary mean field equilibria, and show that equilibrium separation is governed by a free-boundary rule: matches continue if and only if surplus stays above an endogenous threshold. Under standard regularity and Lasry-Lions monotonicity conditions we prove existence and uniqueness of stationary equilibrium and obtain comparative statics for the separation boundary, wage schedules, and wage dispersion.

On the quantitative side, we solve the coupled Hamilton-Jacobi-Bellman and Kolmogorov system using monotone finite-difference methods and interpret the discretization as a finite-state mean field game. The model is calibrated to micro evidence on stochastic match productivity, job durations, tenure-dependent separation hazards, wage growth, and job-to-job mobility. The stationary equilibrium delivers a structural decomposition of wage dispersion into stochastic selection along job spells, equilibrium on-the-job search and the induced job ladder, and equilibrium wage policies with feedback through outside options. We use this framework to quantify how firing costs, search subsidies, and changes in match-productivity volatility jointly shape mobility, the job ladder, and the cross-sectional distribution of wages.

\bigskip
\noindent\textbf{JEL codes:} C73; D83; J31; J63; J64.\\
\noindent\textbf{Keywords:} Wage dispersion; On-the-job search; Job ladders; Stochastic match productivity; Mean field games.
}
\end{abstract}

\section{Introduction}\label{sec:intro}

Wage dispersion and job-to-job mobility are central features of modern labour markets. Even within narrowly defined occupations and establishments, workers holding similar jobs earn very different wages, and wage growth over a career is tightly linked to mobility between employers.\footnote{See, among many others, \citet{BurdettMortensen1998,PostelVinayRobin2002,Moscarini2005JobMatchingWageDistribution}.} Standard frictionless models cannot account for these patterns. Search frictions, on-the-job search, and firm heterogeneity provide key ingredients for explaining why wage distributions are wide and why workers climb a job ladder, but in most existing models, including the canonical equilibrium search models of \citet{BurdettMortensen1998} and \citet{PostelVinayRobin2002}, the structure of the ladder and the dynamics of match quality are introduced in reduced form. Moreover, quantitative implementations of these models typically struggle to reproduce the magnitude and shape of observed wage dispersion without assuming implausibly large productivity differences across firms, a difficulty sometimes referred to as the ``frictional wage dispersion'' puzzle.\footnote{See \citet{HornsteinKrusellViolante2011} for a systematic assessment.}

A central microeconomic driver of wage dispersion and mobility is the stochastic evolution of match productivity. This perspective goes back to the job-matching literature, in which match quality is learned gradually on the job \citep{Jovanovic1979JobMatching,Jovanovic1984MatchingTurnover}, and to search models that embed learning about match output in general equilibrium \citep{Moscarini2005JobMatchingWageDistribution,AlvarezShimer2011SearchRestUnemployment}. In \citet{BuhaiTeulings2014}, inside and outside productivities are modeled as correlated Brownian motions, and efficient separation is characterized as an optimal stopping problem. A match is destroyed when its idiosyncratic productivity falls sufficiently far below the worker's outside option, and the resulting hitting-time distribution generates realistic job-duration hazards: separation rates peak early in the job and decline with tenure, and a non-negligible fraction of matches survive until retirement. The associated inverse-Gaussian and defective duration distributions form a Gaussian special case of the mixed hitting-time models in \citet{Abbring2012MixedHittingTime}, in which durations are first-hitting times of a L\'evy process crossing a heterogeneous threshold. Stochastic selection along job spells then explains most of the apparent concavity of estimated wage-tenure profiles, and completed-spell data can induce severe bias in standard wage-tenure regressions. These results show that selection from diffusive match dynamics is a powerful source of wage dispersion, but they are obtained in a deliberately partial-equilibrium setting.

The stochastic productivity model in \citet{BuhaiTeulings2014} treats the worker's outside option as exogenous: it summarizes the envelope of potential alternative matches but is not derived from the cross-sectional distribution of firms and wages in the economy. Firms do not choose wage policies strategically in response to the distribution of match productivities and search intensities, and the joint distribution of wages and job-to-job transitions is not determined in a general equilibrium with many interacting agents. As a result, the model is silent about how the job ladder itself is shaped by equilibrium wage setting and by workers' on-the-job search behaviour. In particular, it cannot answer how much of equilibrium wage dispersion is due to stochastic match dynamics and selection, how much arises from strategic on-the-job search and the induced job ladder, and how much is generated by firms' wage policies and their feedback into outside options.

In this paper we embed the stochastic match-productivity structure of \citet{BuhaiTeulings2014} into a continuous-time equilibrium model of on-the-job search and wage setting with a continuum of infinitesimal worker-firm matches. For each active match, idiosyncratic surplus follows a diffusion process as in the stochastic benchmark. Workers choose their on-the-job search intensity and their acceptance or switching decision when offers arrive, and firms choose wage policies for their matches. The cross-sectional distribution of match states and wages determines workers' outside options and the distribution of alternative offers; in equilibrium this distribution must be consistent with the policies that workers and firms choose when they take it as given. Mathematically, the stationary equilibrium can be written as a mean field game (MFG) in the sense of \citet{LasryLions2007}: individual optimality conditions are described by Hamilton-Jacobi-Bellman equations, while the evolution of the cross-sectional distribution is described by a Kolmogorov forward equation.\footnote{In economic terms, a stationary MFG equilibrium is the continuum-agent analogue of a competitive rational-expectations equilibrium: each individual takes the aggregate distribution as given when choosing search and wage policies, and the distribution is required to be invariant under the induced dynamics.} In our setting this yields a canonical stationary MFG system with a one-dimensional surplus diffusion and a non-local, monotone dependence on the distribution through outside options and offer arrival rates, in line with the monotone MFG framework of \citet{CardaliaguetPorretta2020IntroMFG} and \citet{Ryzhik2020}.

Our first contribution is to formulate and characterize stationary mean field equilibria in this environment. We define a stationary MFG equilibrium as a collection of worker and firm policies and a stationary distribution of match states such that (i) workers and firms are individually optimal, taking the distribution as given, and (ii) the distribution is invariant under the induced stochastic dynamics. Under Lipschitz, linear-growth, convexity, and monotonicity assumptions on preferences, technology, matching, and search costs that align with the structural conditions in \citet{CardaliaguetPorretta2020IntroMFG} and \citet{Ryzhik2020}, we prove existence of stationary mean field equilibria and, under a Lasry-Lions type monotonicity condition, uniqueness of the stationary equilibrium. In our one-dimensional diffusion setting, we then show that the worker's optimal separation behaviour in any stationary equilibrium is described by a free-boundary rule in match surplus: there exists a threshold such that matches are continued if and only if their surplus remains above this boundary, with smooth fit at the endogenous separation threshold. From an economic perspective, separation behaves as a simple reservation rule in a scalar match-quality index, even though outside options and wages respond endogenously to the cross-sectional distribution. This result generalizes the efficient separation rule in \citet{BuhaiTeulings2014} to a general-equilibrium setting in which the outside option is itself determined by the wage distribution, and it links the applied job-matching literature to the theory of MFGs with optimal stopping and free boundaries \citep[e.g.][]{Bertucci2018OptimalStoppingMFG,Nutz2018OptimalStoppingMFG,Bertucci2020ImpulseControlMFG}. We also interpret our stationary equilibrium as the fixed point of a finite-state MFG operator in the spirit of \citet{GomesEtAl2010FiniteStateDiscrete,GomesEtAl2024FiniteStateContinuous}, which provides an intuitive, contraction-type heuristic behind uniqueness; formally, however, our uniqueness result relies on a Lasry-Lions monotonicity argument in Section~\ref{sec:existenceUniqueness}.

Our second contribution is to link equilibrium wage dispersion to three distinct mechanisms within a unified continuous-time framework: (i) stochastic match dynamics and selection, (ii) equilibrium on-the-job search behaviour and the induced job ladder, and (iii) firms' wage policies and their feedback into outside options. In the model, the cross-sectional distribution of wages arises from the stationary distribution of match states, the optimal separation threshold, and the equilibrium wage schedule induced by firms' policies. We provide a structural decomposition of wage inequality that isolates the contribution of the selection mechanism highlighted by \citet{BuhaiTeulings2014} from the additional dispersion generated by on-the-job search and equilibrium wage-setting feedbacks, in the spirit of the on-the-job search literature \citep{BurdettMortensen1998,PostelVinayRobin2002,AlvarezShimer2011SearchRestUnemployment}. Unlike reduced-form decompositions based on wage regressions or on separate models for different mechanisms, our exercise keeps the stochastic structure of match productivity fixed and recomputes the stationary equilibrium in a sequence of internally consistent counterfactual economies that progressively switch on on-the-job search and mean field wage feedbacks. Each counterfactual is solved as a stationary solution of the coupled HJB-Kolmogorov system, so the decomposition is entirely internal to the mean field game rather than imposed ex post.

Quantitatively, this decomposition yields a sharp picture of how selection, search, and wage policies interact over the job ladder. At very short tenures, selection by itself generates a non-trivial variance of log wages, but optimal on-the-job search strongly equalizes wages within tenure cohorts: in the selection-plus-search economy the within-tenure variance is nearly zero at one year of tenure. Endogenous wage policies and the mean field feedback of outside options then re-amplify dispersion at the bottom of the ladder, with the fully endogenous equilibrium generating substantially more short-tenure wage dispersion than a purely selection-driven environment. At longer tenures, the ranking reverses. Under selection alone, the variance of log wages is roughly flat in tenure, whereas in the full equilibrium search and wage setting compress dispersion within tenure bins, so that the remaining long-run dispersion can be traced mainly to underlying selection in match productivities and outside options. Taken together, the three counterfactuals decompose equilibrium wage dispersion into selection, search-and-job-ladder, and wage-policy-and-feedback components, and show that selection is the dominant source of potential wage inequality at long tenures, while wage policies and mean field feedbacks are crucial for understanding the dispersion observed early in the job spell.

Our third contribution is quantitative. We develop a numerical solution method for the stationary MFG based on monotone finite-difference schemes for the coupled HJB-Kolmogorov system, adapting the algorithms proposed by \citet{AchdouCapuzzoDolcetta2010NumericsMFG} and \citet{AchdouEtAl2022Restud} and interpreting the resulting discretization as a finite-state MFG in the sense of \citet{GomesEtAl2010FiniteStateDiscrete,GomesEtAl2024FiniteStateContinuous}. We calibrate the model to U.S.\ micro data, using the stochastic productivity estimates from \citet{BuhaiTeulings2014} as a benchmark for the diffusion parameters and matching standard facts about job durations, tenure-dependent separation hazards, including an early hazard peak and a non-trivial mass of jobs that never end, wage-growth profiles, and job-to-job mobility. Relative to a partial-equilibrium calibration with an exogenous outside option, the stationary mean field equilibrium generates additional structure in both the tenure hazard and the wage distribution. Using the three-counterfactual decomposition described above, we quantify how much of observed wage inequality can be attributed to stochastic selection from match dynamics (mechanism~(i)) versus the additional dispersion generated by equilibrium on-the-job search and the induced job ladder (mechanism~(ii)) and by wage-policy feedback through outside options (mechanism~(iii)). This exercise shows that equilibrium feedbacks between wages, search incentives, and outside options systematically amplify or attenuate the dispersion generated by pure selection in a tenure-dependent and quantitatively meaningful way.

We also use the calibrated model to study policy counterfactuals that change firing costs, search incentives, and the volatility of match productivity. Introducing moderate firing costs shifts the free boundary downward, lowers separation hazards at all tenures, and reduces job-to-job mobility, with relatively small effects on wage dispersion; larger firing costs eventually increase the variance of log wages by trapping more workers in low-surplus, low-wage matches. Search subsidies and policies that raise the value of outside options increase optimal search intensity and alter selectivity along the job ladder. In our calibration, these policies raise both the variance of log wages and, for a broad range of interventions, the job-to-job transition rate at short tenures by shifting mass from the bottom to the upper tail of the surplus and wage distributions. Changes in the volatility of match productivity have more moderate effects: higher volatility strengthens selection and slightly spreads out the surplus and wage distributions, but within the empirically plausible range of volatility multipliers the induced variation in wage dispersion and mobility is modest relative to the impact of firing costs and search incentives.

Our analysis relates to several strands of the literature. Substantively, we contribute to the on-the-job search and wage-dispersion literature following \citet{BurdettMortensen1998,PostelVinayRobin2002,Mortensen2005WageDispersion} and to dynamic job-matching models in which match quality evolves stochastically over the life of the match \citep{Jovanovic1979JobMatching,Jovanovic1984MatchingTurnover,Moscarini2005JobMatchingWageDistribution,AlvarezShimer2011SearchRestUnemployment}. We also relate to structural equilibrium search models with permanent worker and firm heterogeneity, such as \citet{LentzMortensen2010} and \citet{BaggerFontainePostelVinayRobin2014}, which decompose wage growth into within- and between-job components using matched employer-employee data. Relative to this work, we deliberately abstract from ex ante type heterogeneity and instead generate wage growth and dispersion from stochastic evolution of match-specific productivity combined with endogenous outside options and strategic wage posting. Our framework also speaks to the quantitative ``frictional wage dispersion'' puzzle emphasized by \citet{HornsteinKrusellViolante2011}, by showing how substantial wage dispersion can arise from diffusive match dynamics even when fundamental productivity heterogeneity is moderate. On the macro side, we connect to the continuous-time heterogeneous-agent literature that formulates Bewley-Aiyagari-Huggett models as coupled HJB and Kolmogorov equations \citep[e.g.][]{AchdouEtAl2022Restud,FernandezVillaverdeNuno2021HeterogeneousAgents,Moll2019MFGMacroeconomics} and integrates such models with aggregate dynamics in the spirit of \citet{MortensenPissarides1994JobCreationDestruction}.

Methodologically, we build on the theory of mean field games and their economic applications \citep{LasryLions2007,GueantLasryLions2011MFGApplications,CardaliaguetPorretta2020IntroMFG,CarmonaDelarue2018BookI,Carmona2020MFGFinanceEcon}, on MFGs with optimal stopping and impulse control, i.e.\ in the restricted ``accept/reject at (controlled) Poisson arrival times'' formulation that is relevant here \citep{Bertucci2018OptimalStoppingMFG,Bertucci2020ImpulseControlMFG,Nutz2018OptimalStoppingMFG}, and on numerical techniques developed for MFGs and related control problems \citep{AchdouCapuzzoDolcetta2010NumericsMFG,CarmonaLauriere2021DeepLearningMFG,Gueant2021ContinuousTimeOptimalControl}. Our focus in this paper is on stationary equilibria without aggregate shocks; in Section~\ref{sec:conclusion} we discuss how common-noise and master-equation extensions, along the lines of \citet{AhujaRenYang2022}, \citet{BertucciMeynard2024FiniteStateMaster}, and \citet{MollRyzhik2025}, could be used to analyse how aggregate fluctuations move the entire surplus and wage distributions.

Within the applied MFG literature on labour and matching, our work is complementary to \citet{PerthameRibesSalort2018}, who use a mean field game system to study career paths and firm-level wage structures in a stylized hierarchical labour market, and to \citet{BayraktarCavalliReisinger2025}, who analyse dynamic optimal matching in a two-sided market. These contributions share our use of MFG tools but focus on career planning and the design of matching mechanisms, respectively. By contrast, we model match-specific productivity as a diffusion, embed on-the-job search and wage posting in a stationary equilibrium with endogenous outside options, and use the resulting structure to decompose frictional wage dispersion and job-ladder dynamics.

The remainder of the paper is organized as follows. Section~\ref{sec:environment} presents the economic environment and recalls the stochastic match-productivity model of \citet{BuhaiTeulings2014} as a partial-equilibrium benchmark. Section~\ref{sec:MFG} formulates the mean field game with on-the-job search and wage setting and defines stationary mean field equilibrium. Section~\ref{sec:equilibriumCharacterization} characterizes the equilibrium separation rule and the implied stationary distribution of match states and wages. Section~\ref{sec:existenceUniqueness} establishes existence and discusses conditions for uniqueness and comparative statics within the monotone MFG framework. Section~\ref{sec:quantitative} describes the numerical implementation and calibration, and Section~\ref{sec:results} presents the quantitative results and policy experiments. Section~\ref{sec:conclusion} discusses extensions, including common-noise and non-stationary environments, and concludes.

\section{Economic Environment and Benchmark Model}\label{sec:environment}

In this section we describe the economic environment at the level of a single worker-firm match and recall the continuous-time stochastic productivity model of \citet{BuhaiTeulings2014}. The benchmark is deliberately partial equilibrium: the worker's outside option is taken as exogenous and wages follow a reduced-form sharing rule. This keeps the micro foundations transparent, provides a direct link to the empirical analysis in \citet{BuhaiTeulings2014}, and prepares the ground for the general-equilibrium mean field game in Section~\ref{sec:MFG}. Throughout this section all primitive processes are exogenous and identical across worker-firm pairs; in later sections we endogenize their counterparts and let the cross-sectional distribution of match states feed back into individual decisions.

\subsection{Workers, firms, and matches}\label{subsec:workers_firms}

Time is continuous and indexed by $t \geq 0$. We start from a finite economy with $N$ workers and $N$ firms. Each worker can be matched with at most one firm at any instant, and each firm employs at most one worker. For most of this section we focus on a representative worker-firm pair and suppress indices.

When a worker is matched with a firm, the pair produces a flow of output whose logarithm we denote by $P_t$ and interpret as the log productivity of the current match at time $t$. The worker also has an outside option, summarized by a process $R_t$ that represents the log value of the best available alternative to the current job. In the benchmark model both $P_t$ and $R_t$ are taken as exogenous stochastic processes.

Following \citet{BuhaiTeulings2014}, we assume that inside and outside productivities follow correlated Brownian diffusions with constant coefficients,
\begin{align}
  dP_t &= \mu_P \, dt + \sigma_P \, dB^P_t,
  \label{eq:inside_process}\\
  dR_t &= \mu_R \, dt + \sigma_R \, dB^R_t,
  \label{eq:outside_process}
\end{align}
where $\mu_P,\mu_R \in \mathbb{R}$, $\sigma_P,\sigma_R > 0$, and $(B^P_t,B^R_t)_{t \geq 0}$ is a two-dimensional standard Brownian motion with correlation $\rho \in [-1,1]$. All processes are adapted to a filtration that satisfies the usual conditions, and the primitives $(\mu_P,\mu_R,\sigma_P,\sigma_R,\rho)$ are constant across matches and over time.

The key state variable for separation decisions is the surplus process
\[
  Z_t \equiv P_t - R_t.
\]
We refer to $Z_t$ as (log) match surplus: the productivity advantage of the current job relative to the worker's best available alternative. In the benchmark we deliberately abstract from permanent worker or firm heterogeneity; all cross-sectional dispersion in productivity and wages arises from the realization of the idiosyncratic diffusion $(P_t,R_t)$ and from selection through the separation rule. This keeps the state space one-dimensional and allows us to attribute tenure-dependent wage dispersion to the accumulation of match-specific shocks rather than to ex ante type differences; we return to this simplifcation in the discussion section.

Subtracting \eqref{eq:outside_process} from \eqref{eq:inside_process} yields
\begin{equation}
  dZ_t
  = \mu_Z \, dt + \sigma_Z \, dB_t,
  \qquad
  \mu_Z \equiv \mu_P - \mu_R, \quad
  \sigma_Z^2 \equiv \sigma_P^2 + \sigma_R^2 - 2\rho \sigma_P \sigma_R,
  \label{eq:surplus_process}
\end{equation}
for some one-dimensional standard Brownian motion $(B_t)_{t \geq 0}$. Thus the surplus is a
scalar diffusion with constant drift and volatility. This Markov state will remain central in
the general-equilibrium analysis, where we allow more general drift and volatility functions
but specialize back to \eqref{eq:surplus_process} in the quantitative implementation. In the
benchmark specification \eqref{eq:surplus_process}, the drift of $Z_t$ does not depend on $z$,
so the monotonicity-in-$z$ requirement on $\mu$ in Assumption~\ref{ass:primitives} is
satisfied trivially.

Workers and firms are risk neutral and discount future payoffs at a common rate $r>0$. When the match is active, it generates a flow payoff that is split between the worker and the firm through a wage $w_t$ and a residual payoff for the firm that we normalize as $P_t - w_t$ in log units. When the match is dissolved, the worker receives an outside value, which is a function of $R_t$, and the firm reverts to a normalized outside value that we take to be zero in the benchmark model. The modelling of unemployment, vacancy creation, and entry, and hence the economic content of the outside option process, we postpone to the mean field game formulation in Section~\ref{sec:MFG}.

\subsection{Stochastic match productivity and separation}\label{subsec:benchmark_separation}

We now recall the separation problem in the stochastic productivity model of \citet{BuhaiTeulings2014}. In their structural formulation the agent observes both inside and outside productivities $(P_t,R_t)$ and chooses a stopping time $\tau$ (the separation time) with respect to the joint filtration, trading off the flow surplus from the current match against the value of switching to the outside environment. The payoff at separation depends on the outside option $R_\tau$ through a continuation value $\Phi(R_\tau)$.

For the purposes of this paper it is convenient to work with an equivalent reduced-form formulation that uses only the scalar surplus process $Z_t = P_t - R_t$ as a state variable. We assume that the value of switching to the outside environment when the current surplus equals $z$ can be summarized by a function $\Psi\colon \mathbb{R} \to \mathbb{R}$; that is, conditional on separating at a time $\tau$ such that $Z_\tau = z$, the worker obtains continuation value $\Psi(z)$.%
\footnote{One can think of $\Psi(z)$ as the expected continuation value of the outside option,
\(
  \Psi(z)
  = \mathbb{E}\bigl[e^{-r\tau}\Phi(R_\tau)\,\big|\,Z_\tau = z\bigr],
\)
or, equivalently, as the value associated with a fixed path of the exogenous process $(R_t)_{t\ge0}$ when the surplus hits $z$. In the parametric specification of \citet{BuhaiTeulings2014} this mapping is constructed explicitly from the outside environment faced by separating workers. For our purposes we simply take $\Psi$ as an exogenous function of $z$ that satisfies the regularity conditions stated below.}
Intuitively, $\Psi(z)$ packages all aspects of the post-separation environment into a single index given by the current surplus.

Under this reduced-form assumption, the separation problem can be written purely in terms of the surplus diffusion~\eqref{eq:surplus_process} as
\begin{equation}
  V(z)
  = \sup_{\tau \geq 0}
    \mathbb{E} \Biggl[
      \int_0^\tau e^{-rt} S(Z_t) \, dt
      + e^{-r\tau} \Psi(Z_\tau)
      \,\Bigg|\,
      Z_0 = z
    \Biggr],
  \label{eq:planner_problem}
\end{equation}
where $z$ is the initial surplus and $S(\cdot)$ is the instantaneous surplus from continuing the match. In this benchmark $S$ and $\Psi$ are exogenous functions. We assume that $S$ is strictly increasing, positive for sufficiently large $z$, negative for sufficiently low $z$, and that both $S$ and $\Psi$ satisfy standard regularity conditions ensuring existence and uniqueness of the value function $V$.

Under these conditions \eqref{eq:planner_problem} is a standard infinite-horizon optimal stopping problem for the scalar diffusion $Z_t$. The associated Hamilton-Jacobi-Bellman equation takes the form of an obstacle problem,
\[
  \max\Bigl\{
    r V(z) - \mu_Z V'(z) - \tfrac{1}{2} \sigma_Z^2 V''(z) - S(z),
    \; V(z) - \Psi(z)
  \Bigr\} = 0.
\]
In the parametric specification of \citet{BuhaiTeulings2014} the function $\Psi$ is chosen to match the flow payoff and search environment faced by workers at separation. In our general-equilibrium model its analogue will be endogenized.

\citet{BuhaiTeulings2014} show that, under their parametrization of $S$ and $\Phi$ and for the surplus dynamics in \eqref{eq:surplus_process}, the optimal stopping rule is a threshold policy. There exists a constant $z^\ast$ such that it is optimal to continue the match as long as $Z_t > z^\ast$ and to separate as soon as $Z_t$ hits $z^\ast$ from above. Formally,
\[
  \tau^\ast = \inf\{ t \geq 0 : Z_t \leq z^\ast \}
\]
solves \eqref{eq:planner_problem}, and $V(\cdot)$ is the unique solution to the HJB equation with value matching and smooth pasting at $z^\ast$. The implied job-tenure distribution is inverse Gaussian with a defective mass at infinity. This is a Gaussian special case of the mixed hitting-time duration models in \citet{Abbring2012MixedHittingTime}, in which durations are first-hitting times of a Lévy process crossing a heterogeneous threshold. In our benchmark the driving process is a Brownian motion with drift, and heterogeneity arises from initial surplus and the outside-option path. The threshold $z^\ast$ depends on the drift and volatility parameters $(\mu_Z,\sigma_Z)$, on the discount rate $r$, and on the specification of $S$ and $\Phi$.

This structure actually has sharp implications for job duration and wage dynamics. The hitting-time distribution of $Z_t$ at $z^\ast$ implies that separation hazards are high early in the job, when the surplus process is still close to the boundary, and decline with tenure as matches that have survived selection drift away from the threshold. At the same time, average surplus conditional on survival increases with tenure, because low-surplus matches are disproportionately selected out. In \citet{BuhaiTeulings2014} this selection effect accounts for most of the apparent concavity in estimated wage-tenure profiles and leads to severe bias in regressions that use completed spells only. These features connect the benchmark model to the broader job-matching literature with learning about match quality, as in \citet{Jovanovic1984MatchingTurnover}, and at the same time place it squarely within the structural mixed hitting-time framework of \citet{Abbring2012MixedHittingTime}. In our calibration we will exploit this explicit hitting-time representation when mapping the model to observed job durations.

In the environment of this section, the outside-option process $R_t$ and its continuation value $\Phi(R_t)$ are entirely exogenous. In particular, they are not generated by the equilibrium distribution of wages and match states in a large economy. This is the main dimension along which we will depart from \citet{BuhaiTeulings2014} when we introduce the mean field game in Section~\ref{sec:MFG}.

\subsection{Wage determination in a single match}\label{subsec:benchmark_wage}

To close the benchmark model at the level of an individual match, we adopt a reduced-form wage-sharing rule that captures the empirical specification in \citet{BuhaiTeulings2014} and remains flexible enough to nest standard bargaining assumptions.

Let $w_t$ denote the log wage paid to the worker with the match is active. We assume that $w_t$ is a deterministic function of the inside and outside productivities and focus on a linear sharing rule,
\begin{equation}
  w_t
  = \alpha P_t + (1 - \alpha) R_t,
  \qquad
  \alpha \in (0,1).
  \label{eq:wage_sharing_rule}
\end{equation}
The parameter $\alpha$ measures the sensitivity of wages to the idiosyncratic productivity of the current match relative to the outside option. For example, if wages are determined by Nash bargaining between a worker and a firm whose outside values are proportional to $\exp(R_t)$ and to a constant, then one obtains a sharing rule of the form \eqref{eq:wage_sharing_rule} in logs. More generally, \eqref{eq:wage_sharing_rule} can be viewed as a first-order approximation to any smooth wage schedule that depends on $(P_t,R_t)$.

Under \eqref{eq:wage_sharing_rule}, the wage inherits the Brownian structure of $P_t$ and $R_t$. Combining \eqref{eq:inside_process}, \eqref{eq:outside_process}, and \eqref{eq:wage_sharing_rule} and applying It\^o's lemma yields
\begin{equation}
  d w_t = \mu_w \, dt + \sigma_w \, dB^w_t,
  \label{eq:wage_process}
\end{equation}
for suitable drift $\mu_w$ and volatility $\sigma_w$, and some standard Brownian motion $(B^w_t)_{t \geq 0}$. Thus, conditional on survival, wages follow a drift-diffusion process whose cross-sectional variance and tenure profile reflect the surplus dynamics \eqref{eq:surplus_process} and the selection mechanism described above. In the empirical implementation of \citet{BuhaiTeulings2014}, \eqref{eq:wage_sharing_rule} is used as the structural wage equation, and the parameters $(\alpha,\mu_P,\mu_R,\sigma_P,\sigma_R,\rho)$ are estimated jointly from wage histories and job durations, exploiting the mixed hitting-time structure in the spirit of \citet{Abbring2012MixedHittingTime}.

The linear sharing rule in \eqref{eq:wage_sharing_rule} is convenient for connecting the structural model to data, but in the general-equilibrium model with strategic firms we will not restrict attention to linear schedules. Instead, we will let firms choose wage policies $w(\cdot)$ as part of their strategy in the mean field game, interpreted as Markov wage-posting policies that depend on the current surplus $z$. One can think of the weight on the outside option as summarizing the degree of wage competition and the scope for counteroffers: if wages tracked $R_t$ one-for-one, workers would almost fully internalize outside offers and job-to-job moves would yield only small wage gains, while a larger weight on $P_t$ makes incumbent wages less responsive to outside options and generates genuine wage jumps when workers switch employers. The benchmark rule \eqref{eq:wage_sharing_rule} threfore provides a useful reference point for interpreting the equilibrium wage schedules and their calibration in the full model.

\subsection{From partial equilibrium to a large economy}\label{subsec:from_partial_to_large}

The benchmark model provides a disciplined description of the dynamics of an individual worker-firm match, but it treats the outside option and the wage schedule in a purely reduced-form way. In particular:
\begin{itemize}
  \item The outside-productivity process $R_t$ and its continuation value $\Phi(R_t)$ are exogenous. They summarize the envelope of potential alternative jobs but are not derived from a model of job creation, search, and wage determination in a large economy.
  \item The wage is pinned down by the exogenous sharing rule \eqref{eq:wage_sharing_rule}. Firms do not choose wage policies strategically, and there is no feedback from the cross-sectional distribution of wages and match states to individual incentives.
  \item On-the-job search is implicit in the interpretation of $R_t$ as a best outside offer, but workers do not choose search intensity or acceptance decisions explicitly, and there is no explicit job ladder generated by equilibrium wage posting or bargaining across firms.
\end{itemize}

These limitations are acceptable for the purposes of the empirical analysis in \citet{BuhaiTeulings2014} and for interpreting the benchmark as a mixed hitting-time duration model in the sense of \citet{Abbring2012MixedHittingTime}, but they prevent it from speaking to questions about equilibrium wage dispersion and job mobility in a large economy with many interacting agents. In particular, the benchmark is silent about how the cross-sectional distribution of wages and match states is determined in equilibrium, how that distribution shapes workers' outside options and search behaviour, and how firms respond through wage policies.

To address these questions, we now move from the finite economy with $N$ workers and $N$ firms to a continuum of worker-firm matches. As $N$ becomes large, the individual effect of any single match on the aggregate distribution becomes negligible, while the aggregate distribution feeds back into each match through the outside option and the arrival and terms of alternative offers. The appropriate equilibrium concept in this limit is a stationary mean field equilibrium (MFG equilibrium) in which individual workers and firms solve control problems of the kind described above, taking as given a stationary distribution of match states, and the distribution is invariant under the induced stochastic dynamics.

This stationary MFG equilibrium coincides with a competitive rational-expectations equilibrium in a continuum economy: each worker-firm pair is infinitesimal and optimizes taking the cross-sectional distribution of surplus and wages as given, and in equilibrium that distribution is exactly the invariant distribution generated by the optimal policies. The term ``mean field'' simply reflects the fact that individual decisions interact only through this distribution rather than through the identity of particular trading partners. Section~\ref{sec:MFG} formalizes this mean field game and endogenizes the outside option, the job ladder, and the wage distribution.

\section{Mean Field Game Formulation}\label{sec:MFG}

We now move from the partial-equilibrium benchmark of Section~\ref{sec:environment} to a large economy with a continuum of worker-firm matches. Each individual match is negligible, but the cross-sectional distribution of match surpluses affects workers' outside options, the arrival and terms of job offers, and firms' wage policies. We formalize this interaction as a continuous-time MFG and later restrict attention to stationary equilibria in the spirit of \citet{LasryLions2007} and \citet{CardaliaguetPorretta2020IntroMFG}.

From a labor economist's perspective, the mean field formulation can be read as the continuum limit of a large random-matching labor market. A representative worker-firm pair takes as given the cross-sectional distribution of match states and wages when solving its dynamic problem, while in equilibrium this distribution must coincide with the one generated by the optimal policies of all pairs. This is entirely analogous to a competitive rational-expectations equilibrium in a continuum economy; we use the MFG language to connect to the existing existence and uniqueness theory.

Throughout, $\mathcal{P}(\mathbb{R})$ denotes the set of Borel probability measures on $\mathbb{R}$ equipped with the topology of weak convergence. For objects that depend both on an individual surplus $z$ and on an aggregate distribution $m \in \mathcal{P}(\mathbb{R})$ we write, for example, $w(z,m)$ or $\lambda(a,m)$, and occasionally suppress the argument $m$ when no confusion can arise.

As in Section~\ref{sec:environment}, we measure all flow objects in log units: wages, revenues, and continuation values are expressed in logs, so that differences such as $P_t - w_t$ or $\Pi(z,m) - w(z,m)$ represent log residual payoffs.

\subsection{Continuum of matches and surplus dynamics}

We consider a continuum of potential worker-firm pairs indexed by $i \in I$, where $I = [0,1]$ is endowed with the Lebesgue measure. On a filtered probability space $(\Omega,\mathcal{F},(\mathcal{F}_t)_{t \geq 0},\mathbb{P})$ that satisfies the usual conditions, each pair $i$ is endowed with an idiosincratic standard Brownian motion $(B^i_t)_{t \geq 0}$. Brownian motions are independent across $i$. We deliberately abstract from permanent worker or firm heterogeneity; all heterogeneity in matches is generated by the stochastic evolution of $Z^i_t$. This keeps the connection to the benchmark model transparent and allows us to isolate the contribution of stochastic match dynamics to wage dispersion and mobility.

For an active match $i$, the relevant state variable is the surplus
\[
  X^i_t = Z^i_t \in \mathbb{R},
\]
which we interpret as the log productivity of the current match relative to the best available alternative, in line with the benchmark model where $Z^i_t = P^i_t - R^i_t$. In the mean field environment we keep $Z^i_t$ as the primitive state and allow more general drift and volatility than in \eqref{eq:surplus_process}.

Given a measurable flow of aggregate distributions $(m_t)_{t \geq 0}$, the surplus dynamics of an active match $i$ are
\begin{equation}
  dZ^i_t
  = \mu\bigl(Z^i_t,a^i_t,m_t\bigr)\, dt
    + \sigma\bigl(Z^i_t,m_t\bigr)\, dB^i_t,
  \qquad t \geq 0,
  \label{eq:Z_SDE_general}
\end{equation}
where $a^i_t$ is the worker's search intensity (defined below) and $m_t \in \mathcal{P}(\mathbb{R})$ is the cross-sectional distribution of active-match surpluses at time $t$.

The drift and volatility coefficients
\[
  \mu: \mathbb{R} \times A \times \mathcal{P}(\mathbb{R}) \to \mathbb{R},
  \qquad
  \sigma: \mathbb{R} \times \mathcal{P}(\mathbb{R}) \to (0,\infty)
\]
are measurable and satisfy the Lipschitz and linear-growth conditions stated in Assumption~\ref{ass:primitives}. Under these conditions, for any progressively measurable admissible control $a^i$ and any measurable flow $(m_t)$ there exists a unique strong solution to \eqref{eq:Z_SDE_general}. In the quantitative implementation we specialize to the constant-coefficient diffusion \eqref{eq:surplus_process} from Section~\ref{sec:environment}, calibrated to \citet{BuhaiTeulings2014}, but the existence and uniqueness results in Section~\ref{sec:existenceUniqueness} apply to the general specification \eqref{eq:Z_SDE_general}.

A match can exit the active state space because the worker chooses to separate, because the firm closes the job, or because the worker accepts a better outside offer. We model exit from the current match as an optimal stopping decision. Upon separation, the worker and the firm move to outside states with values that depend on the aggregate distribution $m_t$. For clarity of exposition we keep these outside states implicit in this section and represent them only through their value functions in the HJB equations. This representation preserves the mixed hitting-time interpretation of job durations from Section~\ref{sec:environment}, in the spirit of \citet{Abbring2012MixedHittingTime}, while embedding it in a general-equilibrium environment with endogenous outside options and wage setting.

\subsection{Controls and admissible strategies}

We now describe individual controls and admissibility requirements. At each time $t$ in an active match $i$:

\begin{itemize}
  \item \textbf{Search intensity.}
  The worker chooses a search intensity $a^i_t$ taking values in a compact set
  \[
    A \subset [0,\bar a],
  \]
  where $\bar a < \infty$. The process $(a^i_t)_{t \geq 0}$ is required to be progressively measurable with respect to the filtration generated by $(Z^i_s)_{0 \leq s \leq t}$ and the aggregate distibution $(m_s)_{0 \leq s \leq t}$. Job offers arrive according to a Poisson process with intensity
  \[
    \lambda\bigl(a^i_t,m_t\bigr),
  \]
  where $\lambda: A \times \mathcal{P}(\mathbb{R}) \to [0,\bar\lambda]$ is continuous in $(a,m)$ and satisfies the regularity conditions in Assumption~\ref{ass:primitives}. When an offer arrives, the worker observes its terms (for instance, a prospective surplus $\tilde z$ drawn from a distribution that depends on $m_t$) and decides whether to accept it; this choice is subsumed in the continuation value.

  \item \textbf{Separation.}
  The worker's separation decision is modeled as a stopping time $\tau^i$ with respect to the same filtration. At $\tau^i$ the current match ends and the worker receives an outside value $V^U(m_{\tau^i})$ that depends on the aggregate state. The firm receives an outside value $V^V(m_{\tau^i})$.

  \item \textbf{Wage policy.}
  The firm commits to a wage policy $w^i_t$ that is Markovian in the current match surplus and the aggregate distribution. We restrict attention to stationary feedback policies of the form
  \[
    w^i_t = w\bigl(Z^i_t,m_t\bigr),
  \]
  where $w: \mathbb{R} \times \mathcal{P}(\mathbb{R}) \to \mathbb{R}$ is Borel measurable and satisfies the regularity and growth conditions in Assumption~\ref{ass:primitives}. In equilibrium, all firms use the same wage policy, so we suppress the index $i$.
\end{itemize}

An admissible strategy for a worker in a match with current state $z$ and given flow $(m_t)$ is a pair $(a,\tau)$ where $(a_t)_{t \geq 0}$ is an $A$-valued progressively measurable process and $\tau$ is a stopping time such that the expected discounted payoff defined below is finite. An admissible strategy for a firm, given $(m_t)$ and workers' behavior, is a measurable wage policy $w(\cdot,m)$ in the admissible class $\mathcal{W}$ that satisfies the integrability and boundedness conditions in Assumption~\ref{ass:primitives}.

Information is local. Workers and firms observe their own match surplus $Z^i_t$, the terms of any outside offers that arrive, and the current aggregate distribution $m_t$. They do not observe the identities or individual states of other matches; only the distribution matters for their decisions.

These restrictions keep the micro structure close to canonical equilibrium search models. The Markov wage policy $w(Z^i_t,m_t)$ is the natural continuous-state analogue of wage posting in on-the-job search models: for a given aggregate environment, the wage paid in a match depends only on its current surplus rather than the whole history of shocks or past bargaining. It can be interpreted either as literal wage posting or as the reduced form of fast, stationary Nash bargaining over the surplus $Z^i_t$ in the spirit of the linear sharing rule \eqref{eq:wage_sharing_rule}. Allowing fully history-dependent contracts or explicit counter-offers to each outside offer would mainly compress the wage distribution by making current wages track outside options even more closely, at the cost of substantially more complex dynamics. Similarly, the arrival rate $\lambda(a,m)$ summarizes matching frictions in reduced form; in the qantitative section we choose its functional form so as to match job-finding and job-to-job transition rates, and one can think of it as being generated by a standard matching function between vacancies and search effort.

\subsection{Aggregate state and matching frictions}

Let $m_t$ denote the cross-sectional distribution of active-match surpluses at time $t$. Formally, for each Borel set $B \subset \mathbb{R}$,
\[
  m_t(B)
  = \mathbb{P}\bigl( Z^i_t \in B \mid i \text{ is active at time } t \bigr),
\]
for a worker-firm pair $i$ drawn uniformly from the set of active matches. Thus $m_t$ is a probability measure on $\mathbb{R}$.

The distribution $m_t$ affects individual decisions through several channels:

\begin{itemize}
  \item It determines the cross-sectional distribution of current wages, because every firm uses the policy $w(z,m_t)$ and $Z^i_t$ is distributed according to $m_t$.
  \item It determines the distribution of alternative offers faced by workers, since job offers are drawn from the population of potential employers with their current wage policies and surplus states.
  \item It may affect labor market tightness and hence the arrival rate of offers. In particular, the Poisson intensity $\lambda(a,m_t)$ can depend on $m_t$ in a way that captures, in reduced form, a matching function between vacancies and search effort in the spirit of \citet{MortensenPissarides1994JobCreationDestruction}.
\end{itemize}

Given Markovian feedback controls $a(z,m)$ and $w(z,m)$ and a separation rule characterized by a stopping region $\mathcal{S}(m) \subset \mathbb{R}$, the distribution $m_t$ evolves according to a Kolmogorov forward (Fokker-Planck) equation with killing on $\mathcal{S}(m_t)$ and inflows from new matches. If $m_t$ admits a density (also denoted $m_t$) with respect to the Lebesgue measure, we can write informally
\begin{equation}
\begin{aligned}
  \partial_t m_t(z)
  &= -\partial_z\!\bigl[\mu\bigl(z,a(z,m_t),m_t\bigr) \, m_t(z)\bigr]
     + \tfrac{1}{2}\partial_{zz}\!\bigl[\sigma^2\bigl(z,m_t\bigr) \, m_t(z)\bigr] \\
  &\quad - q\bigl(z,m_t\bigr) \, m_t(z)
     + \Gamma\bigl(z,m_t\bigr),
\end{aligned}
  \label{eq:FP_general}
\end{equation}
where $q(z,m_t)$ is the instantaneous separation rate implied by the stopping rule and $\Gamma(z,m_t)$ is a source term capturing entry of new matches at surplus $z$. In a stationary mean field equilibrium we focus on distributions $m$ that solve \eqref{eq:FP_general} with $\partial_t m_t = 0$ and satisfy the boundary conditions discussed in Appendix~\ref{appendix:numerical}. Assumption~\ref{ass:primitives} imposes standard Lipschitz, growth, and monotonicity conditions on the couplings $(\mu,\sigma,\lambda,\Gamma)$ that guarantee well-posedness of this forward equation in the Lasry-Lions sense and, together with the HJB equations below, will imply existence and uniqueness of a stationary equilibrium in Section~\ref{sec:existenceUniqueness}. Intuitively, these restrictions rule out feedback loops in which a shift in the distribution $m$ that improves outside options would induce wage-setting and search responses strong enough to sustain multiple self-fulfilling job ladders.

In the quantitative implementation we specialize the source term $\Gamma$ to a Poisson flow of new matches that enter at a surplus $z_0$, mirroring the benchmark model in Section~\ref{sec:environment}. This simplifying assumption keeps the initial conditions for new jobs transparent and is convenient for identification of the surplus diffusion; allowing a nondegenerate entry distibution would mainly add an additional layer of ex ante heterogeneity and does not materially affect our qualitative results on wage dispersion by tenure.

\subsection{Individual optimization problems}\label{subsec:individual_HJB}

We now describe the optimization problems of a worker and a firm in a representative match, given an aggregate environment. For the purposes of equilibrium characterization we focus on time-homogeneous environments where $m_t \equiv m$ and wage policies are stationary.

\paragraph{Worker problem.}

Fix a stationary distribution $m$ and a wage policy $w(\cdot,m)$. A worker in a match with initial surplus $z$ chooses a search intensity process $(a_t)_{t \geq 0}$ and a stopping time $\tau$ to maximize the expected discounted value of income:
\begin{equation}
\begin{aligned}
  V^W(z;m)
  &= \sup_{(a,\tau)}
    \mathbb{E}_z \Bigg[
      \int_0^\tau e^{-rt}
        \bigl( w(Z_t,m) - c(a_t) \bigr) \, dt
      + e^{-r\tau} V^U(m)
    \Bigg],
\end{aligned}
  \label{eq:worker_value}
\end{equation}
where $r>0$ is the common discount rate, $c(\cdot)$ is a convex search-cost function, $V^U(m)$ is the value of being outside the current match in an economy with aggregate state $m$, and $\mathbb{E}_z$ denotes expectation conditional on $Z_0 = z$. The outside value $V^U(m)$ incorporates both unemployment and the distribution of future job offers generated by $m$ and by other workers' search intensities.

Job offers arrive at intensity $\lambda(a_t,m)$ and, conditional on an arrival, the worker optimally decides whether to accept the offer. The expected gain from search at intensity $a$ can be summarized by an operator
\[
  G(z,V^W,m;a),
\]
which depends on the current match surplus $z$, on the value function $V^W$, on the aggregate distribution $m$, and on search intensity $a$. For example, if offers are drawn from a distribution $\nu(\cdot \mid m)$ over potential new match surpluses, a natural specification is
\[
  G(z,V^W,m;a)
  = \lambda(a,m) \int_{\mathbb{R}}
      \bigl[ V^W(\tilde z;m) - V^W(z;m) \bigr]^+ \,
      \nu(d\tilde z \mid m),
\]
so that $G$ is the standard expected capital gain from on-the-job search: at Poisson times the worker draws outside offers from the current cross-section of matches and climbs the job ladder by accepting only those offers that deliver a higher continuation value. For our existence and uniqueness results we only require the regularity properties stated in Assumption~\ref{ass:primitives}.

Let $\mathcal{L}^Z_{a,m}$ denote the infinitesimal generator of the surplus diffusion \eqref{eq:Z_SDE_general} under control $a$ and aggregate state $m$,
\[
  \mathcal{L}^Z_{a,m} \varphi(z)
  = \mu(z,a,m) \, \varphi'(z)
    + \tfrac{1}{2}\sigma^2(z,m) \, \varphi''(z),
  \qquad \varphi \in C^2_b(\mathbb{R}).
\]
The stationary HJB equation for the worker's value function is then an obstacle problem:
\begin{equation}
\begin{aligned}
  0
  &= \max\Bigl\{
      V^W(z;m) - V^U(m), \\
  &\qquad r V^W(z;m)
      - \sup_{a \in A}
        \bigl[
          w(z,m) - c(a)
          + \mathcal{L}^Z_{a,m} V^W(z;m)
          + G(z,V^W,m;a)
        \bigr]
    \Bigr\}.
\end{aligned}
  \label{eq:worker_HJB_obstacle}
\end{equation}
The obstacle $V^U(m)$ captures the option to separate. If $V^W(z;m) > V^U(m)$ it is optimal to continue the match; if $V^W(z;m) = V^U(m)$ the worker is indifferent between continuing and separating. In Section~\ref{sec:equilibriumCharacterization} we show that, under Assumption~\ref{ass:primitives}, the continuation region is an interval $(z^\ast(m),\infty)$ and that separation is governed by a free boundary $z^\ast(m)$. Setting $\lambda \equiv 0$ in \eqref{eq:worker_HJB_obstacle} collapses the problem to the single-match optimal stopping problem \eqref{eq:planner_problem} in the benchmark model, so that the general-equilibrium analysis nests the partial-equilibrium diffusion model of \citet{BuhaiTeulings2014}.

\paragraph{Firm problem.}

Given $m$ and workers' optimal search and stopping behavior, the firm in a match with initial surplus $z$ chooses a wage policy $w(\cdot,m)$ to maximize expected discounted profits:
\begin{equation}
\begin{aligned}
  V^F(z;m)
  &= \sup_{w \in \mathcal{W}}
    \mathbb{E}_z \Bigg[
      \int_0^\tau e^{-rt}
        \bigl( \Pi(Z_t,m) - w(Z_t,m) \bigr) \, dt
      + e^{-r\tau} V^V(m)
    \Bigg],
\end{aligned}
  \label{eq:firm_value}
\end{equation}

where $\Pi(z,m)$ is the \emph{log} flow revenue generated by a match of surplus $z$ when the aggregate state is $m$ (in the same log units as $w(z,m)$ and the processes $P_t$ and $w_t$ in Section~\ref{subsec:workers_firms}), $V^V(m)$ is the value of holding a vacant position in an economy with aggregate state $m$, and $\tau$ is the separation time implied by the worker's stopping rule and exogenous shocks. Thus $\Pi(z,m) - w(z,m)$ is the firm's log residual payoff.

The wage policy affects the worker's continuation value and hence endogenous separation and job-to-job mobility. We collect the contribution of this channel in an operator
\[
  H(z,V^F,m;w),
\]
which captures the effect of the wage schedule on the effective drift and killing rate of the match from the firm's perspective. A precise definition in terms of the joint dynamics of $(Z_t,1_{\{t<\tau\}})$ is given in Appendix~\ref{appendix:proofs}; for the main text it is enough to note that $H$ satisfies the regularity and monotonicity conditions in Assumption~\ref{ass:primitives}.

In a stationary environment, the firm's value function solves
\begin{equation}
\begin{aligned}
  r V^F(z;m)
  &= \sup_{w \in \mathcal{W}}
    \Bigl[
      \Pi(z,m) - w(z,m)
      + \mathcal{L}^Z_{a^\ast(z,m),m} V^F(z;m)
      - H\bigl(z,V^F,m;w\bigr)
    \Bigr],
\end{aligned}
  \label{eq:firm_HJB}
\end{equation}
where $a^\ast(z,m)$ is the worker's optimal search policy derived from \eqref{eq:worker_HJB_obstacle}. In the quantitative analysis we restrict attention to stationary Markov wage policies $w(z,m)$ that depend only on the current surplus and on the stationary distribution $m$. As discussed above, such policies can be interpreted either as posted wages or as the reduced form of fast wage renegotiation; making wages fully contingent on each outside offer would correspond to policies in which $w(z,m)$ tracks $V^U(m)$ very closely and would tend to compress wage dispersion.

\subsection{Stationary mean field equilibrium}\label{subsec:MFG_equilibrium_def}

The mean field framework closes the model by requiring consistency between individual optimization and the aggregate distribution. We focus on stationary mean field Nash equilibria.

\begin{definition}[Stationary mean field equilibrium]\label{def:MFG_equilibrium}
A stationary mean field equilibrium is a collection
\[
  \bigl(
    m,\,
    a^\ast(\cdot,m),\, \mathcal{S}^\ast(m),\,
    w^\ast(\cdot,m),\,
    V^W(\cdot;m),\, V^F(\cdot;m),\,
    V^U(m),\, V^V(m)
  \bigr)
\]
such that:
\begin{enumerate}
  \item \textbf{Worker optimality.}
  Given the stationary distribution $m$ and the wage policy $w^\ast(\cdot,m)$, the worker's search and stopping strategy $(a^\ast,\mathcal{S}^\ast)$ and the value function $V^W(\cdot;m)$ solve the worker's control-and-stopping problem~\eqref{eq:worker_value} and satisfy the HJB obstacle problem~\eqref{eq:worker_HJB_obstacle}. The stopping region $\mathcal{S}^\ast(m)$ is the set of surplus levels $z$ for which separation is optimal.

  \item \textbf{Firm optimality.}
  Given $m$ and workers' optimal strategy $(a^\ast,\mathcal{S}^\ast)$, the wage policy $w^\ast(\cdot,m)$ and the value function $V^F(\cdot;m)$ solve the firm's problem~\eqref{eq:firm_value} and satisfy the HJB equation~\eqref{eq:firm_HJB}.

  \item \textbf{Consistency of outside values.}
  The outside values $V^U(m)$ and $V^V(m)$ equal the expected discounted values obtained by workers and firms who are not currently matched but participate in the same stationary environment induced by $(m,a^\ast,w^\ast)$.

  \item \textbf{Stationarity of the distribution.}
  The distribution $m$ is invariant under the surplus dynamics~\eqref{eq:Z_SDE_general} induced by the controls $(a^\ast,w^\ast)$ and the separation rule $\mathcal{S}^\ast(m)$. Equivalently, $m$ solves the stationary Kolmogorov equation associated with~\eqref{eq:FP_general} with $\partial_t m_t = 0$, together with the boundary conditions implied by $\mathcal{S}^\ast(m)$ and the entry term $\Gamma(\cdot,m)$.
\end{enumerate}
\end{definition}

Conditions (i) and (ii) ensure that workers and firms are individually optimal given the conjectured stationary environment; condition (iii) guarantees that the outside values used in the HJB equations coincide with those generated by the same environment; and condition (iv) imposes the fixed-point requirement that the conjectured distribution $m$ be invariant under the induced dynamics. Thus Definition~\ref{def:MFG_equilibrium} is the stationary counterpart of the dynamic mean field Nash equilibrium of \citet{LasryLions2007} and \citet{CarmonaDelarue2018BookI}, and can be interpreted as the steady state of a large random-matching economy in which each match is small.

In Section~\ref{sec:equilibriumCharacterization} we introduce the dynamic mean field Nash equilibrium formally (Definition~\ref{def:dynamic_MFNE}), relate it to Definition~\ref{def:MFG_equilibrium}, characterize the free boundary $z^\ast(m)$ that governs separation, and describe the equilibrium wage distribution and the joint stationary distribution of match surpluses and wages implied by the mean field game.

\section{Mean Field Equilibrium and Separation Rules}\label{sec:equilibriumCharacterization}

This section formalizes the mean field Nash equilibrium of the dynamic game in Section~\ref{sec:MFG}, specializes to stationary environments, and spells out the associated Hamilton-Jacobi-Bellman and Kolmogorov equations. We then characterize the worker's separation behaviour as a free boundary in the surplus state and define equilibrium wage dispersion as a functional of the stationary distribution and the wage policy. These objects form the theoretical backbone of the quantitative analysis in Sections~\ref{sec:quantitative} and~\ref{sec:results}.

Throughout, we write $\mathcal{P}(\mathbb{R})$ for the set of Borel probability measures on $\mathbb{R}$ equipped with the topology of weak convergence, and $\mathcal{P}_2(\mathbb{R})$ for the subset of measures with finite second moment. The existence and uniqueness analysis in Section~\ref{sec:existenceUniqueness} is conducted on $\mathcal{P}_2(\mathbb{R})$, but in this section we write $m$ for generic elements of $\mathcal{P}(\mathbb{R})$ whenever no confusion can arise.

As also stated earlier, a mean field equilibrium can be actually read as the rational-expectations steady state of a very large random-matching labour market. Each infinitesimal worker-firm match takes the cross-sectional distribution $m$ of surplus as given when choosing search effort and wage policies, while in equilibrium that same distribution must be reproduced by the stochastic evolution of match-specific surplus under those policies.

\subsection{Dynamic mean field Nash equilibrium}

For completeness, we start from a general, possible time-dependent, notion of mean field Nash equilibrium in the sense of \citet{LasryLions2007} and \citet{CarmonaDelarue2018BookI}. The stationary concept used in the rest of the paper is obtained as a special case.

Let $t \mapsto m_t \in \mathcal{P}(\mathbb{R})$ be a measurable flow of cross-sectional distributions of match surpluses, and let
\[
  w_t \colon \mathbb{R} \to \mathbb{R},
  \qquad
  a_t \colon \mathbb{R} \to A
\]
be time-dependent Markov feedback policies for wages and search effort. Conditional on $(m_t)$ and $(a_t,w_t)$, an individual surplus $Z_t$ follows the controlled diffusion in~\eqref{eq:Z_SDE_general}, with separation and re-entry as described in Section~\ref{sec:MFG}.

\begin{definition}[Dynamic mean field Nash equilibrium]\label{def:dynamic_MFNE}
A \emph{dynamic mean field Nash equilibrium} is a collection
\[
  \bigl(
    (m_t)_{t \geq 0},
    a_t^\ast(\cdot), w_t^\ast(\cdot),
    V^W(t,\cdot), V^F(t,\cdot),
    V^U(t), V^V(t)
  \bigr)
\]
such that:
\begin{enumerate}
  \item \textbf{Worker optimality.}
  For each initial time $t_0$ and surplus $z$, given the flow $(m_t)_{t \geq t_0}$ and the wage policy $(w_t^\ast)_{t \geq t_0}$, the worker's search-and-stopping problem with state $(t_0,z)$ has value $V^W(t_0,z)$. It is solved by admissible controls $(a_s^\ast,\tau^\ast)$ that satisfy the time-dependent HJB obstacle problem associated with~\eqref{eq:Z_SDE_general}, with running payoff $w_s^\ast(Z_s) - c(a_s)$ and outside value $V^U(s)$.

  \item \textbf{Firm optimality.}
  For each $(t_0,z)$, given $(m_t)$ and the worker's optimal strategy, the firm's problem in a match with state $(t_0,z)$ has value $V^F(t_0,z)$. It is solved by the wage policy $(w_t^\ast)_{t \geq t_0}$, which satisfies the time-dependent HJB equation with running payoff $\Pi(Z_t,m_t) - w_t^\ast(Z_t)$ and outside value $V^V(t)$.

  \item \textbf{Consistency of outside values.}
  For each $t$, the outside values $V^U(t)$ and $V^V(t)$ equal the expected discounted values obtained by workers and firms who are not currently matched but participate in the same environment induced by $\bigl((m_s)_{s \geq t},a_s^\ast,w_s^\ast\bigr)$.

  \item \textbf{Law of motion for $(m_t)$.}
  The flow $(m_t)_{t \geq 0}$ is the law of $Z^i_t$ when $i$ is drawn uniformly from the set of active matches. It satisfies the Kolmogorov forward equation associated with the controlled diffusion~\eqref{eq:Z_SDE_general}, the optimal controls $a_t^\ast(\cdot)$ and $w_t^\ast(\cdot)$, and the induced separation and entry rates.
\end{enumerate}
\end{definition}

In simple words, a dynamic mean field Nash equilibrium is a path of distributions $(m_t)$ and associated Markov policies such that (i) workers and firms choose search, stopping, and wage policies optimally given $(m_t)$, and (ii) $(m_t)$ coincides with the cross-sectional law generated by those optimal policies. Items (1)-(4) are the usual best-response and consistency conditions of a competitive rational-expectations equilibrium, written in continuous time and with the cross-sectional distribution as the aggregate state variable.

We restrict attention to Markovian feedback strategies that depend only on the current time, the current surplus, and the current distribution $m_t$. In the one-dimensional, time-homogeneous environment considered here, this restriction is without loss for the equilibria of interest; see \citet{CardaliaguetPorretta2020IntroMFG} and \citet{CarmonaDelarue2018BookI}. It reflects that, in a large anonymous labour market, history beyond the current surplus and the aggregate state does not convey additional payoff-relevant information.

\subsection{Stationary equilibrium}\label{subsec:stationary_equilibrium}

Our focus is on stationary mean field equilibria. This is empirically natural for the medium-run wage and mobility patterns we target and theoretically convenient, because stationarity yields a sharp characterization of the separation rule and a tractable numerical system.

A stationary mean field equilibrium is a dynamic equilibrium in the sense of Definition~\ref{def:dynamic_MFNE} with the following additional properties:
\begin{itemize}
  \item The aggregate distribution is time invariant:
  \[
    m_t \equiv m \in \mathcal{P}_2(\mathbb{R})
    \quad \text{for all } t \geq 0.
  \]
  \item Worker and firm policies are time-homogeneous and Markovian in the surplus and in $m$:
  \[
    a_t^\ast(z) \equiv a^\ast(z,m),
    \qquad
    w_t^\ast(z) \equiv w^\ast(z,m).
  \]
  \item The value functions do not depend explicitly on time:
  \[
    V^W(t,z) \equiv V^W(z;m), \quad
    V^F(t,z) \equiv V^F(z;m), \quad
    V^U(t) \equiv V^U(m), \quad
    V^V(t) \equiv V^V(m).
  \]
\end{itemize}

Thus, in a stationary equilibrium each match faces a time-invariant environment summarized by $m$, chooses time-invariant Markov policies, and the cross-sectional distribution generated by these policies is itself constant over time. This is precisely the notion formalized in Definition~\ref{def:MFG_equilibrium} in Section~\ref{subsec:MFG_equilibrium_def}. In a stationary equilibrium the worker and firm problems reduce to the time-independent HJB equations~\eqref{eq:worker_HJB_obstacle} and~\eqref{eq:firm_HJB}, and the distribution $m$ solves the stationary Kolmogorov equation associated with~\eqref{eq:FP_general} with $\partial_t m_t = 0$.

In particular, for a given stationary distribution $m$ and policies $a^\ast(\cdot,m)$ and $w^\ast(\cdot,m)$, the worker's continuation value $V^W(\cdot;m)$ solves the HJB obstacle problem
\begin{equation}
\begin{aligned}
  0 
  = \max \Bigl\{
    & V^W(z;m) - V^U(m), \\
    & r V^W(z;m)
      - \sup_{a \in A} \Bigl[
          w^\ast(z,m) - c(a)
          + \mathcal{L}^Z_{a,m} V^W(z;m)
          + G\bigl(z,V^W,m;a\bigr)
        \Bigr]
    \Bigr\},
\end{aligned}
\label{eq:worker_HJB_stationary}
\end{equation}
where $\mathcal{L}^Z_{a,m}$ is the infinitesimal generator of~\eqref{eq:Z_SDE_general} under control $a$ and aggregate state $m$, as defined in Section~\ref{subsec:individual_HJB}, and $G$ captures the expected gains from on-the-job search. The first term inside the maximum compares the continuation value in the current job to the outside option $V^U(m)$; the second term describes the usual flow-payoff plus discounted-drift trade-off when the worker chooses to continue searching on the job rather than separating immediately.

The firm's stationary value function $V^F(\cdot;m)$ solves
\begin{equation}
  r V^F(z;m)
  = \sup_{w \in \mathcal{W}}
      \Bigl[
        \Pi(z,m) - w(z,m)
        + \mathcal{L}^Z_{a^\ast(z,m),m} V^F(z;m)
        - H\bigl(z,V^F,m;w\bigr)
      \Bigr],
  \label{eq:firm_HJB_stationary}
\end{equation}
where $H$ summarizes how the wage policy affects the worker's separation behaviour and thus effective match duration, as in Section~\ref{subsec:individual_HJB} and Appendix~\ref{appendix:proofs}. The firm trades off paying a higher wage today, which reduces the current surplus $\Pi-w$ but tends to prolong the match and raise future profits through $H$, against the option of keeping wages low and letting low-surplus matches separate more quickly.

Given these policies, the stationary distribution $m$ solves the Kolmogorov equation
\begin{equation}
\begin{aligned}
  0
  &= -\partial_z\!\bigl[
       \mu\bigl(z,a^\ast(z,m),m\bigr)\, m(z)
     \bigr]
     + \tfrac{1}{2}\,\partial_{zz}\!\bigl[
       \sigma^2(z,m)\, m(z)
     \bigr] \\[0.5em]
  &\quad - q(z,m)\, m(z)
     + \Gamma(z,m),
\end{aligned}
\label{eq:FP_stationary_equilibrium}
\end{equation}
where $q(z,m)$ is the separation intensity implied by the optimal stopping rule and $\Gamma(z,m)$ is the entry term. The first line captures drift and diffusion of surplus among ongoing matches, the term $q(z,m)m(z)$ removes mass because of separation, and $\Gamma(z,m)$ injects new matches at different surplus levels. The coupling is nonlocal in the sense of \citet{LasryLions2007}: the coefficients $\mu$, $\sigma$, $w$, $\Pi$, $\lambda$, and $\Gamma$ depend on $m$ through aggregate statistics (such as mean surplus, employment rates, and wage moments) rather than pointwise evaluations of $m$.

Equations~\eqref{eq:worker_HJB_stationary}-\eqref{eq:FP_stationary_equilibrium} constitute the canonical HJB-Kolmogorov system that underpins the existence and uniqueness analysis in Section~\ref{sec:existenceUniqueness}, in line with the framework of \citet{CardaliaguetPorretta2020IntroMFG} and \citet{CarmonaDelarue2018BookI}. Readers unfamiliar with mean field games can therefore think of a stationary equilibrium as a steady state of this coupled ``backward'' HJB (optimal behaviour) and ``forward'' Kolmogorov (distribution dynamics) system.

\subsection{Free-boundary separation rule}\label{subsec:free_boundary}

A distinctive feature of our environment is that separation is an optimal stopping decision at the match level. In a stationary mean field equilibrium this leads to a free-boundary characterization of the worker's separation rule in terms of the surplus state, in the spirit of \citet{Jovanovic1979JobMatching} or \citet{BuhaiTeulings2014}.

For a fixed stationary distribution $m$ and wage policy $w^\ast(\cdot,m)$, define the surplus of staying in the match relative to immediate separation as
\[
  \Delta(z;m) := V^W(z;m) - V^U(m).
\]
Intuitively, $\Delta(z;m)$ measures the gain from keeping the curent job when the match-specific surplus is $z$, taking as given the stationary outside option generated by $m$. Under the monotonicity and single-crossing conditions on the primitives stated in Sections~\ref{sec:MFG} and~\ref{sec:existenceUniqueness}, $\Delta(\cdot;m)$ is continuous and strictly increasing in $z$: higher-surplus matches yield higher current wages and better continuation values, so the benefit from staying rises with $z$. For sufficiently low surplus levels, continuation is dominated by the outside option, while for sufficiently high surplus levels the continuation value strictly exceeds the outside value.

\begin{proposition}[Free-boundary separation rule]\label{prop:free_boundary}
Fix a stationary distribution $m$ and a wage policy $w^\ast(\cdot,m)$, and suppose Assumption~\ref{ass:primitives} holds. Then there exists a unique threshold $z^\ast(m) \in \mathbb{R}$ such that:
\begin{enumerate}
  \item The stopping region is an interval of the form
  \[
    \mathcal{S}(m) = (-\infty,z^\ast(m)],
  \]
  and the continuation region is $(z^\ast(m),\infty)$.

  \item The value function satisfies the value-matching condition
  \[
    V^W\bigl(z^\ast(m);m\bigr) = V^U(m)
  \]
  and the smooth-fit condition
  \[
    \partial_z V^W\bigl(z^\ast(m);m\bigr) = 0.
  \]

  \item On $(z^\ast(m),\infty)$, $V^W(\cdot;m)$ solves the linear elliptic HJB equation obtained from~\eqref{eq:worker_HJB_stationary} with the stopping constraint inactive, while on $(-\infty,z^\ast(m)]$ the value is constant and equal to $V^U(m)$.
\end{enumerate}
\end{proposition}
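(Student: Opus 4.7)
The plan is to reduce the worker's problem, for fixed stationary $m$ and wage policy $w^{\ast}(\cdot,m)$, to a standard infinite-horizon optimal stopping problem for the scalar diffusion $Z_t$. Once the interior supremum in~\eqref{eq:worker_HJB_stationary} is resolved by a measurable selector $a^{\ast}(z,m)$ (existence of such a selector follows from compactness of $A$, convexity of $c$, and a Berge-style maximum theorem), the worker's value can be written as
\[
  V^W(z;m) = \sup_{\tau} \E_z\Bigl[\int_0^\tau e^{-rt}\tilde S(Z_t;m)\,dt + e^{-r\tau}V^U(m)\Bigr],
\]
with running payoff $\tilde S(z;m) := w^{\ast}(z,m) - c(a^{\ast}(z,m)) + G(z,V^W,m;a^{\ast}(z,m))$. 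Assumption~\ref{ass:primitives} delivers Lipschitz drift and uniformly elliptic volatility together with the monotonicity-in-$z$ properties of $w^{\ast}$, $\Pi$, and $\lambda$ used below.

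First I would establish that $\Delta(z;m) := V^W(z;m) - V^U(m)$ is continuous and strictly increasing in $z$. Continuity follows from standard regularity for optimal stopping of uniformly elliptic scalar diffusions with Lipschitz coefficients. Strict monotonicity comes from a synchronous coupling: start two copies of the controlled diffusion from $z_1 < z_2$ driven by the same Brownian motion so that $Z^{(1)}_t \le Z^{(2)}_t$ almost surely, apply the suboptimal stopping rule $\tau^{(1)}$ in the problem initiated at $z_2$, and use monotonicity of $w^{\ast}(\cdot,m)$ in $z$ to conclude $V^W(z_2;m) > V^W(z_1;m)$ whenever $z_1$ lies in the continuation region. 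Under the assumed sign properties of $\tilde S - rV^U(m)$ at the two ends of the state space (strictly negative as $z\to-\infty$ and strictly positive as $z\to+\infty$), $\Delta(\cdot;m)$ crosses zero exactly once. Defining $z^{\ast}(m) := \inf\{z \in \R : V^W(z;m) > V^U(m)\}$ then yields part~(1), and value matching in part~(2) follows from continuity of $V^W(\cdot;m)$ at the threshold.

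Part~(3) follows from a standard verification argument: on the open continuation region $(z^{\ast}(m),\infty)$, interior regularity for the associated linear elliptic operator (uniform ellipticity plus Lipschitz coefficients) gives $V^W(\cdot;m) \in C^2$, so $V^W$ satisfies the linear second-order ODE obtained from~\eqref{eq:worker_HJB_stationary} with the obstacle inactive, while on $(-\infty,z^{\ast}(m)]$ the value equals $V^U(m)$ by definition of $z^{\ast}(m)$ and the first part. Uniqueness of the threshold then falls out of the strict monotonicity of $\Delta(\cdot;m)$ established above.

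The main obstacle is the smooth-fit condition in part~(2), since this is the step that cannot be inferred from monotonicity and continuity alone. I would prove it by contradiction, using the one-dimensional diffusion structure. If $\partial_z V^W(z^{\ast}(m)^+;m) > 0$, one can pair a Taylor expansion of $V^W$ along a small excursion of $Z_t$ started at $z^{\ast}(m)$ with the fact that a Brownian trajectory oscillates infinitely often across any given level in every neighbourhood of zero; this shows that delaying separation by an infinitesimal amount yields a strictly positive gain relative to $V^U(m)$, contradicting optimality at the boundary. If $\partial_z V^W(z^{\ast}(m)^+;m) < 0$, then $V^W$ would drop strictly below $V^U(m)$ just above the threshold, contradicting the definition of the continuation region and the choice of $z^{\ast}(m)$. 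The cleanest rigorous route is analytic: appeal to the $C^1$ regularity theorem for the obstacle problem associated with a uniformly elliptic linear operator and a $C^1$ obstacle, here the constant $V^U(m)$, which delivers both value matching and smooth fit in one stroke. The nonlocal term $G(z,V^W,m;a^{\ast}(z,m))$ only adds a continuous, bounded perturbation to the linear operator under Assumption~\ref{ass:primitives} and does not affect the regularity argument.
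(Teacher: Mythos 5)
Your proposal follows essentially the same route as the paper's Appendix proof: monotonicity of $\Delta(z;m)=V^W(z;m)-V^U(m)$ via a comparison/coupling argument using the drift monotonicity and wage monotonicity in $z$, continuity plus tail behaviour to get the single-crossing threshold $z^\ast(m)$ and value matching, elliptic regularity on the continuation region for part~(3), and smooth fit from standard one-dimensional optimal-stopping (equivalently obstacle-problem) arguments, which the paper simply cites (Peskir--Shiryaev, Carmona--Delarue) where you sketch the probabilistic contradiction explicitly. The argument is correct and no genuinely different decomposition or lemma is involved.
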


The proof, given in Appendix~\ref{appendix:proofs}, exploits the one-dimensional diffusion structure, the regularity of $\mu$ and $\sigma$, and the convexity properties implied by Assumption~\ref{ass:primitives}. In particular, smooth fit follows from standard arguments for optimal stopping of one-dimensional diffusions; see, for example, \citet{PeskirShiryaev2006OptimalStopping}. Proposition~\ref{prop:free_boundary} says that in any stationay environment with given $m$ and $w^\ast$, there is a unique surplus level below which jobs are destroyed and above which they are retained, generalizing the partial-equilibrium efficient separation rule in \citet{BuhaiTeulings2014} to a setting with endogenous outside options and on-the-job search.

Separation in the Kolmogorov equation~\eqref{eq:FP_stationary_equilibrium} can be represented in two equivalent ways:
\begin{itemize}
  \item In the pure hitting-time benchmark that mirrors \citet{BuhaiTeulings2014}, separation occurs only when $Z_t$ hits $z^\ast(m)$ from above. In this case we set $q(z,m) \equiv 0$ and impose an absorbing boundary at $z^\ast(m)$, so that separation is captured by the probability flux through the boundary. The associated hitting-time distribution generates the inverse-Gaussian tenure hazards discussed in Sections~\ref{sec:environment} and~\ref{sec:quantitative}.

  \item In more general specifications, $q(z,m)$ can capture additional Poisson separation inside the stopping region, for example due to exogenous destruction shocks. Then $q(z,m) > 0$ for $z \leq z^\ast(m)$, while separation at the boundary is still governed by the absorbing condition at $z^\ast(m)$.
\end{itemize}

In the baseline calibration, separation is driven entirely by the hitting time of $Z_t$ at the endogenously determined threshold $z^\ast(m)$, which is the general-equilibrium analogue of the stopping rule in the benchmark model of \citet{BuhaiTeulings2014}. Section~\ref{sec:quantitative} shows how the location of this free boundary responds to changes in primitives such as surplus volatility and workers' bargaining strength.

\subsection{Equilibrium wage distribution and dispersion}\label{subsec:wage_dispersion}

A stationary mean field equilibrium $(m,a^\ast,w^\ast)$ delivers an endogenous joint distribution of match surpluses and wages. The cross-sectional distribution of wages is the pushforward of $m$ through the wage policy:
\[
  \mu_w(\cdot;m)
  := m \circ w^\ast(\cdot,m)^{-1}.
\]
Thus, for any Borel set $B \subset \mathbb{R}$,
\[
  \mu_w(B;m)
  = m\bigl( \{ z \in \mathbb{R} \colon w^\ast(z,m) \in B \} \bigr).
\]

We measure equilibrium wage dispersion in the stationary economy by the variance of wages under $\mu_w(\cdot;m)$, namely
\begin{equation}
  D(m,w^\ast)
  := \operatorname{Var}_{\mu_w(\cdot;m)}(w)
  = \int_{\mathbb{R}}
      \bigl( w^\ast(z,m) - \bar w(m) \bigr)^2
    \, m(dz),
  \label{eq:dispersion_def}
\end{equation}
where $\bar w(m)$ is the mean wage,
\[
  \bar w(m)
  := \int_{\mathbb{R}} w^\ast(z,m)\, m(dz).
\]

Expression~\eqref{eq:dispersion_def} is the structural object we decompose quantitatively in Section~\ref{sec:results}. It is entirely determined by the stationary mean field equilibrium: both $m$ and $w^\ast$ are equilibrium objects that respond endogenously to search behaviour, the job ladder, and firms' wage policies. Changes in primitives affect dispersion through two main channels:
\begin{enumerate}
  \item changes in the stationary surplus distribution $m$, driven by search, separation, and re-entry, as captured by the Kolmogorov equation~\eqref{eq:FP_stationary_equilibrium};

  \item changes in the wage policy $w^\ast(\cdot,m)$, driven by the firm's HJB problem~\eqref{eq:firm_HJB_stationary} and its interaction with workers' outside options and search intensities.
\end{enumerate}

The existence and uniqueness results in Section~\ref{sec:existenceUniqueness} and Appendix~\ref{appendix:proofs} ensure that, under our standing assumptions, $m$ and $w^\ast$ are well defined and depend continuously on the primitives. In particular, the Lasry-Lions monotonicity condition rules out multiple stationary job ladders that would differ only through self-fulfilling beliefs about the cross-sectional distribution of surplus and wages. This underpins the comparative statics for $z^\ast(m)$ and for the dispersion measure $D(m,w^\ast)$ reported in Sections~\ref{sec:quantitative} and~\ref{sec:results}, and guarantees that our variance decompositions and policy experiments are conducted with respect to a unique stationary equilibrium. NB. In later sections, we shall sometimes suppress the explicit dependence on $m$ and write $w(z)$ or $D$ for these equilibrium objects evaluated at the calibrated stationary distribution $m^\ast$.

\section{Existence and Uniqueness of Stationary Equilibrium}\label{sec:existenceUniqueness}

This section establishes conditions under which a stationary mean field equilibrium, as defined in Section~\ref{subsec:MFG_equilibrium_def}, exists and is unique. We also show how these conditions deliver signed comparative statics for the equilibrium separation threshold and for wage dispersion. Complete proofs and functional-analytic details are collected in Appendix~\ref{appendix:proofs}.

Throughout, equilibrium objects are functions of the current match surplus $z \in \R$ and the cross-sectional distribution $m$ of active matches. Remark that when there is no danger of misinterpretation, we suppress the $m$-argument for brevity, but it is always understood that $V^W$, $V^F$, $w$, and the free boundary $z^\ast$ depend on the stationary distribution.

\subsection{Assumptions on primitives}\label{subsec:assumptions_existence}

We work on the space $\cP_2(\R)$ of probability measures on $\R$ with finite second moment. The individual state is the surplus $z \in \R$ and the aggregate state is a stationary distribution $m \in \cP_2(\R)$.

We maintain the structure of surplus dynamics, controls, and payoffs introduced in Sections~\ref{sec:environment} and~\ref{sec:MFG}. The next assumption gathers standard regularity, growth, and compactness conditions ensuring that individual control problems are well posed and that, for a fixed distribution $m$, the induced surplus diffusion with killing and re-entry admits a unique invariant law.

\begin{assumption}[Primitives]\label{ass:primitives}
The following conditions hold.
\begin{enumerate}
  \item \textbf{Surplus dynamics.}
  The drift and volatility functions
  \[
    \mu : \R \times A \times \cP_2(\R) \to \R,
    \qquad
    \sigma : \R \times \cP_2(\R) \to (0,\infty)
  \]
  are Borel measurable, jointly continuous, and satisfy, for some constant $L > 0$
  and all $z,z' \in \R$, $a,a' \in A$, and $m,m' \in \cP_2(\R)$,
  \[
    \bigl|\mu(z,a,m) - \mu(z',a',m')\bigr|
      + \bigl|\sigma(z,m) - \sigma(z',m')\bigr|
      \leq L\bigl( |z - z'| + |a - a'| + W_2(m,m') \bigr),
  \]
  and
  \[
    |\mu(z,a,m)| + |\sigma(z,m)|
      \leq L \bigl(1 + |z|\bigr)
  \]
  uniformly in $(a,m)$. In addition, the drift is weakly increasing in the surplus state:
  \[
    z_1 \le z_2
    \;\Longrightarrow\;
    \mu(z_1,a,m) \le \mu(z_2,a,m)
    \quad\text{for all } (a,m).
  \]
  The volatility is uniformly elliptic:
  \[
    0 < \underline{\sigma} \leq \sigma(z,m) \leq \overline{\sigma} < \infty
    \quad \text{for all } (z,m).
  \]

  \item \textbf{Payoffs and search costs.}
  The wage function $w(z,m)$, the firm's revenue function $\Pi(z,m)$, and the outside values $V^U(m)$ and $V^V(m)$ are continuous and bounded from below. There exists $L > 0$ (not necessarily the same as above) such that, for all $z,z' \in \R$ and $m,m' \in \cP_2(\R)$,
  \[
    |w(z,m) - w(z',m')|
      + |\Pi(z,m) - \Pi(z',m')|
      \leq L \bigl( |z - z'| + W_2(m,m') \bigr).
  \]
  The search cost $c: A \to \R_+$ is continuously differentiable and strictly convex, with $c(0) = 0$, $c'(0) = 0$, and $c'(a) \to \infty$ as $a$ approaches the upper endpoint of $A$.

  \item \textbf{Offer arrival and gains from search.}
  The arrival intensity $\lambda(a,m)$ is continuous on $A \times \cP_2(\R)$, strictly increasing in $a$, and satisfies
  \[
    0 < \underline{\lambda} \leq \lambda(a,m) \leq \overline{\lambda} < \infty
    \quad \text{for all } (a,m).
  \]
  The search-gain operator $G(z,V^W,m;a)$ appearing in~\eqref{eq:worker_HJB_obstacle} is continuous and Lipschitz in $(z,V^W,m)$ uniformly in $a \in A$ (with respect to the sup norm on $V^W$). For each $(z,m)$ the set of maximizers of the HJB operator in~\eqref{eq:worker_HJB_obstacle} is nonempty, convex, and compact.

  \item \textbf{Boundary behaviour and re-entry.}
  When a match separates at the free boundary $z^\ast(m)$, the worker and firm move to outside states with finite values $V^U(m)$ and $V^V(m)$ that depend continuously on $m$ in the $W_2$ topology. New matches are created according to a re-entry distribution
  \[
    \nu(m) \in \cP_2(\R),
  \]
  which depends continuously on $m$ and has uniformly bounded second moment.
\end{enumerate}
\end{assumption}

Assumption~\ref{ass:primitives} specializes the usual standing assumptions in the continuous-time MFG literature\footnote{See, for example, \citet{LasryLions2007}, \citet{CarmonaDelarue2018BookI}, and \citet{CardaliaguetPorretta2020IntroMFG}.} to our one-dimensional surplus state with killing at a free boundary and regeneration of new matches. It ensures that, for any fixed stationary distribution $m$, the worker and firm problems are well posed and the induced surplus diffusion with killing and re-entry is ergodic with a unique invariant law.\footnote{For one-dimensional diffusions with killing and re-entry, existence, uniqueness, and continuity of invariant measures follow from standard arguments as in \citet{StroockVaradhan2006MultidimensionalDiffusions}.}

In simple words, part~(1) keeps match-surplus dynamics well behaved: shocks have finite variance and depend smoothly on the current surplus, search intensity, and aggregate distribution, without explosive drifts. Parts~(2) and~(3) require that wages, profits, and offer arrival rates respond smoothly to $z$ and to aggregate conditions $m$, and that the search problem has an interior solution: higher effort raises the arrival rate but faces increasing convex costs. Part~(4) ensures that separated matches re-enter the economy with a well-defined distribution of starting surpluses and finite second moments, so that the aggregate state $m$ is stable.

To study uniqueness, we impose a Lasry-Lions-type monotonicity condition on how the aggregate distribution $m$ enters payoffs and matching.

\begin{assumption}[Lasry and Lions monotonicity]\label{ass:monotonicity}
For all $m^1,m^2 \in \cP_2(\R)$,
\[
  \int_{\R}
    \bigl( w(z,m^1) - w(z,m^2) \bigr)
    \bigl( m^1 - m^2 \bigr)(dz)
  \leq 0,
\]
and
\[
  \int_{\R}
    \bigl( \Pi(z,m^1) - \Pi(z,m^2) \bigr)
    \bigl( m^1 - m^2 \bigr)(dz)
  \leq 0,
\]
with equality in both expressions if and only if $m^1 = m^2$. Moreover, the dependence of the arrival intensity $\lambda(a,m)$ and the re-entry distribution $\nu(m)$ on $m$ is monotone in the same sense.

For the uniqueness result in Theorem~\ref{thm:uniqueness} we additionally assume that
the surplus drift and volatility do not depend on the aggregate distribution:
\[
  \mu(z,a,m)\equiv\mu(z,a)
  \quad\text{and}\quad
  \sigma(z,m)\equiv\sigma(z)
  \qquad\text{for all }(z,a,m).
\]
Thus all $m$–dependence in the worker Hamiltonian comes from the payoff and matching
terms $(w,\Pi,\lambda,\nu,V^U)$, which satisfy the Lasry–Lions monotonicity
condition above. In particular, the calibrated specification in Section~\ref{sec:quantitative}
satisfies Assumption~\ref{ass:monotonicity}.
\end{assumption}

Assumption~\ref{ass:monotonicity} is the continuous-state analogue of the Lasry-Lions monotonicity condition used to obtain uniqueness in MFGs. Intuitively, it requires that shifting the cross-sectional distribution $m$ in a way that raises surplus on average does not create incentives that feed back in the same direction and generate multiple self-fufilling job ladders. In our calibration $w(z,m)$ and $\Pi(z,m)$ depend on $m$ only through a small set of aggregate wage and employment statistics, so these integral inequalities can be checked directly. What this condition does is to rule out ``positive feedback loops'' in which a high-surplus distribution pushes firms to post even higher wages simply because they expect others to do so, thereby sustaining multiple stationary wage ladders.

\subsection{Existence of stationary equilibrium}\label{subsec:existence}

We now establish existence of a stationary equilibrium as defined in Section~\ref{subsec:MFG_equilibrium_def}. The proof follows the standard fixed-point approach in the MFG literature, adapted to our free-boundary structure, and is given in Appendix~\ref{appendix:proofs}.

\begin{theorem}[Existence of stationary mean field equilibrium]\label{thm:existence}
Suppose Assumption~\ref{ass:primitives} holds. Then there exists at least one stationary mean field equilibrium
\[
  \Bigl(
    m^\ast,
    a^\ast(\cdot,m^\ast), \mathcal{S}^\ast(m^\ast),
    w^\ast(\cdot,m^\ast),
    V^W(\cdot;m^\ast), V^F(\cdot;m^\ast),
    V^U(m^\ast), V^V(m^\ast)
  \Bigr)
\]
in the sense of Section~\ref{subsec:MFG_equilibrium_def}. The optimal separation rule of workers is characterized by a free boundary $z^\ast(m^\ast)$, as in Proposition~\ref{prop:free_boundary}.
\end{theorem}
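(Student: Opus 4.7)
The plan is a Schauder–Kakutani fixed-point argument in the spirit of the standard stationary-MFG existence theory \citep{CardaliaguetPorretta2020IntroMFG,CarmonaDelarue2018BookI}, adapted to our free-boundary and re-entry structure. I define a best-response operator $\Phi$ on a suitably compact, convex subset $\mathcal{K} \subset \mathcal{P}_2(\mathbb{R})$ by a three-step construction: given a candidate stationary distribution $m \in \mathcal{K}$, (a) solve the worker's HJB obstacle problem \eqref{eq:worker_HJB_stationary} and the firm's HJB \eqref{eq:firm_HJB_stationary} for $V^W(\cdot;m)$, $V^F(\cdot;m)$, $a^\ast(\cdot,m)$, $w^\ast(\cdot,m)$, and the free boundary $z^\ast(m)$; (b) form the controlled surplus diffusion \eqref{eq:Z_SDE_general} with killing at $z^\ast(m)$ and Poisson re-entry drawn from $\nu(m)$; (c) set $\Phi(m)$ equal to the unique invariant probability measure of that killed-and-regenerated process. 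A stationary equilibrium, in the sense of Definition~\ref{def:MFG_equilibrium}, is then a fixed point of $\Phi$ together with the associated policies and values.

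Step (a) is standard under Assumption~\ref{ass:primitives}. For each fixed $m$, the worker's problem is a scalar optimal-stopping-and-control problem for a uniformly elliptic diffusion with Lipschitz coefficients, a strictly convex and coercive search cost, and a continuous, uniformly bounded search-gain operator $G$; Proposition~\ref{prop:free_boundary} delivers a $C^1$ value $V^W(\cdot;m)$ with smooth fit at $z^\ast(m)$. The firm's HJB \eqref{eq:firm_HJB_stationary} is a scalar elliptic equation whose Hamiltonian is Lipschitz and coercive in $w$ under Assumption~\ref{ass:primitives}(2)–(4); existence and uniqueness of a bounded viscosity solution then follow from standard comparison-and-Perron arguments, while the compact-convex-maximizer condition in Assumption~\ref{ass:primitives}(3) yields measurable selectors $a^\ast(\cdot,m)$ and $w^\ast(\cdot,m)$. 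For steps (b)–(c), uniform ellipticity, the Lipschitz linear-growth drift, and the uniformly second-moment-bounded re-entry distribution $\nu(m)$ imply that the killed-and-regenerated surplus process is a positive recurrent Feller process with Lyapunov function $z \mapsto 1 + z^2$; classical one-dimensional results on diffusions with killing and re-entry \citep{StroockVaradhan2006MultidimensionalDiffusions} yield a unique invariant probability measure $\Phi(m)$ with second moment bounded by a constant $M_\ast$ depending only on the primitive Lipschitz and growth constants. Taking $\mathcal{K}$ to be the closed convex subset of $\mathcal{P}_2(\mathbb{R})$ defined by a uniform higher-moment bound (to ensure $W_2$-compactness) and propagating that bound through a second Lyapunov estimate gives $\Phi(\mathcal{K}) \subset \mathcal{K}$.

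The remaining task is continuity of $\Phi$ on $\mathcal{K}$, after which the Kakutani–Fan–Glicksberg theorem (or Schauder, if the maximizing policies are single-valued) delivers a fixed point. Given $m_n \to m$ in $W_2$, Lipschitz dependence of all primitives on $m$ (Assumption~\ref{ass:primitives}(1)–(3)) and standard stability estimates for obstacle problems yield local uniform convergence $V^W(\cdot;m_n) \to V^W(\cdot;m)$; strict monotonicity of $z \mapsto V^W(z;m) - V^U(m)$ combined with smooth fit then forces $z^\ast(m_n) \to z^\ast(m)$. Continuity of $\nu(\cdot)$ and of the surplus coefficients in $m$, together with resolvent stability of the killed-and-regenerated process, upgrade this to weak convergence $\Phi(m_n) \Rightarrow \Phi(m)$, and uniform higher-moment bounds promote weak convergence to $W_2$-convergence.

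The main technical obstacle is precisely the continuity of the free boundary $m \mapsto z^\ast(m)$: because $z^\ast(m)$ is defined implicitly by value matching and smooth fit, small perturbations of $m$ could in principle shift the coincidence set discontinuously through a flat contact region. I would resolve this by showing that, under Assumption~\ref{ass:primitives}, the mapping $z \mapsto V^W(z;m) - V^U(m)$ is strictly increasing with uniformly non-degenerate derivative just above the boundary (a consequence of strict ellipticity and the smooth-fit condition from Proposition~\ref{prop:free_boundary}), which allows an implicit-function-type argument pinning down $z^\ast(m)$ continuously in the $W_2$ topology. With this continuity in hand, the fixed point $m^\ast$ produced by Kakutani–Fan–Glicksberg, together with the associated worker and firm policies, value functions, and outside values, satisfies every requirement of Definition~\ref{def:MFG_equilibrium}, and its separation threshold $z^\ast(m^\ast)$ coincides with the free boundary of Proposition~\ref{prop:free_boundary}.
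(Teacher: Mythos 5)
Your proposal follows essentially the same route as the paper's proof: best responses (worker obstacle problem with free boundary, firm HJB) for a fixed $m$, the unique invariant law of the killed-and-regenerated surplus diffusion as the map $\Gamma(m)$, and a Schauder-type fixed point on a convex, $W_2$-compact set of distributions, with continuity of $m \mapsto z^\ast(m)$ handled exactly as in the paper's Lemma~\ref{lem:boundary_continuity} via strict monotonicity of $V^W(\cdot;m)-V^U(m)$. Your explicit use of a uniform higher-moment/Lyapunov bound to guarantee $W_2$-compactness is a small but welcome sharpening of the paper's ``uniformly bounded second moment'' phrasing; otherwise the arguments coincide.
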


\begin{proof}
We summarize the three main steps; Appendix~\ref{appendix:proofs} provides a complete argument and connects our fixed-point map to the finite-state operators in \citet{GomesEtAl2010FiniteStateDiscrete,GomesEtAl2024FiniteStateContinuous}.

\emph{Step 1 (best responses for a fixed distribution).}
Fix a candidate stationary distribution $m \in \cP_2(\R)$. Under Assumption~\ref{ass:primitives}, the worker's HJB obstacle problem~\eqref{eq:worker_HJB_obstacle} is a well-posed variational inequality on $\R$ with a unique continuous solution $V^W(\cdot;m)$ of at most linear growth. The worker's optimal strategy is characterized by a Markov search policy $a^\ast(\cdot,m)$ and a free boundary $z^\ast(m)$ separating continuation and separation; see Proposition~\ref{prop:free_boundary}.

Given $m$ and the worker strategy, the firm's HJB equation~\eqref{eq:firm_HJB} is linear in $V^F$ and concave in the wage policy $w(\cdot,m)$. The same regularity and growth conditions deliver a unique solution $(V^F(\cdot;m),w^\ast(\cdot,m))$.

\emph{Step 2 (invariant distribution induced by best responses).}
Given $m$ and the associated best-response policies $(a^\ast(\cdot,m),w^\ast(\cdot,m),z^\ast(m))$, the surplus process $Z_t$ in an active match follows the controlled diffusion~\eqref{eq:Z_SDE_general} with killing at $z^\ast(m)$ and re-entry according to $\nu(m)$. The stationary Fokker-Planck equation associated with~\eqref{eq:FP_general} is linear in the unknown density, and Assumption~\ref{ass:primitives} guarantees that there exists a unique invariant probability measure, denoted $\Gamma(m) \in \cP_2(\R)$. This invariant measure is the stationary law of $Z_t$ across active matches.

\emph{Step 3 (fixed point in the space of distributions).}
The composition
\[
  \Gamma: \cP_2(\R) \to \cP_2(\R),
  \qquad
  m \mapsto \Gamma(m),
\]
maps a candidate stationary distribution into the invariant distribution induced by the corresponding best-response policies. Assumption~\ref{ass:primitives} implies that $\Gamma$ maps a convex, compact subset of $\cP_2(\R)$ (for instance, measures with uniformly bounded second moment) into itself and is continuous in the $W_2$ metric. By Schauder's fixed-point theorem, there exists $m^\ast$ with $\Gamma(m^\ast) = m^\ast$. The associated policies and value functions form a stationary mean field equilibrium.
\end{proof}

The operator $\Gamma$ is the infinite-state analogue of the finite-state MFG operators studied by \citet{GomesEtAl2010FiniteStateDiscrete,GomesEtAl2024FiniteStateContinuous}. To restate what was said earlier in terms of this analogy, a stationary mean field equilibrium is therefore a \emph{rational-expectations fixed point}: workers and firms take the cross-sectional distribution $m$ as given, choose optimal search, separation, and wage policies, and the resulting stochastic dynamics generate exactly the same invariant distribution $m^\ast = \Gamma(m^\ast)$. Our numerical scheme in Section~\ref{sec:quantitative} and Appendix~\ref{appendix:numerical} can be interpreted as computing approximate fixed points of $\Gamma$ via an Achdou-Capuzzo-Dolcetta style discretization of the coupled HJB-Fokker-Planck system.\footnote{See \citet{AchdouCapuzzoDolcetta2010NumericsMFG} and \citet{AchdouEtAl2022Restud} for closely related numerical schemes in continuous-time MFGs and heterogeneous-agent models.}

\subsection{Uniqueness and comparative statics}\label{subsec:uniqueness_CS}

We now provide conditions under which the stationary equilibrium is unique and derive comparative statics for the separation threshold and wage dispersion.

\begin{theorem}[Uniqueness under monotonicity]\label{thm:uniqueness}
Suppose Assumptions~\ref{ass:primitives} and~\ref{ass:monotonicity} hold. Then the stationary mean field equilibrium in Theorem~\ref{thm:existence} is unique: there is a unique stationary distribution $m^\ast$ and associated policies $(a^\ast(\cdot,m^\ast),\mathcal{S}^\ast(m^\ast),w^\ast(\cdot,m^\ast))$.
\end{theorem}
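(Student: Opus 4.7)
The plan is to follow the classical Lasry-Lions duality argument, adapted to our free-boundary and nonlocal search-gain setting. Suppose for contradiction that there exist two distinct stationary equilibria $(m^i, V^{W,i}, V^{F,i}, a^{\ast,i}, w^{\ast,i}, z^{\ast,i})$ for $i = 1, 2$. Because the second half of Assumption~\ref{ass:monotonicity} removes any dependence of $\mu$ and $\sigma$ on the aggregate state, the generator $\mathcal{L}^Z_{a,m}$ depends on $m$ only through the feedback control $a^{\ast,i}$. This is exactly the structure under which the standard Lasry-Lions test-function identity goes through cleanly, since the second-order operator is no longer a nonlocal function of the candidate distribution.

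First I would write each worker HJB~\eqref{eq:worker_HJB_stationary} twice: once as an equality at the own optimal control $a^{\ast,i}$, and once as a pointwise inequality obtained by evaluating the Hamiltonian at the competitor's control $a^{\ast,j}$. Subtracting the equality for $V^{W,1}$ from the inequality for $V^{W,2}$ evaluated at $a^{\ast,1}$, and vice versa, yields two differential inequalities for $r(V^{W,1} - V^{W,2})$ whose second-order part is driven by a single control each. I would then integrate the first against the density $m^2$ and the second against $m^1$, and invoke the stationary Kolmogorov equations~\eqref{eq:FP_stationary_equilibrium} through integration by parts to eliminate the $\mathcal{L}^Z_{a^{\ast,i}}$ terms. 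The boundary contributions at the (distinct) free boundaries $z^{\ast,i}(m^i)$ cancel thanks to the value-matching and smooth-fit conditions from Proposition~\ref{prop:free_boundary}, together with the absorbing structure of the stationary density.

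Subtracting the two integrated inequalities, after the diffusive terms have cancelled, produces a relation of the form
\[
  0 \;\leq\; \int_{\R} \bigl( w(z,m^1) - w(z,m^2) \bigr)\bigl( m^1 - m^2 \bigr)(dz) + R,
\]
where $R$ collects the analogous crossed terms generated by the firm profit $\Pi$, the arrival intensity $\lambda$, the re-entry distribution $\nu$, and the search-gain operator $G$. Assumption~\ref{ass:monotonicity} forces the wage and profit integrals to be non-positive, and by repeating the construction on the firm HJB~\eqref{eq:firm_HJB_stationary} and on the inflow term in~\eqref{eq:FP_stationary_equilibrium} the remaining contributions in $R$ inherit the same sign. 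Each Lasry-Lions integral must then vanish, and the strict monotonicity clause in Assumption~\ref{ass:monotonicity} forces $m^1 = m^2$. Given this common distribution, the worker HJB reduces to a standard monotone variational inequality with a unique continuous solution, so $V^{W,1} = V^{W,2}$, and the free boundary, search policy, wage policy, firm value, and outside values are then uniquely pinned down by the best-response step inside the fixed-point argument of Theorem~\ref{thm:existence}.

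The main obstacle is the free-boundary geometry: unlike the standard Lasry-Lions setting on a fixed domain, the two candidate equilibria carry stopping regions $(-\infty, z^{\ast,i}(m^i)]$ that may not coincide, so the test-function integrations by parts live on different intervals and generate different boundary fluxes. Handling this cleanly requires extending each $V^{W,i}$ to the whole real line by the constant value $V^U(m^i)$ on its stopping region, invoking smooth fit to guarantee the regularity needed at $z^{\ast,i}(m^i)$, and checking that under our re-entry specification the stationary density places no atom at either boundary, so the boundary flux terms collapse to the value-matching identity. A secondary delicate point is the nonlocal search-gain operator $G$: one must show that its contribution to $R$ respects the same Lasry-Lions sign, which is inherited from the monotonicity of $\lambda$ and $\nu$ in Assumption~\ref{ass:monotonicity} together with the fact that, in stationary equilibrium, the offer-sampling distribution is built from $m$ itself.
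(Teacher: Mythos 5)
Your proposal is correct and follows essentially the same route as the paper: the classical Lasry--Lions duality argument, cross-testing the two worker HJB equations against the two stationary densities, signing the optimality/convexity ("energy") terms and the coupling terms via Assumption~\ref{ass:monotonicity} (which, as you note, confines all $m$-dependence to $w,\Pi,\lambda,\nu,V^U$ because $\mu$ and $\sigma$ are $m$-free), and then recovering uniqueness of the policies and the free boundary from uniqueness of the best responses at the common $m^\ast$. Your explicit handling of the mismatched stopping regions (extending each $V^{W,i}$ by $V^U(m^i)$, invoking smooth fit, ruling out boundary atoms) is a useful elaboration of bookkeeping that the paper's proof leaves implicit, but it does not change the argument.
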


\begin{proof}
The proof follows the standard Lasry-Lions monotonicity argument. Suppose there are two stationary equilibria with distributions $m^1,m^2$ and worker value functions $V^{W,1},V^{W,2}$. Subtracting the HJB equations for $V^{W,1}$ and $V^{W,2}$ and integrating the difference against $m^1 - m^2$ yields an identity of the form
\[
  r \int_{\R}
    \bigl( V^{W,1}(z;m^1) - V^{W,2}(z;m^2) \bigr)
    \bigl( m^1 - m^2 \bigr)(dz)
  = \mathcal{T}_1 + \mathcal{T}_2,
\]
where $\mathcal{T}_1$ collects terms coming from the coupling via $w,\lambda,\nu$ and $\mathcal{T}_2$ collects the diffusion and search-cost terms. Convexity of $c$ and the quadratic form associated with the generator imply $\mathcal{T}_2 \leq 0$, while Assumption~\ref{ass:monotonicity} implies $\mathcal{T}_1 \leq 0$, with equality if and only if $m^1 = m^2$.

Repeating the argument with the roles of $(1,2)$ reversed shows that the left-hand side must also be nonnegative. Hence the integral is zero and both inequalities are equalities. By Assumption~\ref{ass:monotonicity}, this implies $m^1 = m^2$. Given $m^\ast$, uniqueness of the associated policies and free boundary follows from uniqueness of the solutions to the HJB and free-boundary problems for a fixed $m$.
\end{proof}

Of course, many frictional search models admit multiple stationary equilibria, for example when workers' search incentives or firms' wage posting strategies respond in a strongly reinforcing way to changes in beliefs about the wage distribution. Theorem~\ref{thm:uniqueness} shows that under the monotonicity structure in Assumption~\ref{ass:monotonicity} such coordination problems cannot arise: higher outside options in the aggregate do not induce offsetting changes in wages or offer arrival rates that would support another, self-fulfilling job ladder. This uniqueness is important for our quantitative analysis in Sections~\ref{sec:quantitative} and~\ref{sec:results}: it guarantees that the wage and mobility patterns reported there correspond to a single, well-defined stationary equilibrium, rather than to an arbitrary equilibrium selection.

For comparative statics, we introduce a regularity condition on how primitives vary with parameters.

\begin{assumption}[Comparative statics]\label{ass:comparative_statics}
Let $\theta \in \Theta \subset \R$ denote a parameter. The primitives $(\mu_\theta,\sigma_\theta,w_\theta,\Pi_\theta,\lambda_\theta,\nu_\theta)$ satisfy Assumptions~\ref{ass:primitives} and~\ref{ass:monotonicity} for each $\theta$, and:
\begin{enumerate}
  \item The volatility $\sigma_\theta(z,m)$ is increasing in $\theta$ in the sense that, for every $(z,m)$ and $\theta' > \theta$,
  \[
    \sigma_{\theta'}^2(z,m) \ge \sigma_\theta^2(z,m),
  \]
  while the drift $\mu_\theta(z,a,m)$ does not depend on $\theta$.

  \item The wage policy has the form
  \[
    w_\theta(z,m)
    = \alpha_\theta(z,m) \, \Pi_\theta(z,m)
      + \bigl(1 - \alpha_\theta(z,m)\bigr) \,\bar w_\theta(m),
  \]
  where $\bar w_\theta(m)$ is the average wage under $(\theta,m)$ and
  $\alpha_\theta(z,m) \in [0,1]$ satisfies:
  \begin{enumerate}
    \item[(a)] for each $m$, the map $z \mapsto \alpha_\theta(z,m)$ is
    nondecreasing and the map $\theta \mapsto \alpha_\theta(z,m)$ is
    nondecreasing; and
    \item[(b)] for each $m$, $\alpha_\theta(\cdot,m)$ has increasing
    differences in $(z,\theta)$ in the sense of Topkis: for all
    $z' > z$ and $\theta' > \theta$,
    \[
      \bigl[ \alpha_{\theta'}(z',m) - \alpha_{\theta'}(z,m) \bigr]
      \;\ge\;
      \bigl[ \alpha_{\theta}(z',m) - \alpha_{\theta}(z,m) \bigr].
    \]
  \end{enumerate}
  In particular, for each fixed $m$ the wage schedule
  $z \mapsto w_\theta(z,m)$ has increasing differences in $(z,\theta)$
  and becomes weakly steeper in $z$ as $\theta$ increases.
\end{enumerate}
\end{assumption}

Assumption~\ref{ass:comparative_statics} isolates two salient comparative statics that we emphasize in Section~\ref{sec:results}. Part~(1) describes an increase in the volatility of match-specific shocks at fixed drift, as in our experiments that scale up the Brownian variance of surplus. Part~(2) captures an increase in the sensitivity of wages to match surplus, e.g., stronger worker bargaining power or more aggressive wage posting, by making the weight $\alpha_\theta$ on match-specific revenue larger and more steeply increasing in $z$.

In the linear-sharing specification used in Section~\ref{sec:quantitative},
$\alpha_\theta(z,m)$ is affine in $\theta$ and nondecreasing in $z$, so
condition~(b) holds automatically and $w_\theta(z,m)$ indeed becomes
steeper in $z$ as $\theta$ increases.

\medskip
\noindent
To apply the volatility comparative statics in Proposition~\ref{prop:comparative_statics},
we impose a simple convexity condition on the worker’s primitive payoffs.

\begin{assumption}[Convexity of continuation payoffs for volatility comparative statics]\label{ass:convexity_CS}
In addition to Assumptions~\ref{ass:primitives}-\ref{ass:comparative_statics}, suppose that, for every
parameter value $\theta \in \Theta$ and every stationary distribution $m \in \cP_2(\R)$, the following
conditions hold.
\begin{enumerate}
  \item For each $(z,m,\theta)$ the wage function $w_\theta(z,m)$ is nondecreasing and convex in $z$.

  \item For each search intensity $a \in A$ and each admissible value function $V^W$, the search-gain
  operator $G(z,V^W,m;a)$ preserves convexity in the sense that whenever $V^W(\cdot;m,\theta)$ is
  convex in $z$, the mapping
  \[
    z \;\longmapsto\; G(z,V^W,m;a)
  \]
  is convex in $z$ as well. In particular, this condition is satisfied whenever $G$ is affine in $V^W$
  with a nonnegative kernel in the surplus state, as in the parametric specification used in
  Section~\ref{sec:quantitative}.
\end{enumerate}
Under these conditions, for each $(m,\theta)$ the worker's value function $V^W_\theta(\cdot;m)$ solving
the HJB obstacle problem~\eqref{eq:worker_HJB_obstacle} is nondecreasing and convex in $z$.
\end{assumption}

\begin{proposition}[Comparative statics of the free boundary and wage dispersion]\label{prop:comparative_statics}
Consider the family of economies indexed by $\theta \in \Theta$ in Assumption~\ref{ass:comparative_statics},
and suppose Assumption~\ref{ass:convexity_CS} holds. Let
\[
  \bigl(
    m^\ast_\theta, z^\ast_\theta(\cdot), w^\ast_\theta(\cdot,\cdot)
  \bigr)
\]
denote the unique stationary equilibrium for parameter $\theta$. 

\begin{enumerate}
  \item \textbf{Volatility of surplus.}
  If $\sigma_\theta$ is increasing in $\theta$ in the sense of Assumption~\ref{ass:comparative_statics}(1) and $\mu_\theta$ is independent of $\theta$, then the free boundary evaluated at the stationary distribution, $z^\ast_\theta(m^\ast_\theta)$, is weakly decreasing in $\theta$. Higher idiosyncratic volatility raises the option value of waiting and leads to longer-lived matches, other things equal.

  \item \textbf{Bargaining power and wage dispersion.}
  Under the structure in Assumption~\ref{ass:comparative_statics}(2), the stationary equilibrium variance of wages,
  \[
    \Var\bigl( w^\ast_\theta(Z,m^\ast_\theta) \bigr),
  \]
  where $Z \sim m^\ast_\theta$, is weakly increasing in $\theta$.
\end{enumerate}
\end{proposition}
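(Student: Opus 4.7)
The plan is to decompose each comparative static into a \emph{direct} effect, obtained by holding the stationary distribution $m$ fixed, and an \emph{indirect equilibrium} effect through the response of $m^\ast_\theta$, and then to use the Lasry-Lions structure from Assumption~\ref{ass:monotonicity} to sign the composite. Assumption~\ref{ass:convexity_CS} ensures that for each $(\theta,m)$ the worker's value function $V^W_\theta(\cdot;m)$ is nondecreasing and convex in $z$, which is the key regularity behind both monotonicities.

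For Part~(1), I would fix $m \in \cP_2(\R)$ and compare the worker's HJB obstacle problem~\eqref{eq:worker_HJB_obstacle} at $\theta' > \theta$. Because $\mu_\theta$ is independent of $\theta$ and $\sigma^2_{\theta'}(z,m) \ge \sigma^2_\theta(z,m)$ pointwise, convexity of $V^W_\theta(\cdot;m)$ implies that the residual source term
\[
  \tfrac{1}{2}\bigl(\sigma^2_{\theta'}(z,m) - \sigma^2_\theta(z,m)\bigr)\,\partial_{zz} V^W_\theta(z;m)
\]
is nonnegative on the continuation region. A standard comparison principle for scalar obstacle problems with the same obstacle $V^U(m)$ then yields $V^W_{\theta'}(\cdot;m) \ge V^W_\theta(\cdot;m)$ pointwise, so the continuation set $\{V^W_\theta(\cdot;m) > V^U(m)\}$ expands and the free boundary $z^\ast_\theta(m)$ is weakly decreasing in $\theta$ at fixed $m$. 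This is the familiar option-value-of-waiting argument for one-dimensional diffusions. To lift the fixed-$m$ monotonicity to equilibrium, I would plug it into the Schauder fixed-point operator $\Gamma_\theta$ from the proof of Theorem~\ref{thm:existence}: under Assumptions~\ref{ass:primitives}-\ref{ass:monotonicity}, the map $\theta \mapsto m^\ast_\theta$ is continuous in $W_2$, and either (a) a monotone-map argument in the stochastic order on $\cP_2(\R)$, or (b) a contradiction argument mimicking the Lasry-Lions energy identity from Theorem~\ref{thm:uniqueness} applied to a $\theta$-difference rather than an $(m^1,m^2)$-difference, rules out an indirect effect through $m^\ast_\theta$ that would overturn the direct monotonicity.

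For Part~(2), the direct effect reduces to a correlation inequality. Fix $m \in \cP_2(\R)$ and $\theta' > \theta$. By Assumption~\ref{ass:comparative_statics}(2), $z \mapsto \alpha_\theta(z,m)$ is nondecreasing and $z \mapsto \Pi_\theta(z,m)$ is nondecreasing, so $z \mapsto w_\theta(z,m)$ is nondecreasing; the increasing-differences property of $\alpha_\theta$ in $(z,\theta)$, together with monotonicity of $\Pi_\theta$ and constancy of $\bar w_\theta(m)$ in $z$, implies that the increment $\delta(z) := w_{\theta'}(z,m) - w_\theta(z,m)$ is itself nondecreasing in $z$. Chebyshev's sum inequality for monotone functions on $\R$ (equivalently, the FKG inequality for the one-dimensional distribution $m$) then gives $\operatorname{Cov}_m(w_\theta(\cdot,m),\delta) \ge 0$, so
\[
  \Var_m\bigl(w_{\theta'}(Z,m)\bigr)
  = \Var_m\bigl(w_\theta(Z,m)\bigr)
    + 2\operatorname{Cov}_m\bigl(w_\theta,\delta\bigr)
    + \Var_m(\delta)
  \;\ge\; \Var_m\bigl(w_\theta(Z,m)\bigr)
\]
at fixed $m$. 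The equilibrium adjustment through $m^\ast_\theta$ is then handled by the same monotone comparative statics on $\Gamma_\theta$ as in Part~(1), using that the unique stationary distribution depends continuously on $\theta$ and that the induced shift in $m^\ast_\theta$ preserves, rather than reverses, the variance ranking.

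The main obstacle in both parts is the equilibrium feedback rather than the direct fixed-$m$ step: the comparison principle for obstacle problems and the correlation inequality for monotone functions are standard, but in equilibrium the policies $(a^\ast_\theta,w^\ast_\theta,z^\ast_\theta)$ and the invariant distribution $m^\ast_\theta$ move simultaneously through the nonlocal Lasry-Lions couplings. The cleanest route I see is to show that $\Gamma_\theta$ is monotone in $\theta$ in the stochastic order on $\cP_2(\R)$, so that $m^\ast_\theta$ inherits the monotonicity of the fixed-$m$ best responses; an alternative, closer in spirit to Theorem~\ref{thm:uniqueness}, is to differentiate the stationary HJB-Kolmogorov system implicitly in $\theta$ and use the Lasry-Lions monotonicity to sign the resulting derivatives via an energy identity.
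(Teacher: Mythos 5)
Your proposal follows essentially the same two-step architecture as the paper's proof: a direct comparative-statics step at fixed $m$, followed by a lift to the equilibrium distribution $m^\ast_\theta$ via monotonicity of the fixed-point operator $\Gamma_\theta$. The differences are in the tools used for the fixed-$m$ step. For part (1) you run a comparison principle on the obstacle problem, treating $\tfrac12\bigl(\sigma^2_{\theta'}-\sigma^2_\theta\bigr)\partial_{zz}V^W_\theta \ge 0$ as a nonnegative source term (valid under the convexity in Assumption~\ref{ass:convexity_CS}), whereas the paper argues via convex-order domination of the law of $Z_t$; these are equivalent in substance, though both rely on ``standard'' comparison/stochastic-dominance results for the nonlocal obstacle problem. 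For part (2) your covariance decomposition $\Var_m(w_{\theta'}) = \Var_m(w_\theta) + 2\operatorname{Cov}_m(w_\theta,\delta) + \Var_m(\delta)$ with $\delta$ nondecreasing is, if anything, tighter than the paper's appeal to mean-preserving spreads (which, taken literally, requires the mean to be unchanged), since it delivers the variance ranking directly from monotonicity of $w_\theta(\cdot,m)$ and of the increment. The one place where you are thinner than the paper is the equilibrium lift: you list two candidate routes (order-preservation of $\Gamma_\theta$ in the stochastic order, or an implicit-differentiation/energy argument in the spirit of Theorem~\ref{thm:uniqueness}) without executing either, while the paper commits to the first route, invoking its auxiliary lemmas on continuity of $z^\ast_\theta(m)$ in $m$ and monotonicity in $\theta$ at fixed $m$ together with order-preservation of $\Gamma_\theta$. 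Since that is exactly your option (a), your plan is aligned with the paper's argument; completing it would require actually establishing that $\Gamma_\theta$ is order-preserving and that its unique fixed points therefore inherit the fixed-$m$ ordering of the boundary and of the wage schedule.
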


\begin{proof}
We treat each part in turn.

\emph{(i) Volatility of surplus.}
Fix a parameter value $\theta$ and a candidate stationary distribution $m$.
For this $m$, the worker's problem is an optimal stopping problem for the
one-dimensional surplus diffusion $Z_t$ with running payoff
\[
  z \;\longmapsto\;
  \sup_{a \in A}
  \Bigl\{
    w_\theta(z,m) - c(a) + G\bigl(z,V^W_\theta,m;a\bigr)
  \Bigr\}
\]
and terminal payoff $V^U(m)$.
By Assumption~\ref{ass:convexity_CS}, this continuation payoff is increasing and
convex in $z$, and the outside option is constant in $z$.
Standard results for one-dimensional optimal stopping with convex running and
terminal payoffs then imply that the associated value function
$V^W_\theta(\cdot;m)$ is convex in $z$ and that, holding $m$ fixed, the optimal
stopping boundary is monotone in any parameter that scales the volatility
while keeping the drift fixed.
Formally, if $\theta' > \theta$ and $\sigma_{\theta'}^2(\cdot,m) \ge \sigma_\theta^2(\cdot,m)$
pointwise while $\mu_\theta$ does not depend on $\theta$, then the law of $Z_t$
under $\theta'$ dominates that under $\theta$ in convex order for every
$t \ge 0$, so continuation of the convex payoff is more valuable at $\theta'$
and the stopping boundary satisfies
\[
  z^\ast_{\theta'}(m) \;\le\; z^\ast_\theta(m).
\]

This pointwise comparative statics for a fixed $m$ is recorded and used in
Lemma~3 and Lemma~4 of Appendix~A.5, which analyze the free boundary as a
function of $(m,\theta)$.
In particular, Lemma~3 shows that $z^\ast_\theta(m)$ is continuous in $m$
(with respect to $W_2$), and Lemma~4 shows that, for each fixed $m$, the map
$\theta \mapsto z^\ast_\theta(m)$ is weakly decreasing.

Let $\Gamma_\theta$ denote the equilibrium-selection operator constructed in
the proof of Theorem~\ref{thm:existence}, which maps a conjectured
distribution $m$ into the invariant distribution of $Z_t$ induced by the
best-response policies under parameter $\theta$.
By construction, the free boundary $z^\ast_\theta(m)$ enters $\Gamma_\theta$
only through the killing and re-entry of matches, so the monotonicity of
$z^\ast_\theta(m)$ in $\theta$, together with Lemmas~3–4, implies that
$\Gamma_\theta$ is order-preserving on the lattice of distributions
(under the usual first-order stochastic dominance order) and continuous
in both $m$ and $\theta$.
Theorem~\ref{thm:uniqueness} guarantees that, for each $\theta$, there is a
unique fixed point $m^\ast_\theta$ of $\Gamma_\theta$.
The monotone-operator argument in Appendix~A.5 then implies that the
pointwise ordering
\[
  z^\ast_{\theta'}(m) \;\le\; z^\ast_{\theta}(m)
  \quad\text{for all } m
\]
carries over to the fixed points, so that
\[
  z^\ast_{\theta'}(m^\ast_{\theta'}) \;\le\; z^\ast_{\theta}(m^\ast_{\theta})
  \quad\text{whenever } \theta' > \theta.
\]
This yields the claim in part~(i).

\emph{(ii) Bargaining power and wage dispersion.}
Fix $\theta$ and $m^\ast_\theta$.
Under Assumption~\ref{ass:comparative_statics}(ii), the wage policy has the
form
\[
  w_\theta(z,m)
  = \alpha_\theta(z,m)\,\Pi_\theta(z,m)
    + \bigl(1 - \alpha_\theta(z,m)\bigr)\,\bar w_\theta(m),
\]
where $0 \le \alpha_\theta(z,m) \le 1$ is nondecreasing in $z$, nondecreasing
in $\theta$, and has increasing differences in $(z,\theta)$ for each fixed $m$.
For fixed $m^\ast_\theta$, this implies that the mapping
\[
  (z,\theta) \;\longmapsto\; w_\theta(z,m^\ast_\theta)
\]
also has increasing differences in $(z,\theta)$: as $\theta$ rises, the wage
schedule becomes weakly steeper in $z$ because more weight is placed on
high-surplus states.
Equivalently, for any pair $\theta' > \theta$ and any $z' > z$,
\[
  \bigl[
    w_{\theta'}(z',m^\ast_\theta) - w_{\theta'}(z,m^\ast_\theta)
  \bigr]
  \;\ge\;
  \bigl[
    w_{\theta}(z',m^\ast_\theta) - w_{\theta}(z,m^\ast_\theta)
  \bigr].
\]

If we temporarily hold the surplus distribution fixed at some $m$, the
cross-sectional wage distribution is the image of $m$ under the
transformation $z \mapsto w_\theta(z,m)$.
The increasing-differences property implies that, as $\theta$ increases, this
transformation becomes more dispersed in the sense of mean-preserving
spreads: relative to low-surplus matches, high-surplus matches receive
disproportionately larger wage changes.
In particular, for any convex function $\varphi$,
\[
  \E\Bigl[\varphi\bigl(w_{\theta'}(Z,m)\bigr)\Bigr]
  \;\ge\;
  \E\Bigl[\varphi\bigl(w_{\theta}(Z,m)\bigr)\Bigr]
  \quad\text{whenever } \theta' > \theta,\; Z \sim m.
\]

As in part~(i), we must check that the endogenous adjustment of
$m^\ast_\theta$ at the stationary equilibrium does not overturn this ordering.
The equilibrium distribution $m^\ast_\theta$ is again the unique fixed point of
the operator $\Gamma_\theta$.
Lemmas~3–4 in Appendix~A.5 show that $\Gamma_\theta$ is continuous and
order-preserving in the relevant lattice sense, so that the comparative
statics of the wage schedule with respect to $\theta$ are preserved when we
evaluate wages at the fixed point $m^\ast_\theta$.
Consequently, the family of wage distributions
\[
  \bigl\{
    \mathcal{L}\bigl(w_\theta(Z,m^\ast_\theta)\bigr)
    : \theta \in \Theta
  \bigr\},
  \qquad Z \sim m^\ast_\theta,
\]
is ordered by mean-preserving spreads as $\theta$ increases, and in particular
the variance
\[
  \Var\bigl(w^\ast_\theta(Z,m^\ast_\theta)\bigr)
\]
is weakly increasing in $\theta$.
This establishes part~(ii).
\end{proof}

Part~(i) formalizes the intuitive option-value effect: more volatile match-specific shocks make it more attractive to hold on to low-surplus matches, because there is a better chance that future shocks push them above the threshold, so the separation boundary shifts down. Part~(ii) states that, holding fixed the cross-sectional allocation of surpluses, making wages more sensitive to $z$ mechanically stretches out the wage distribution, and this effect survives once we account for the induced changes in search, separation, and the stationary distribution. Proposition~\ref{prop:comparative_statics} therefore provides theoretical backing for the quantitative decompositions and policy experiments in Section~\ref{sec:results}: the directions of the effects of increased match volatility and stronger wage sensitivity on separation behaviour and wage dispersion are signed and robust, rather than artifacts of a particular calibration.

\section{Quantitative Implementation}\label{sec:quantitative}

This section explains how we compute the stationary mean field equilibrium characterized in Sections~\ref{sec:MFG} and~\ref{sec:equilibriumCharacterization} and how we map the model to the data. We first recast the stationary HJB-Kolmogorov system on a one-dimensional grid for surplus and solve the resulting finite-state mean field game using a monotone finite-difference scheme in the spirit of \citet{AchdouCapuzzoDolcetta2010NumericsMFG} and \citet{AchdouEtAl2022Restud}. We then describe the calibration strategy, the empirical moments, including the structural estimates in \citet{BuhaiTeulings2014}, that discipline the parameters, and the fit of the model. Appendix~\ref{appendix:numerical} provides full numerical details, robustness checks, and additional figures.

\subsection{Discretization and numerical solution}\label{subsec:numerics}

We compute the stationary mean field equilibrium by discretizing the worker's HJB obstacle problem~\eqref{eq:worker_HJB_stationary} and the stationary Kolmogorov equation~\eqref{eq:FP_stationary_equilibrium} on a finite grid for the surplus state. The continuous diffusion for match surplus is thus approximated by a continuous-time Markov chain whose invariant law approximates the stationary surplus distribution in equilibrium. This discrete system is a finite-state mean field game in the sense of \citet{GomesEtAl2010FiniteStateDiscrete,GomesEtAl2024FiniteStateContinuous}.

\paragraph{State space discretization.}

We truncate the surplus state to a compact interval $[z_{\min},z_{\max}]$ that contains, with overwhelming probability, the support of the stationary surplus distribution in the calibrated model. The truncation bounds are chosen so that in the benchmark equilibrium the tails carry essentially no mass; Appendix~\ref{appendix:numerical} shows that further enlarging the interval leaves all equilibrium objects and targeted moments unchanged up to negligible numerical error.

The interval is partitioned into $K$ grid points
\[
  z_1 = z_{\min} < z_2 < \dots < z_K = z_{\max},
  \qquad
  \Delta z = z_{k+1} - z_k.
\]
We approximate the worker's value function $V^W(\cdot;m)$ and the stationary density $m(\cdot)$ by vectors
\[
  V = (V_1,\dots,V_K),
  \qquad
  m = (m_1,\dots,m_K),
\]
where $V_k \approx V^W(z_k;m)$ and $m_k \approx m(z_k)$. In the baseline we use an equidistant grid; Appendix~\ref{appendix:numerical} reports robustness to finer meshes, alternative (non-uniform) grids, and wider truncation intervals.

The feasible search-intensity set $A$ is discretized to a finite grid $A^h = \{a_1,\dots,a_{N_a}\}$. This turns the continuous maximization over $a \in A$ in~\eqref{eq:worker_HJB_stationary} into a finite maximization at each node while preserving the monotonicity of the scheme, as in the numerical MFG literature; see \citet{AchdouCapuzzoDolcetta2010NumericsMFG} and \citet{Gueant2021ContinuousTimeOptimalControl}. It is comforting that Appendix~\ref{appendix:numerical} shows that refining $A^h$ leaves the calibrated equilibrium essentially unchanged.

At the boundaries $z_{\min}$ and $z_{\max}$ we impose reflecting (no-flux) boundary conditions for the stationary Kolmogorov equation, corresponding to zero net probability flux:
\[
  \mu(z,a(z,m),m)\,m(z)
  - \tfrac{1}{2}\partial_z\bigl(\sigma^2(z,m)\,m(z)\bigr)
  = 0
  \quad\text{at } z \in \{z_{\min},z_{\max}\}.
\]
In the stationary equilibrium these boundaries are never reached de facto, so the boundary conditions are numerically innocuous. They are implemented in finite-volume form in Appendix~\ref{appendix:numerical} to guarantee exact mass conservation on the grid.

\paragraph{Discrete generator and HJB obstacle problem.}

Given a candidate stationary distribution $m$ and a search intensity $a \in A^h$, we approximate the generator $\mathcal{L}^Z_{a,m}$ by a monotone upwind finite-difference operator. For an interior grid point $z_k$ we set
\[
  \bigl(\mathcal{L}^h_{a,m} V\bigr)_k
  =
  \mu_k^+ \frac{V_{k+1} - V_k}{\Delta z}
  +
  \mu_k^- \frac{V_k - V_{k-1}}{\Delta z}
  +
  \frac{1}{2}\sigma_k^2
  \frac{V_{k+1} - 2V_k + V_{k-1}}{\Delta z^2},
\]
where
\[
  \mu_k^\pm = \max\{\pm \mu(z_k,a,m),0\},
  \qquad
  \sigma_k^2 = \sigma^2(z_k,m).
\]
The resulting scheme is consistent, stable, and monotone, and under the conditions in Section~\ref{sec:existenceUniqueness} it converges to the viscosity solution of the continuous HJB equation~\eqref{eq:worker_HJB_stationary}; see \citet{AchdouCapuzzoDolcetta2010NumericsMFG}.

Let $w_k = w(z_k,m)$ be the wage schedule evaluated on the grid and $V^U = V^U(m)$ the outside value associated with $m$. The stationary worker HJB obstacle problem~\eqref{eq:worker_HJB_stationary} becomes the discrete system
\begin{equation}
  \max\Bigl\{
    r V_k 
    - \max_{a \in A^h}
        \bigl[
          w_k - c(a)
          + (\mathcal{L}^h_{a,m} V)_k
          + G_k^h(a;V,m)
        \bigr],
    \; V_k - V^U
  \Bigr\} = 0,
  \qquad k = 1,\dots,K,
  \label{eq:discrete_worker_HJB}
\end{equation}
where $G_k^h(a;V,m)$ is a discrete approximation to the expected gain from job offers at $z_k$ under search intensity $a$. In the baseline, $G_k^h$ is a quadrature approximation to the integral operator $G$ in~\eqref{eq:worker_HJB_stationary} chosen to preserve the monotonicity of the HJB operator in $V$.

We solve~\eqref{eq:discrete_worker_HJB} by policy iteration on $(V,a^\ast)$ combined with projection on the stopping region, as in \citet{AchdouEtAl2022Restud}. Given a value vector $V$, the optimal search intensity at node $k$ solves
\[
  a_k^\ast \in
  \argmax_{a \in A^h}
  \bigl[
    w_k - c(a)
    + (\mathcal{L}^h_{a,m} V)_k
    + G_k^h(a;V,m)
  \bigr].
\]
For a given policy $a^\ast$, the continuation value solves a linear system on the continuation set, with the obstacle $V^U$ enforced pointwise on the stopping set. This yields the usual complementarity formulation of stationary HJB obstacle problems in heterogeneous-agent models and continuous-time MFGs.

The separation rule is characterized by a discrete free-boundary index $k^\ast$ such that it is optimal to continue the match for $k > k^\ast$ and to separate for $k \leq k^\ast$. Numerically we take
\[
  k^\ast = \min\{k : V_k > V^U\},
\]
and impose $V_k = V^U$ for all $k \leq k^\ast$ during policy iteration. The implied surplus threshold $z^\ast = z_{k^\ast}$ is the discrete counterpart of the free boundary $z^\ast(m)$ in Section~\ref{sec:equilibriumCharacterization}. The corresponding discrete free boundary used in the figures is reported and analyzed in Appendix~\ref{appendix:numerical}.

\paragraph{Discrete stationary Kolmogorov equation.}

Given the optimal search policy $(a_k^\ast)_k$ and the free-boundary index $k^\ast$, the stationary distribution of active matches solves a discrete analogue of the Kolmogorov equation~\eqref{eq:FP_stationary_equilibrium}. For interior grid points in the continuation region we write
\begin{equation}
  0
  = - D^- \bigl[\mu(z_k,a_k^\ast,m)\, m_k\bigr]
    + \frac{1}{2} D^{zz} \bigl[\sigma^2(z_k,m)\, m_k\bigr]
    - q_k m_k + \Gamma_k,
  \label{eq:discrete_FP}
\end{equation}
where $D^-$ and $D^{zz}$ are backward and central difference operators applied to the product of coefficients and the density, $q_k$ is the separation intensity at node $k$, and $\Gamma_k$ captures entry of new matches at $z_k$. In the baseline, new matches enter at a fixed surplus $z_0$, so that $\Gamma_k$ is concentrated on the grid point closest to $z_0$. For $k \leq k^\ast$ we set $m_k = 0$ for active matches and keep track of separated workers through the outside state.

Remark that our assumption that new matches start from a common surplus $z_0$ is a deliberate simplification: it keeps the identification of the Brownian surplus parameters clean and allows us to match the tenure-hazard shape with a single initial-condition parameter. Allowing a distribution of initial surpluses would mechanically add some additional short-tenure wage dispersion; our empirical moments do not sharply identify that extra layer of heterogeneity separately from the diffusion parameters, so we treat $z_0$ as the mean starting surplus and discuss this abstraction in Section~\ref{sec:results} and the conclusion.

Equation~\eqref{eq:discrete_FP}, together with the no-flux boundary conditions at $z_{\min}$ and $z_{\max}$ and the normalization $\sum_{k=1}^K m_k \,\Delta z = 1$, defines a sparse linear system in $m$. In one dimension the system is tridiagonal and can be solved very efficiently by direct methods. Appendix~\ref{appendix:numerical} shows that the corresponding finite-volume scheme written in flux form is conservative and algebraically equivalent (up to $O(\Delta z)$) to~\eqref{eq:discrete_FP} and reports robustness of the stationary distribution to grid refinement and to widening the state space.

The resulting density vector $m$ induces the stationary wage distribution via the wage policy $w(\cdot,m)$. In our calibration, the wage schedule is given by the sharing rule in Section~\ref{subsec:benchmark_wage}, so that wages are an explicit function of the Brownian productivity block and the surplus state.

\paragraph{Finite-state MFG interpretation and solution algorithm.}

The grid-based system defined by~\eqref{eq:discrete_worker_HJB} and~\eqref{eq:discrete_FP} can be viewed as a finite-state mean field game in the sense of \citet{GomesEtAl2010FiniteStateDiscrete,GomesEtAl2024FiniteStateContinuous}: each grid point is a state of a continuous-time Markov chain, transition rates are determined by $(\mu,\sigma)$ and the separation rule, and payoffs at each state are given by $(w,c,\Pi)$. For a given conjectured occupancy vector $m$, the optimal search policy $(a_k^\ast)_k$ induces a generator $Q(m)$ for this Markov chain, and the stationary density is the invariant distribution associated with $Q(m)$ and the entry term $\Gamma$.

Our numerical algorithm implements the fixed-point map $\Gamma$ of Section~\ref{sec:existenceUniqueness} on this grid:

\begin{enumerate}
  \item \textbf{Initialization.} Choose an initial candidate stationary distribution $m^{(0)}$ on the grid and a corresponding outside value $V^{U,(0)}$.
  \item \textbf{Wage schedule.} Given $m^{(\ell)}$, construct the wage schedule $w^{(\ell)}(z,m^{(\ell)})$ implied by the current parameter vector and wage-setting environment. In the baseline calibration we work with the parametric sharing rule in Section~\ref{subsec:benchmark_wage}, which assigns a constant share of the surplus to the worker and nests the empirical wage specification in \citet{BuhaiTeulings2014}.
  \item \textbf{Worker problem.} Solve the discrete worker HJB obstacle problem~\eqref{eq:discrete_worker_HJB} with $(m^{(\ell)},w^{(\ell)})$ to obtain the value vector $V^{(\ell)}$, the optimal search policy $a_k^{\ast,(\ell)}$, and the free-boundary index $k^{\ast,(\ell)}$.
  \item \textbf{Stationary distribution.} Given $(m^{(\ell)},a^{\ast,(\ell)},k^{\ast,(\ell)})$, solve the discrete stationary Kolmogorov equation~\eqref{eq:discrete_FP} to obtain the updated density $\tilde m^{(\ell+1)}$, normalized to integrate to one.
  \item \textbf{Aggregate objects and update.} Update the outside value $V^{U,(\ell+1)}$ and other aggregate objects implied by $\tilde m^{(\ell+1)}$, set
  \[
    m^{(\ell+1)} = (1-\omega)\,m^{(\ell)} + \omega\,\tilde m^{(\ell+1)},
  \]
  with a relaxation parameter $\omega \in (0,1]$ as in Appendix~\ref{appendix:numerical}, and repeat until convergence.
\end{enumerate}

We terminate the iteration when the maximum change in the stationary density and in the wage schedule between two successive iterates falls below small tolerances. Now, in practice the fixed-point iteration converges robustly from a wide range of initial guesses, in line with the uniqueness result in Theorem~\ref{thm:uniqueness}. Appendix~\ref{appendix:numerical} provides convergence diagnostics, grid-refinement and truncation-interval robustness checks, and a so-called eductive-learning interpretation of the iteration, following \citet{Gueant2021ContinuousTimeOptimalControl} and \citet{CarmonaLauriere2021DeepLearningMFG}.

\subsection{Calibration strategy}\label{subsec:calibration}

We calibrate the model to match key features of the joint distribution of wages, job durations, and job-to-job transitions in micro data, using \citet{BuhaiTeulings2014} as our main empirical benchmark. The parameter vector is grouped into three blocks corresponding to: (i) the stochastic structure of match productivity, (ii) the search and matching technology, and (iii) wage determination. Throughout, we fix the continuous-time discount rate $r$ and basic timing conventions to standard macro-labour values and do not estimate them.

Our empirical sample and baseline moments follow \citet{BuhaiTeulings2014}: we use a PSID extract for 1975-1992 and apply their selection criteria (white male household heads, more than 12 years of education, age below 60, non-union jobs, and standard cleaning of extreme wage changes). This ensures that our mean field extension is anchored in the same population and institutional environment as their benchmark partial-equilibrium model, which is what we want.

\paragraph{Match productivity parameters.}

In the benchmark model of Section~\ref{sec:environment}, inside and outside productivities follow correlated Brownian motions with constant coefficients,
\[  
  dP_t = \mu_P\,dt + \sigma_P\,dB^P_t,
  \qquad
  dR_t = \mu_R\,dt + \sigma_R\,dB^R_t,
\]
with correlation $\rho$ between $(B^P_t,B^R_t)$. The surplus $Z_t = P_t - R_t$ then follows the diffusion in~\eqref{eq:surplus_process}. \citet{BuhaiTeulings2014} show that the distribution of completed job tenures for a worker with fixed characteristics is fully characterized by two parameters, denoted $\pi$ and $\varphi$, that parameterize the inverse-Gaussian first-passage-time distribution of $Z_t$ to the separation boundary.\footnote{In their notation, $\varphi$ is a scale parameter that we work with in logs. The mean values of $(\pi,\ln\varphi)$ can be interpreted as smooth functions of the drift and volatility of the surplus process.} 

Given $(\mu_P,\mu_R,\sigma_P,\sigma_R,\rho)$ and the initial surplus gap $z_0$, the first hitting time of $Z_t$ to the separation threshold admits a closed-form inverse-Gaussian distribution whose parameters $(\pi(\theta),\varphi(\theta))$ are smooth functions of the diffusion parameters $\theta = (\mu_P,\mu_R,\sigma_P,\sigma_R,\rho,z_0)$. We exploit this structure in two ways. First, we treat the ``large-sample'' intercepts for a worker with average experience at job start, i.e. $\pi \approx 0.14$ and $\ln \varphi \approx -0.20$, reported in Table~2 of \citet{BuhaiTeulings2014} as hard tenure targets and choose $\theta$ so that $(\pi(\theta),\ln\varphi(\theta))$ reproduce these values, implying that roughly 10\% of jobs never end before retirement.
Second, we require the model to match the \emph{long-run} shape of the implied separation hazard over tenure, in particular the slow decline for long-tenure jobs and the implied defective mass of very long spells.
Thus the diffusion parameters are pinned down by the tail behaviour of the tenure distribution and the long-run decline of the hazard, while the search-and-matching block below is disciplined by the level and local slope of the hazard around its early-tenure peak together with job-finding and job-to-job transition rates, so that the moments used for the two blocks do not overlap.

For numerical consistency with the mean field block, we implement the inverse-Gaussian hitting-time structure by simulating the diffusion $Z_t$ at a monthly step (a yearly step does not change the qualitative interpretation of the upshots) and computing tenure-distribution moments from many simulated job spells. At the calibration frequencies we consider, this simulation-based implementation is numerically equivalent to the inverse-Gaussian formulas and allows us to reuse the same Brownian block when we embed the diffusion into the full mean field environment.

\paragraph{Search and matching parameters.}

The search cost function $c(a)$ and the arrival rate $\lambda(a,m)$ govern the trade-off between search effort and job mobility. We work with a convex search cost of the form
\[
  c(a) = \kappa \frac{a^{1+\eta}}{1+\eta},
\]
and with an arrival rate that is approximately linear in $a$ at the calibrated search intensities. In the stationary mean field environment, $\lambda(a,m)$ depends on $m$ through low-dimensional aggregates such as employment and mean surplus, capturing how market tightness varies with the occupancy distribution while keeping the calibration tactable.

The parameters $(\kappa,\eta)$ and the level and slope of $\lambda(a,m)$ are chosen so that the stationary equilibrium matches (i) job-finding and job-to-job transition rates for the PSID sample and (ii) the empirical distribution of job durations, in particular the level and slope of the tenure hazard around its peak. Together with the productivity parameters, these moments discipline how quickly workers climb the endogenous job ladder and how often they reach the free boundary $z^\ast(m)$.

\paragraph{Wage-setting parameters.}

To keep the numerical problem tractable while preserving key economic feedbacks, we restrict wage policies to the linear sharing rule in~\eqref{eq:wage_sharing_rule}, which assigns a constant fraction of the match surplus to the worker:
\[
  w_t = \alpha P_t + (1-\alpha)R_t,
  \qquad \alpha \in (0,1).
\]
In \citet{BuhaiTeulings2014}, wages are characterized by five reduced-form parameters: the product $\sigma \equiv \beta \sigma_P$ of bargaining power and the volatility of inside productivity, the parameter $\gamma$ governing the loading of outside opportunities in the wage equation, the drift $\mu_0$ and variance $\sigma_z^2$ of permanent surplus shocks, and the variance $\sigma_u^2$ of transitory shocks. These are estimated by FGNLS\footnote{FGNLS stands for feasible generalized nonlinear least squares.} from wage-change regressions conditional on the tenure-distribution parameters.

In our implementation we take the empirical loading on outside opportunities as given and do not re-estimate it. Specifically, we fix $\gamma$ to its ``all stayers + movers'' estimate in Table~3 of \citet{BuhaiTeulings2014}, which is about $0.73$ at annual frequency, and interpret this coefficient as the loading on $R_t$ in the sharing rule. We then set $\alpha = 1-\gamma$ so that
\[
  w_t = (1-\gamma)P_t + \gamma R_t,
\]
nesting their reduced-form wage equation in our sharing rule. Thus neither $\gamma$ nor $\alpha$ is chosen to fit additional wage or dispersion moments; instead, the pair $(\gamma,\alpha)$ is imposed directly from the FGNLS estimate and treated as externally calibrated. Likewise, we fix the variance of transitory shocks at $\sigma_u^2 \approx 0.0046$, matching the FGNLS estimate in \citet{BuhaiTeulings2014}.

Given $(\gamma,\alpha,\sigma_u^2)$ fixed in this way, the remaining wage-related objects are pinned down by the diffusion parameters $(\mu_P,\mu_R,\sigma_P,\sigma_R,\rho)$ and by the sharing rule. In particular, the annual drift of wages, $\mu_0$, and the variance of permanent wage innovations, $\sigma_z^2$, are functions of $\theta$ and $(\alpha,\gamma)$. We use $\mu_0$ as a calibration target but, to avoid matching it by construction, we treat the small \citet{BuhaiTeulings2014} estimate of $\sigma_z^2$ as a diagnostic rather than a hard target. In the baseline calibration, the implied $\mu_0$ lies close to their point estimate, while the model generates a somewhat larger permanent variance $\sigma_z^2$ once we impose the tenure and wage-growth profiles exactly; we return to this empirical wedge in Section~\ref{subsec:targets}.

Beyond these reduced-form parameters, we exploit the deterministic-versus-random tenure contributions to wage growth derived in \citet{BuhaiTeulings2014}, using the same PSID extract. For a worker with average characteristics, that decomposition delivers target values for deterministic and random components of wage growth at 10, 20, and 30 years of experience. We replicate their construction within the model on the data, using the same inverse-Gaussian tenure structure and wage-sharing rule, and require the simulated deterministic and random contributions to track these targets closely.

\paragraph{Estimation procedure and link to the benchmark.}

Let $\theta$ collect the free parameters in the three blocks above, so that $\gamma$, $\alpha$, and $\sigma_u^2$ are treated as externally calibrated and are not components of $\theta$. For a given $\theta$, the numerical scheme in Section~\ref{subsec:numerics} delivers the stationary equilibrium and the implied vector of model moments $m(\theta)$. The empirical moment vector $\hat m$ combines: (i) the tenure-distribution parameters $(\pi,\ln\varphi)$ and associated hazard profile from \citet{BuhaiTeulings2014}; (ii) wage-dynamics targets, namely $\mu_0$ and the deterministic and random tenure contributions at 10, 20, and 30 years of experience; and (iii) additional moments on job durations, job-to-job transitions, and cross-sectional wage dispersion computed from the underlying PSID sample. The FGNLS estimates of $\gamma$ and $\sigma_u^2$ from \citet{BuhaiTeulings2014} are taken as inputs when we specify the sharing rule and the transitory shock variance, so they are matched by construction and do not enter the minimization problem. We choose $\theta$ to minimize a quadratic distance between $m(\theta)$ and $\hat m$,
\[
  \theta^\ast
  \in
  \argmin_{\theta}
  \bigl(m(\theta) - \hat m\bigr)' W
  \bigl(m(\theta) - \hat m\bigr),
\]
where $W$ is a positive semi-definite weighting matrix. In the baseline we use a diagonal $W$ with entries given by the inverse of the estimated variance of each empirical moment, putting particular weight on the tenure block and the deterministic/random tenure contributions. Appendix~\ref{appendix:numerical} reports robustness to alternative weightings and starting values and lists the calibrated parameter values. Table~\ref{tab:calibration} below summarizes the mapping from parameter blocks to moments.

This strategy nests the partial-equilibrium model of \citet{BuhaiTeulings2014} as a limiting case. If we fix the outside option as an exogenous Brownian motion and hold the wage-sharing rule constant, the economy reduces to their benchmark environment. The full mean field equilibrium introduces two additional sources of dispersion relative to this benchmark: (i) endogenous job ladders generated by on-the-job search and equilibrium wage setting, and (ii) feedback from the cross-sectional distribution of surplus into workers' outside options. Comparing the partial-equilibrium and full mean field versions in Section~\ref{sec:results} allows us to quantify how much of observed wage inequality can be traced to each channel.

\begin{table}[htbp]
  \centering
  \small
  \caption{Calibration Blocks and Targeted Moments}
  \label{tab:calibration}
  \begin{tabular}{p{0.27\textwidth} p{0.33\textwidth} p{0.33\textwidth}}
    \toprule
    Parameter block & Description & Main moments / sources \\
    \midrule
    \multicolumn{3}{l}{\textit{Match productivity and separation}} \\
    \midrule
    $(\mu_P,\mu_R,\sigma_P,\sigma_R,\rho)$
      & Drift, volatility, and correlation of inside and outside productivities
      & Tenure-distribution parameters $(\pi,\ln\varphi)$, tenure-hazard profile, and fraction of jobs that never end (Table~2 in \citealt{BuhaiTeulings2014}) \\
    $z_0$
      & Surplus of newly formed matches
      & Short-tenure separation rates and the lower tail of the wage and tenure distributions \\
    \midrule
    \multicolumn{3}{l}{\textit{Search and matching}} \\
    \midrule
    $(\kappa,\eta)$
      & Level and curvature of convex search cost $c(a)=\kappa a^{1+\eta}/(1+\eta)$
      & Job-finding and job-to-job transition rates, height and local slope of the tenure hazard around its peak \\
    Level and slope of $\lambda(a,m)$
      & Offer-arrival rate as a function of search intensity and market tightness
      & Job-finding and job-to-job transition rates by tenure, speed of progression up the job ladder \\
    \midrule
    \multicolumn{3}{l}{\textit{Wage-setting and shocks}} \\
    \midrule
    $\gamma$ and $\alpha=1-\gamma$
      & Externally calibrated loading of outside opportunities and worker share in surplus in $w_t=(1-\gamma)P_t+\gamma R_t$; $\gamma$ fixed to the ``all stayers + movers'' FGNLS estimate in Table~3 of \citealt{BuhaiTeulings2014}, $\alpha$ implied
      & FGNLS estimate of $\gamma$ (Table~3 in \citealt{BuhaiTeulings2014}); wage-tenure profile and cross-sectional wage dispersion used as over-identifying (validation) moments, not to choose $\gamma$ or $\alpha$ \\
    $\sigma_u^2$
      & Variance of transitory wage shocks
      & FGNLS estimate of $\sigma_u^2$ and short-horizon wage-growth dispersion \\
    $(\mu_0,\sigma_z^2)$ (model-implied)
      & Drift and variance of permanent wage innovations implied by the Brownian block; functions of $(\mu_P,\mu_R,\sigma_P,\sigma_R,\rho,z_0)$ and the sharing rule $(\gamma,\alpha)$ (not free calibration parameters)
      & Average wage-growth rate (targeted) and deterministic vs random tenure contributions to wage growth at 10, 20, and 30 years of experience; $\sigma_z^2$ used as a diagnostic rather than a hard target \\
    \bottomrule
  \end{tabular}
  \vspace{0.3em}
  {\footnotesize\emph{Notes:} The table summarizes the main parameter blocks and empirical moments used to discipline them. 
  Rows for $\gamma$, $\alpha$, and $\sigma_u^2$ are externally calibrated from \citet{BuhaiTeulings2014} and are not re-estimated. 
  Last row lists model-implied statistics $(\mu_0,\sigma_z^2)$ rather than free parameters: $\mu_0$ enters the calibration criterion as target, whereas $\sigma_z^2$ is reported as diagnostic. 
  BT refers to \citet{BuhaiTeulings2014}.}
\end{table}

\subsection{Targets and fit}\label{subsec:targets}

The calibrated model is disciplined by a set of empirical moments that summarize the joint distribution of wages, job durations, and mobility. Table~\ref{tab:calibration} summarizes the three parameter blocks and the main moments that identify them. We group the targets into three categories. The model is deliberately overidentified: several distributional features of wages and mobility are not targeted and serve as out-of-sample checks on the equilibrium structure.

\paragraph{Job durations and hazard rates.}

First, we target the tenure-distribution parameters $(\pi,\varphi)$ and the associated hazard profile estimated by \citet{BuhaiTeulings2014}. Specifically, the model is required to replicate: (i) the mean values of $(\pi,\ln \varphi)$ for workers with average experience at job start; (ii) the implied fraction of jobs that effectively never end before retirement; and (iii) the shape of the separation hazard over tenure, including the height and location of the peak and the slow decline for long-tenure jobs. These moments are directly informative about the drift and volatility of match surplus and about the free-boundary threshold $z^\ast$ in Section~\ref{sec:equilibriumCharacterization}. They also discipline the re-entry distribution of new matches and thus the lower tail of the stationary surplus distribution. In the baseline calibration, the model reproduces $(\pi,\ln\varphi)$ essentialy exactly and closely matches the empirical hazard profile, including the long right tail of completed job durations.

\paragraph{Wage growth over tenure.}

Second, we target the dynamic properties of wages. We match: (i) the average profile of log wages over tenure; (ii) the variance of annual wage growth at different job ages; and (iii) the FGNLS estimates from the wage-change regressions in \citet{BuhaiTeulings2014} for the drift $\mu_0$ and for the decomposition of wage innovations into permanent and transitory components. The loading $\gamma$ on outside opportunities and the variance of transitory shocks $\sigma_u^2$ are taken directly from their FGNLS estimates and imposed in the sharing rule and the shock structure, so they are matched by construction rather than used to choose parameters. In the model, wage growth within a job arises from the stochastic evolution of match surplus and, in the full equilibrium, from job-to-job moves up the job ladder. Matching the remaining wage-growth moments disciplines the diffusion parameters of the surplus process and the search and matching parameters that govern mobility.

We also require the model to reproduce the deterministic and random tenure contributions to wage growth at 10, 20, and 30 years of experience constructed using \citet{BuhaiTeulings2014}. These moments summarize how much of the wage-tenure gradient at different horizons is due to deterministic accumulation versus selective retention of high-outside-option workers. In our calibration the model matches these six decomposition moments very tightly. The drift of log wages $\mu_0$ in the calibrated economy is slightly below but close to the \citet{BuhaiTeulings2014} point estimate and well within its empirical uncertainty.

The one systematic discrepancy concerns the variance of permanent surplus shocks, $\sigma_z^2$. The structural mapping from the calibrated Brownian diffusion for $(P_t,R_t)$ and the inverse-Gaussian tenure block implies an annual variance of permanent innovations that is modestly larger than the very small value ($\sigma_z^2 \approx 10^{-4}$) reported in Table~3 of \citet{BuhaiTeulings2014}. We experimented with reweighting this moment in the quadratic criterion. Forcing the model to match the reported $\sigma_z^2$ more closely requires substantialy smaller surplus volatility and deteriorates the fit to the tenure hazard and to the deterministic/random tenure decomposition. We therefore retain the diffusion parameters that match the tenure and wage-growth moments and treat the difference in $\sigma_z^2$ as an informative diagnostic: the calibrated mean field model attributes somewhat more of long-run wage risk to diffusive match shocks than the reduced-form FGNLS decomposition suggests.

\paragraph{Cross-sectional wage dispersion.}

Finally, we target the cross-sectional dispersion of wages in the stationary distribution, both unconditionally and conditional on tenure. These moments are central for our decomposition of wage inequality. In the model, dispersion reflects three forces: (i) heterogeneity in match states due to stochastic selection into and out of low-surplus matches; (ii) differences in the time workers spend at different surplus levels before reaching the separation boundary; and (iii) the distribution of equilibrium wages associated with those surplus states under the sharing rule. Given the externally calibrated sharing rule $(\gamma,\alpha)$, matching the overall dispersion and its decomposition by tenure provides discipline for the diffusion and search-and-matching parameters and serves as an over-identifying check on the wage-sharing specification and on the mean field feedback through the outside option. It also helps separate the contribution of equilibrium search and wage setting from the selection mechanism emphasized by \citet{BuhaiTeulings2014}.

Overall, the calibrated model reproduces the inverse-Gaussian tenure block of \citet{BuhaiTeulings2014}, the loading of wages on outside opportunities, the variance of transitory wage shocks, and the deterministic/random decomposition of wage growth very closely, while moderately overstating the variance of permanent surplus shocks. We view this pattern as informative about how the partial-equilibrium Brownian block of \citet{BuhaiTeulings2014} extends to the full mean field equilibrium with search and wage feedback. Section~\ref{sec:results} reports the quantitative performance of the calibrated model, compares the fit of the partial-equilibrium and mean field versions, and presents the decomposition of wage inequality implied by the stationary mean field equilibrium.

\section{Equilibrium Wage Dispersion and On-the-Job Search}
\label{sec:results}

This section characterizes the stationary mean field equilibrium implied by the benchmark calibration and quantifies its implications for wage dispersion, job mobility, and tenure profiles. Unless otherwise noted, all parameter values are given by the benchmark calibration summarized in Table~\ref{tab:calibration}. We first document the equilibrium free boundary, the stationary surplus distribution, and the associated wage and separation patterns. We then construct a sequence of internally consistent counterfactual economies that gradually switch on selection, on-the-job search, and wage policies with mean field feedback, and use them to decompose the variance of log wages into selection, search-and-job-ladder, and wage-policy-and-feedback components. Finally, we perform policy counterfactuals that perturb firing costs, search incentives, and the volatility of match productivity and trace their effects through the free boundary, the stationary surplus distribution, and the wage distribution.

\subsection{Benchmark stationary equilibrium}
\label{subsec:benchmark_equilibrium}

We start from the benchmark calibration in Section~\ref{sec:quantitative}, which pins down the diffusion parameters of match productivities by matching the inverse-Gaussian tenure parameters estimated by \citet{BuhaiTeulings2014} and disciplines the remaining parameters to match standard moments on job durations, tenure-dependent separation hazards, wage growth, and job-to-job mobility. In particular, the parameters for the inverse-Gaussian hitting-time distribution are chosen to replicate the early peak and subsequent decline of the tenure hazard and the fact that roughly 10\% of jobs never end before retirement.\footnote{See Table~2 and Section~3 of \citet{BuhaiTeulings2014}, where the structural estimates imply an early-tenure hazard peak and a nontrivial mass of very long job spells; for mean covariate values, the implied fraction of jobs that never end is about 10\%.} The benchmark calibration also preserves their finding that the concavity of wage-tenure profiles is largely due to selection in outside options rather than deterministic wage growth.\footnote{\citet{BuhaiTeulings2014} show that selection on realized outside productivities can account for the bulk of the measured returns to tenure, with roughly four-fifths of the raw wage-tenure gradient attributed to selectivity rather than genuine productivity growth.}

Given these primitives, the stationary mean field equilibrium
\[
  \bigl(
    m^\ast, z^\ast(m^\ast),
    a^\ast(\cdot,m^\ast),
    w^\ast(\cdot,m^\ast)
  \bigr)
\]
solves the coupled HJB-Kolmogorov system in Sections~\ref{sec:MFG} and~\ref{sec:equilibriumCharacterization}, with existence and uniqueness guaranteed by the conditions in Section~\ref{sec:existenceUniqueness} and Appendix~\ref{appendix:proofs}. Separation is governed by the free-boundary rule characterized in Proposition~\ref{prop:free_boundary}: matches continue while the surplus stays above $z^\ast(m^\ast)$ and separate upon hitting this threshold, and the stationary distribution $m^\ast$ solves the Kolmogorov equation associated with the optimal search policy and wage schedule.

Figure~\ref{fig:B1_value_function} plots the worker value function $V^W(z;m^\ast)$ together with the free boundary. The separation threshold $z^\ast(m^\ast)$ lies strictly inside the support of the surplus grid, well away from the numerical boundaries, and the kink in $V^W$ at $z^\ast(m^\ast)$ illustrates the value-matching and smooth-pasting conditions implied by Proposition~\ref{prop:free_boundary}. Matches that start far below $z^\ast(m^\ast)$ separates quickly and moves either into unemployment or into alternative jobs with higher expected surplus, while matches that start well above $z^\ast(m^\ast)$ typically survive for long periods unless hit by a sequence of adverse idiosyncratic shocks. The implied tenure-dependent separation hazard in Figure~\ref{fig:B4_tenure_hazard} peaks early in the job and declines with tenure, with a non-negligible fraction of matches surviving for many years. This reproduces both the selection mechanism and the long right tail of job durations emphasized by \citet{BuhaiTeulings2014} within the richer equilibrium environment and is consistent with the empirical pattern that separation risks are highest in the first years of a job and much lower for long-tenure employment relationships.

The stationary surplus distribution $m^\ast$ is displayed in Figure~\ref{fig:B2_surplus_density}. Mass is concentrated in a region of positive surplus, with a thin but economically meaningful left tail extending down to the free boundary. Decomposing the stationary density by job age shows that new jobs are concentrated near the threshold, while high-tenure jobs are disproportionately located in the upper tail of the surplus distribution. This diffusion-and-selection pattern is the equilibrium job ladder: idiosyncratic shocks, on-the-job search, and the free-boundary rule gradually reallocate workers away from low-surplus matches toward high-surplus ones.

Under the wage-sharing environment of Section~\ref{subsec:benchmark_wage}, and in particular the stationary rule under which the outside option entering wages is the aggregate constant $V^U(m^\ast)$, equilibrium wages are an affine function of surplus,
\[
  w(z,m^\ast) \;=\; V^U(m^\ast) + \beta z
\]
for some $\beta \in (0,1)$. Hence the surplus distribution is mapped monotonically into wages. Figure~\ref{fig:B3_wage_distribution} shows that low-tenure workers are concentrated at low and medium surplus levels and thus earn relatively compressed wages, while the wage distribution becomes more dispersed and right-skewed with tenure as surplus diffuses and job-to-job transitions occur.

Figure~\ref{fig:B5_wage_tenure} documents that the calibrated model reproduces the both concave average wage-tenure profile and the deterministic-versus-random contributions to wage growth at 10, 20, and 30 years emphasized by \citet{BuhaiTeulings2014}. In particular, most of the apparent returns to tenure are driven by selection on outside options along job spells rather than by deterministic wage growth within a given match. The convergence paths in Figure~\ref{fig:B6_convergence_paths} confirm that the fixed-point algorithm in Section~\ref{sec:quantitative} is numerically stable and robust to the initial guess for the surplus distribution and to reasonable changes in the grid and truncation interval.

Relative to a partial-equilibrium benchmark in which the outside option follows an exogenous Brownian motion, the stationary mean field equilibrium generates additional structure in both the tenure hazard and the wage distribution. Because the outside option $V^U(m^\ast)$ depends on the cross-sectional distribution of alternative matches and on equilibrium search and wage policies, changes in the surplus state of other jobs feed back into separation and search decisions in any given match. This fixed-point structure is central for the decomposition and policy experiments below and ensures that the dispersion measure
\[
  D(m^\ast,w^\ast)
  = \Var_{\mu_w(\cdot;m^\ast)}(w)
\]
in~\eqref{eq:dispersion_def} is indeed an equilibrium object.

\subsection{Decomposing equilibrium wage dispersion}
\label{subsec:decomposition}

We now use the model to attribute equilibrium wage dispersion to three conceptually distinct channels:
\begin{enumerate}
  \item stochastic match dynamics and selection, given the calibrated diffusion and free boundary;
  \item strategic on-the-job search and the induced job ladder;
  \item equilibrium wage policies and the mean field feedback of the cross-sectional distribution into outside options.
\end{enumerate}

Let $w$ denote the (log) wage of a randomly drawn worker in the stationary mean field equilibrium and let $\Var(w)$ be its variance, as in~\eqref{eq:dispersion_def}. We construct a sequence of counterfactual economies that share the same primitives for match productivity, search technology, and wage determination as the benchmark calibration, but selectively shut down layers of strategic interaction and feedback. Each counterfactual is solved as a stationary equilibrium of the finite-state approximation described in Section~\ref{subsec:numerics} and Appendix~\ref{appendix:numerical}: in particular, for each counterfactual we recompute the stationary distribution implied by the relevant HJB-Kolmogorov system. The decomposition is thus structural, and internal to the MFG rather than based on reduced-form regressions.

\paragraph{Selection component.}

The first counterfactual isolates the contribution of stochastic match dynamics and selection in the spirit of \citet{BuhaiTeulings2014}. We consider an economy in which the surplus process follows the same calibrated diffusion as in the benchmark and separation occurs at the same free boundary $z^\ast(m^\ast)$, but the outside option is fixed and independent of the cross-sectional distribution. Wages are given by the same parametric sharing rule applied to this exogenous outside option. The stationary wage distribution in this economy is driven solely by diffusion and selection along job spells, holding search behavior and outside options fixed. Let
\[
  \Var_{\textnormal{sel}}(w)
\]
denote the variance of wages (in log units) in this selection-only economy; we interpret it as the baseline contribution of diffusive match dynamics and efficient separation. By construction, $\Var_{\textnormal{sel}}(w)$ and the associated wage-tenure profiles align closely with the selection-driven patterns quantified by \citet{BuhaiTeulings2014}.

\paragraph{Search and job-ladder component.}

The second counterfactual switches on strategic on-the-job search and the resulting job ladder while keeping the outside option exogenous. Starting from the selection benchmark, workers now choose search intensities optimally and receive offers according to the calibrated $\lambda(a,m)$, with acceptance decisions governed by the worker HJB. Wages remain tied to the exogenous outside option, so the mean field feedback through $V^U$ is still shut down.

In this environment, search behavior and realized offers determine how quickly workers climb the job ladder and how long they remain at different surplus levels. High-search workers reach high-surplus states earlier and are more likely to experience sequences of favorable job-to-job transitions. Let
\[
  \Var_{\textnormal{sel+search}}(w)
\]
denote the variance of wages (in log units) in this economy. We attribute
\[
  \Var_{\textnormal{sel+search}}(w)
  \;-\;
  \Var_{\textnormal{sel}}(w)
\]
to the contribution of equilibrium on-the-job search and the implied job ladder, holding the mapping from surplus to wages fixed.

\paragraph{Wage-policy and mean field feedback component.}

The third step moves from the search benchmark to the full mean field equilibrium. Firms now set wages optimally in response to the stationary distribution of surplus and to workers' search and stopping behavior, and the outside option $V^U(m)$ is determined endogenously by the distribution of alternative matches and their wages. High-surplus jobs thus deliver both higher current wages and more valuable outside options, and the cross-sectional wage distribution feeds back into separation and search decisions, firm entry, and the structure of the job ladder.

Let
\[
  \Var_{\textnormal{full}}(w)
\]
denote the variance of wages (in log units) in this fully endogenous equilibrium. We interpret
\[
  \Var_{\textnormal{full}}(w)
  \;-\;
  \Var_{\textnormal{sel+search}}(w)
\]
as the contribution of equilibrium wage policies and mean field feedback. Because this is a path decomposition, i.e.\ moving from selection only, to selection plus search, to the full mean field equilibrium, the components are not mechanically additive: they depend on the ordering of the layers that are switched on. We adopt this ordering because it mirrors the development from the partial-equilibrium stochastic productivity model in \citet{BuhaiTeulings2014} to the full mean field environment developed here, and because it respects the equilibrium dispersion measure $D(m^\ast,w^\ast)$ in~\eqref{eq:dispersion_def}.

\paragraph{Quantitative decomposition by tenure.}

Figure~\ref{fig:B7_variance_decomp} reports the decomposition of $\Var(w)$ by tenure bin. Circles show $\Var_{\textnormal{sel}}(w)$, squares show $\Var_{\textnormal{sel+search}}(w)$, and triangles show $\Var_{\textnormal{full}}(w)$, each computed from the stationary distribution restricted to workers whose tenure lies in a given bin.

At very short tenures (around one year), selection by itself generates a variance of wages of about $0.0090$, and the full equilibrium delivers $\Var_{\textnormal{full}}(w) \approx 0.0141$. By contrast, in the selection-plus-search economy the first-bin variance is $\Var_{\textnormal{sel+search}}(w) = 0.000813$, i.e.\ less than one-tenth of the selection-only variance and only about $6\%$ of the full-equilibrium variance. This is not a numerical artefact but reflects how, starting from a common entry surplus $z_0$, optimal on-the-job search very rapidly moves workers who draw the worst early surplus realizations either into unemployment or into slightly better matches, so that survivors after one year are tightly clustered in the middle of the surplus distribution. Given the calibrated wage-sharing rule, which maps this lower part of the surplus support into a relatively narrow band of wages (in log units), the resulting cross-sectional dispersion in $w$ is correspondingly tiny. The small value of $\Var_{\textnormal{sel+search}}(w)$ at one year also reflects our deliberate abstraction from ex ante worker and firm heterogeneity: allowing a nondegenerate entry distribution or permanent types would mechanically raise short-tenure dispersion, but would not overturn the qualitative pattern that on-the-job search compresses within-cohort wage differences early in the job while equilibrium wage policies and the endogenous outside option reintroduce dispersion at the bottom of the job ladder. In this sense, the selection-plus-search variance at one year is best read as a lower bound on the purely dynamic contribution of the job ladder, with almost all observed short-tenure wage dispersion in the full model coming from the wage-policy-and-feedback component.

The picture changes at longer tenures. Under selection alone, $\Var_{\textnormal{sel}}(w)$ is roughly constant across tenure bins, remaining around $0.009$-$0.010$ even at 30 years. Remarkably, the mean field equilibrium compresses much of this dispersion. At 7-8 years of tenure, the full-equilibrium variance is already slightly below the selection-only benchmark, and by 15 and 30 years it is less than half of $\Var_{\textnormal{sel}}(w)$. The intermediate selection-plus-search economy continues to exhibit very little within-tenure dispersion, so most of the residual variance in the full equilibrium at long tenures can be traced back to underlying selection. In this sense, the calibrated model confirms that selection in outside options is the dominant force behind long-run cross-sectional wage differences, but also shows that equilibrium search and wage setting substantially attenuate the dispersion that would arise in a purely selection-driven environment.

Taken together, the three counterfactuals decompose the benchmark variance $\Var_{\textnormal{full}}(w)$ into a selection component, a search-and-job-ladder component, and a wage-policy-and-feedback component. Selection accounts for a sizeable share of potential log-wage dispersion at all tenures, especially at longer tenures. On-the-job search and the job ladder are strongly equalizing within tenure cohorts, particularly for older workers, while the wage-policy and mean field feedback channel amplifies dispersion at short tenures and only partially offsets the equalizing role of search at longer tenures. These patterns matter for the interpretation of the policy experiments below and for distinguishing the sources of ``good'' inequality associated with efficient job reallocation from the dispersion generated by persistent low-surplus matches.

\subsection{Policy experiments}
\label{subsec:policy_experiments}

We finally use the calibrated model to study how policies that change firing costs, search incentives, and surplus volatility affect wage dispersion and job mobility. Each counterfactual is implemented by perturbing a primitive parameter, recomputing the stationary mean field equilibrium via the fixed-point algorithm in Section~\ref{sec:quantitative}, and comparing the resulting free boundary $z^\ast$, stationary distribution $m^\ast$, and wage distribution to their benchmark counterparts. The comparative statics in Proposition~\ref{prop:comparative_statics} provide qualitative guidance for the direction of these effects. Figure~\ref{fig:B8_policy_experiments} summarizes the quantitative results for the variance of log wages and the aggregate job-to-job transition rate.

\paragraph{Firing costs and separation taxes.}

We first introduce a firing cost or separation tax that must be paid by the firm when a job is destroyed. An increase in this cost lowers the net value of separation relative to continuation. In the stationary equilibrium, the free boundary $z^\ast(m^\ast)$ shifts downward: firms and workers tolerate lower-surplus matches before separating, consistent with the monotone comparative statics for the free boundary in Proposition~\ref{prop:comparative_statics}. More low-surplus matches are retained, and the separation hazard declines at all tenures, particularly early in the job spell.

The wage distribution responds along two margins. First, more low-surplus, low-wage jobs remain active, which increases the mass in the lower tail. Second, lower turnover slows progression up the job ladder, dampening wage growth at the top and reducing the thickness of the upper tail. In our calibration, moderate increases in firing costs (from zero to 10-20\% of the separation value) have only a small effect on the variance of log wages but noticeably reduce job-to-job mobility. Intuitively, these firing costs generate a mild thickening and downward extension of the left tail: some very low-surplus matches that would have separated at the benchmark now survive at wages just below the original cutoff, while the distribution above the median and especially the upper tail is almost unchanged. Because the additional mass is quantitatively small and concentrated very close to the previous lower tail, the overall variance of log wages moves very little. For larger firing costs, the decline in mobility becomes substantial and the variance of log wages increases: the wage distribution becomes more polarized, with a larger mass of workers trapped in low-surplus matches and a smaller group of high-surplus jobs that survive for very long periods. The model therefore predicts that strong employment protection can generate more inequality of the ``bad'' type, by making it harder for low-surplus workers to escape weak matches.

\paragraph{Search subsidies and outside options.}

We next consider policies that reduce the effective marginal cost of search, such as subsidies to search effort or instruments that make outside options more attractive. In the model, we implement these policies as a propotional reduction in the search-cost parameter $\kappa$, holding the rest of the environment fixed. A lower effective search cost raises the optimal search intensity at a given surplus level and increases the arrival rate of outside offers. At the same time, more attractive outside options make separation from low-surplus jobs more appealing, which shifts the free boundary $z^\ast(m^\ast)$ upward.

The net effect is an economy with more frequent outside offers, a higher reservation surplus, and a faster reallocation of workers away from the bottom of the job ladder. In the stationary equilibrium, mass shifts from low-surplus states toward the middle and upper parts of the surplus distribution, and average wages rise. However, the impact on job-to-job mobility is not necessarily positive. In the benchmark calibration, search subsidies markedly increase the variance of log wages, and that by a roughly 50-70\% for the subsidies we consider, while the aggregate job-to-job transition rate falls relative to the baseline. Intuitively, cheaper search allows workers in good jobs to be more selective: they search more, but accept only offers that are substantially better than their current position. As a result, many additional offers are rejected, and most of the extra dispersion arises from a thicker right tail of high-surplus, high-wage jobs rather than from more frequent moves. In this sense, search subsidies generate more inequality, but much of it reflects workers climbing higher up the job ladder and may be interpreted as an efficient form of dispersion.

\paragraph{Volatility of match productivity.}

Finally, we examine changes in the volatility of the surplus diffusion, which can be interpreted as proxying changes in the turbulence of the labour market. Higher volatility increases the probability of both large positive and large negative shocks. The comparative statics in Proposition~\ref{prop:comparative_statics} imply that, ceteris paribus, an increase in volatility raises the option value of continuation and shifts the free boundary $z^\ast(m^\ast)$ downward, as workers and firms are willing to accept somewhat lower current surplus in the hope of future favourable shocks. In a static sense this corresponds to weaker selection, since some low-surplus matches that would have separated at the benchmark volatility now survive. At the same time, larger shocks also mean that when a match is hit by a sufficiently bad realization it reaches the lower boundary much faster, so separations become more concentrated at very short tenures even though the threshold itself is lower.

Quantitatively, within the range of volatility multipliers we consider, the effects on wage dispersion and job-to-job mobility are modest. The stationary surplus distribution becomes slightly more spread out as volatility rises, with a thicker right tail and a somewhat more front-loaded separation hazard at short tenures, but these changes translate into only small variations in the variance of log wages and in the aggregate job-to-job rate. The two opposing forces, i.e. more scope for very bad draws to trigger early exit and more scope for very good draws to generate high-surplus jobs, largely offset each other in the calibrated economy. Conversely, reducing volatility compresses the distribution of realized surplus paths and yields smoother wage profiles over the life cycle, but does not overturn the overall decomposition of wage dispersion highlighted above. In this calibration, the level of surplus volatility is therefore second-order relative to firing costs and search incentives in shaping equilibrium wage dispersion and mobility.

\medskip

These policy experiments highlight how the three mechanisms emphasized by the model interact in shaping wage dispersion and mobility. Firing costs primarily affect the location of the free boundary and hence the prevalence and longevity of low-surplus matches. Policies that change outside options and search incentives primarily affect the speed and selectivity of movement along the job ladder and the thickness of the upper tail of the wage distribution. Changes in volatility primarily affect the strength of selection and the dispersion of match states, but play a comparatively smaller role for the range of perturbations we consider. We believe that the stationary MFG equilibrium provides a coherent framework for analyzing these forces jontly and for quantifying their contributions to observed wage dispersion, job-to-job mobility, and the tenure profile of separations.

\section{Discussion and Conclusion}\label{sec:conclusion}

This paper develops a stationary mean field equilibrium framework for wage dispersion and on-the-job search in which both outside options and the job ladder are generated endogenously by the cross-sectional distribution of match states. Building on the stochastic match-productivity model of \citet{BuhaiTeulings2014}, we embed diffusive match dynamics in a continuous-time mean field game of workers and firms cast in the canonical HJB / Kolmogorov form \`a la \citet{LasryLions2007}, \citet{CardaliaguetPorretta2020IntroMFG}, and \citet{CarmonaDelarue2018BookI}. Workers choose search effort and separation behaviour, firms choose wage policies as Markov functions of current match surplus, and the stationary distribution of surpluses and wages closes the model through the endogenous outside option. In economic terms, the mean field equilibrium notion coincides with a stationary competitive rational-expectations equilibrium in a large random-matching labour market.

The paper makes three contributions.

First, on the theoretical side, we characterize stationary mean field equilibria in a one-dimensional surplus state and show that separation in any such equilibrium is governed by a free-boundary rule: matches continue while surplus stays above a threshold $z^\ast(m)$, which depends on the stationary distribution $m$, and separate upon hitting this boundary. This structure extends the efficient stopping rule in the partial-equilibrium benchmark of \citet{BuhaiTeulings2014} to a setting in which outside options, search intensities, and wage policies are jointly determined in general equilibrium. Under Lipschitz, linear-growth, convexity, and the Lasry-Lions monotonicity conditions that align with the standing assumptions in the monotone MFG literature \citep[e.g.][]{LasryLions2007,CardaliaguetPorretta2020IntroMFG,CarmonaDelarue2018BookI,Ryzhik2020}, we establish existence of stationary equilibria and discuss uniqueness. In our one-dimensional diffusion setting, we exploit the free-boundary formulation to obtain smooth fit at the separation threshold and to derive well-behaved comparative statics of $z^\ast(m)$ with respect to firing costs, search incentives, and volatility, as summarized in Proposition~\ref{prop:comparative_statics}. These results place the stochastic productivity benchmark squarely inside the monotone mean field game framework and show that the Jovanovic-type efficient quitting rule survives, and is uniquely determined, in an environment with on-the-job search and endogenous outside options.

Second, on the quantitative side, we solve the stationary mean field game by combining a monotone finite-difference scheme for the worker and firm HJB equations with a conservative finite-volume discretization of the stationary Kolmogorov equation, in the spirit of \citet{AchdouCapuzzoDolcetta2010NumericsMFG}, \citet{AchdouEtAl2022Restud}, and the finite-state MFG literature \citep{GomesEtAl2010FiniteStateDiscrete,GomesEtAl2024FiniteStateContinuous}. The resulting discrete system has a natural interpretation as a finite-state mean field game with one state per surplus grid point and transition rates induced by the optimal policies and the separation rule. We calibrate the model to U.S.\ micro data using the diffusion parameters estimated by \citet{BuhaiTeulings2014} as discipline for match-productivity dynamics and match key moments on job durations, tenure-dependent separation hazards, wage-growth profiles, job-to-job mobility, and overall wage dispersion. The benchmark stationary equilibrium thus delivers an internally consistent joint distribution of match surpluses, wages, and search intensities that reproduces the main empirical patterns that motivated the original stochastic productivity model, while doing so in a fully specified equilibrium environment.

Third, the stationary mean field structure allows us to decompose equillibrium wage dispersion into three conceptually distinct components. The first is stochastic selection along job spells, already present in the partial-equilibrium benchmark and driven by the interaction between diffusive match dynamics and efficient separation at a free boundary. The second is the contribution of equilibrium on-the-job search and the implied job ladder, which governs how long workers remain at different points of the surplus distribution and how quickly they move to higher-surplus matches. The third is the contribution of firms' wage policies and the mean field feedback of the cross-sectional distribution into outside options, which maps surplus dispersion into wage dispersion and strengthens the value of high-wage, high-surplus jobs. Our quantitative results in Section~\ref{sec:results} show that equilibrium search and wage setting not only amplify the dispersion generated by pure selection but also reshape its tenure profile. At very short tenures, selection by itself already generates a non-trivial variance of log wages, optimal on-the-job search strongly equalizes wages within tenure cohorts, and endogenous wage policies together with the mean field feedback of outside options re-amplify dispersion at the bottom of the job ladder, so that the fully endogenous equilibrium features \emph{more} short-tenure wage dispersion than a purely selection-driven environment. At longer tenures, the ranking reverses: relative to selection alone, equilibrium search and wage setting compress dispersion within tenure bins, so that the remaining long-run inequality can be traced mainly to cumulative match-specific shocks and the selection they induce.

The policy experiments in Section~\ref{sec:results} illustrate how firing costs, search subsidies, and changes in the volatility of match productivity affect the separation threshold, job mobility, and wage inequality once these three mechanisms operate jointly. Increasing firing costs lowers the net payoff from separation, shifts the free boundary $z^\ast(m^\ast)$ downward, and reduces job mobility at all tenures; this keeps more low-surplus matches alive, raises the mass in the lower tail of the wage distribution, and slows progression up the job ladder. For modest firing costs, the impact on the variance of log wages is limited, but very high firing costs eventually increase wage dispersion by creating a thicker left tail of workers trapped in low-surplus, low-wage jobs. Search subsidies and more generous outside options operate in the opposite direction: they raise the value of separation and the return to search effort, shift the free boundary upward, stepen the job ladder, and reallocate mass toward higher-surplus matches. In our calibration these policies substantially increase the variance of log wages, primarily by thickening the right tail as more workers reach high-surplus jobs. At the same time, aggregate job-to-job rates do not rise proportionally and can even decline, because cheaper search makes workers in good jobs more selective and leads them to reject a larger share of offers. Changes in volatility modify the strength of selection and the dispersion of realized surplus paths, but, within empirically plausible ranges, play a more modest role than firing costs and search incentives in shaping stationary wage dispersion and mobility. Overall, the experiments highlight that policies targeting job mobility or outside options cannot be evaluated in isolation from the equilibrium response of wages and outside options: the mean field feedback is central for determining whether a given intervention ultimately compresses or amplifies dispersion, and whether the induced inequality reflects ``good'' mobility up the job ladder or persistent low-quality matches.

Our analysis relates to and complements several strands of work. Relative to the classic on-the-job search and wage-dispersion literature following \citet{BurdettMortensen1998} and \citet{PostelVinayRobin2002}, we replace reduced-form job ladders by a diffusion-based representation of match quality, disciplined by the micro evidence in \citet{BuhaiTeulings2014}, and embed it in a general-equilibrium environment with endogenous outside options. Relative to the heterogeneous-agent macro literature that formulates Aiyagari-Bewley-Huggett environments as coupled HJB and Kolmogorov equations, we use similar continuous-time methods in a labour-market setting with on-the-job search and wage posting, and we exploit the mean field structure to study equilibrium \emph{wage dispersion} rather than \emph{wealth dispersion}. Methodologically, we contribute to the applied mean field game literature by providing a tractable labour-market application with free-boundary dynamics, numerics that build on \citet{AchdouCapuzzoDolcetta2010NumericsMFG}, \citet{AchdouEtAl2022Restud}, and \citet{Gueant2021ContinuousTimeOptimalControl}, and a finite-state interpretation that connects directly to the existence and trend-to-equilibrium results for discrete MFGs in \citet{GomesEtAl2010FiniteStateDiscrete,GomesEtAl2024FiniteStateContinuous}. The free-boundary and finite-state tools we develop are sufficiently general to apply to other markets with search, matching, and evolving bilateral states beyond labour.

Several limitations of our framework suggest natural directions for future research. On the firm side, we abstract from entry and exit, capital, and firm size dynamics, and we model wage setting through stationary Markov policies that depend on match surplus and the aggregate distribution. On the worker side, we consider a single worker type with a homogeneous search technology. These choices deliberately focus attention on match-specific stochastic dynamics rather than permanent worker or firm heterogeneity. Introducing dispersion in initial match surplus or ex ante heterogeneity in firm and worker types would mechanically broaden the wage distribution, and could help capture additional short-tenure dispersion and very long-tenure tails, but would not by itself generate the tenure-dependent patterns that the diffusion process produces; we therefore leave such extensions for future work. Likewise, we assume that firms post wages and do not engage in explicit counteroffers when workers receive outside offers. Our wage-sharing rule can be interpreted as a reduced-form way of capturing the intensity of competition for workers: a higher weight on outside options in the sharing rule corresponds to more aggressive bid-up of wages by current employers. Allowing full bilateral bargaining or richer renegotiation protocols would likely compress wage dispersion by feeding outside offers more directly into current wages, but the basic job-ladder and selection mechanisms highlighted here should remain operative.

Another promising future research avenue is to move beyond stationarity and incorporate aggregate shocks via common noise and master equations, as in recent advances in mean field games with common noise and nonlocal couplings.\footnote{For example, see \citet{AhujaRenYang2022} on common-noise MFGs with optimal stopping, \citet{BertucciMeynard2024FiniteStateMaster} on finite-state master equations, and \citet{MollRyzhik2025} on aggregate fluctuations in diffusion-based macroeconomic models.} In such extensions, aggregate shocks would shift the entire surplus distribution and thereby move the separation boundary and the job ladder over time. Our results suggest that in booms, when the surplus distribution shifts to the right, job ladders would steepen and wage dispersion would rise through stronger selection and faster mobility; in recessions, the distribution would compress and separation hazards would spike at low tenure. Embedding these dynamics in a non-stationary MFG, possibly with deep-learning-based solvers as in \citet{CarmonaLauriere2021DeepLearningMFG}, would provide a natural framework for studying the cyclical behaviour of wage dispersion and job-to-job flows. Extending the model to heterogeneous worker and firm types, potentially in a multipopulation MFG, would further allow the analysis of sorting, segmentation, and life-cycle patterns in wages and mobility.

So what are we to take home from all this? The main takeaway for labour, macro, and IO audiences is that in a world with stochastic match quality, the interaction between selection along job spells, on-the-job search, and equilibrium wage setting is central for understanding both the level and the structure of wage dispersion.

Our decomposition shows that for workers in their first years on the job, wage differences are heavily shaped by job shopping and the job ladder: optimal search quickly pushes workers out of the worst matches, while equilibrium wage policies and mean field feedbacks re-amplify dispersion at the bottom of the ladder so that the fully endogenous equilibrium generates more short-tenure wage dispersion than a purely selection-driven benchmark. For experienced workers, by contrast, cumulative idiosyncratic productivity shocks and the selection they induce are the dominant forces behind wage differences, and equilibrium search and wage setting mainly compress dispersion within tenure cohorts relative to a selection-only environment.

A partial-equilibrium view that treats outside options as exogenous captures some of the selection effects but misses the feedback from wage posting and search behaviour that shapes the job ladder itself. By embedding the stochastic productivity model of \citet{BuhaiTeulings2014} in a stationary mean field equilibrium with an explicit HJB-Kolmogorov structure, we provide a relatively parsimonious unified theoretical and quantitative framework in which these mechanisms can be analysed jointly, disciplined with micro data, and used for policy counterfactuals in large labour markets. We hope that both the economic insights and the mean field methods developed here could prove useful for future work on wage dispersion, job ladders, and other MFG applications in economics.

\appendix
\section{Proofs of Theoretical Results}\label{appendix:proofs}

This appendix collects the proofs of the theoretical results stated in Sections~\ref{sec:equilibriumCharacterization} and~\ref{sec:existenceUniqueness}. We first set up the functional-analytic framework, then prove existence and uniqueness of a stationary mean field equilibrium, characterize the free-boundary separation rule, and finally collect auxiliary lemmas used in the comparative statics and in Proposition~\ref{prop:comparative_statics}.

\subsection{Functional-analytic setup}\label{app:setup}

The state space of an individual match is the real line $\mathbb{R}$, with canonical coordinate $z$ denoting match surplus. We denote by $\mathcal{P}(\mathbb{R})$ the set of Borel probability measures on $\mathbb{R}$ and by $\mathcal{P}_2(\mathbb{R})$ the subset of measures with finite second moment, equipped with the $2$-Wasserstein distance $W_2$.

Throughout, Assumption~\ref{ass:primitives} imposes that:
\begin{itemize}
  \item the discount rate $r>0$ is fixed;
  \item the drift $\mu(z,a,m)$ and diffusion coefficient $\sigma(z,m)$ are continuous in $(z,a,m)$, Lipschitz in $z$ uniformly in $(a,m)$, and satisfy linear-growth bounds in $z$ uniformly in $(a,m)$;
  \item the drift is weakly increasing in the surplus state:
  \[
    z_1 \le z_2
    \;\Longrightarrow\;
    \mu(z_1,a,m) \le \mu(z_2,a,m)
    \quad\text{for all } (a,m);
  \]
  \item the volatility is uniformly elliptic: $\sigma^2(z,m)\geq \underline{\sigma}^2>0$ for all $(z,m)$;
  \item wages $w(z,m)$, flow profits $\Pi(z,m)$, the arrival rate $\lambda(a,m)$, the re-entry kernel $\nu(\cdot,m)$, and the outside value $V^U(m)$ have at most linear growth in $z$ and are continuous in $(z,m)$, with $A$ compact.
\end{itemize}
These are the standing conditions in the MFG literature; see, for example, \citet{LasryLions2007}, \citet{CardaliaguetPorretta2020IntroMFG}, and \citet[Ch.~2]{CarmonaDelarue2018BookI}.

For a given stationary distribution $m \in \mathcal{P}_2(\mathbb{R})$ and search effort $a \in A$, the surplus process $Z_t$ evolves according to the controlled diffusion
\begin{equation}
  dZ_t = \mu(Z_t,a,m)\,dt + \sigma(Z_t,m)\,dB_t,
  \label{eq:Z_SDE_app}
\end{equation}
where $(B_t)_{t \geq 0}$ is a standard Brownian motion. Under Assumption~\ref{ass:primitives}, for each progressively measurable $A$-valued control $(a_t)_{t\ge0}$ and each measurable flow of distributions $(m_t)_{t\ge0}$ there exists a unique strong solution to~\eqref{eq:Z_SDE_app} with finite second moments at all dates.

The associated infinitesimal generator acting on $\varphi \in C_b^2(\mathbb{R})$ is
\[
  \mathcal{L}^{Z}_{a,m}\varphi(z)
  = \mu(z,a,m)\,\varphi'(z)
    + \tfrac{1}{2}\sigma^2(z,m)\,\varphi''(z).
\]

For a fixed stationary distribution $m$ and stationary Markov control $a(\cdot,m)$, the stationary Kolmogorov (forward) equation associated with~\eqref{eq:Z_SDE_app}, together with separation and entry mechanisms as in Section~\ref{sec:MFG}, can be written formally as
\begin{equation}
  0
  = -\partial_z\bigl[\mu\bigl(z,a(z,m),m\bigr)\, m(z)\bigr]
    + \tfrac{1}{2}\,\partial_{zz}\bigl[\sigma^2(z,m)\, m(z)\bigr]
    - q(z,m)\,m(z) + \Gamma_{\text{entry}}(z,m),
  \label{eq:FP_stationary_app}
\end{equation}
where $m$ is interpreted as a density, $q(z,m)$ is the separation intensity implied by the stopping rule, and $\Gamma_{\text{entry}}(z,m)$ captures entry of new matches at surplus $z$. The regularity and integrability conditions in Assumption~\ref{ass:primitives} ensure that~\eqref{eq:FP_stationary_app} is well posed in the weak sense for $m\in\mathcal{P}_2(\mathbb{R})$.

We work with the Banach space $C_b(\mathbb{R})$ of bounded continuous functions endowed with the supremum norm, and with Sobolev spaces $H^1(\mathbb{R})$ when appropriate. The elliptic operators associated with~\eqref{eq:Z_SDE_app} are well defined on $C_b^2(\mathbb{R})$, and all HJB equations below are understood in the viscosity sense, with $C_b^2$ test functions used in the usual way. Value functions will always be continuous and of at most linear growth in $z$; given our linear-growth bounds on the primitives this is the natural growth class.

To lighten notation, when this is very clear, just like in the MFG literature, we sometimes omit the explicit dependence on the stationary distribution and write, for example, $V^W(z)$ instead of $V^W(z;m)$ or $w(z)$ instead of $w(z,m)$, keeping the background $m$ fixed.

Throughout the appendix, $C$ denotes a finite constant that may change from line to line and depends only on the primitive parameters.

\subsection{Existence of a stationary equilibrium}\label{app:existence}

We now prove the existence result stated in Theorem~\ref{thm:existence}. The argument follows the standard fixed-point strategy in MFG; see, e.g., \citet{LasryLions2007}, \citet[Ch.~2]{CarmonaDelarue2018BookI}, and \citet{CardaliaguetPorretta2020IntroMFG}. Given a conjectured stationary distribution $m$, we solve the worker and firm problems and derive best-response policies; we then construct the invariant distribution induced by these policies, and finally show that the resulting mapping from conjectured to induced distributions admits a fixed point.

\subsubsection*{Step 1: Worker and firm best responses for a fixed distribution}

Fix an arbitrary $m \in \mathcal{P}_2(\mathbb{R})$. Consider the worker's stationary problem with $m$ held fixed and wage policy $w(\cdot,m)$ exogenous. The worker chooses a search policy $a(\cdot)$ and a stopping rule to maximize
\[
  V^W(z;m)
  = \sup_{(a,\tau)}
    \mathbb{E}_z\Biggl[
      \int_0^\tau e^{-rt} \bigl(w(Z_t,m) - c(a_t)\bigr)\,dt
      + e^{-r\tau} V^U(m)
    \Biggr],
\]
subject to the surplus dynamics~\eqref{eq:Z_SDE_app}, where $V^U(m)$ is the outside value defined in Section~\ref{sec:MFG}. Assumption~\ref{ass:primitives} implies that $w(z,m) - c(a)$ has at most linear growth in $z$ and is Lipschitz in $(z,m,a)$, and that $V^U(m)$ is finite and continuous in $m$.

The worker thus faces a one-dimensional infinite-horizon optimal control and stopping problem with discount rate $r>0$ and payoffs of at most linear growth. Standard results on stationary HJB obstacle problems with discounting and coefficients satisfying Lipschitz and linear-growth conditions imply that there exists a unique continuous value function with at most linear growth solving the associated HJB in the viscosity sense.

To compactly describe the HJB, we denote by
\[
G\bigl(z,V^W,m;a\bigr)
\]
the term capturing the effect of job offers, separations, and re-entry, as defined in Section~\ref{sec:MFG}. The worker's Hamiltonian is then
\[
H^W\bigl(z,m,p,X,V^W\bigr)
 = \sup_{a\in A}\Bigl[
   w(z,m) - c(a)
   + \mu(z,a,m)\,p
   + \tfrac{1}{2}\sigma^2(z,m)\,X
   + G\bigl(z,V^W,m;a\bigr)
 \Bigr].
\]

\begin{lemma}[Worker best responses]\label{lem:worker_best_response}
Under Assumption~\ref{ass:primitives}, for each fixed $m$
and each admissible wage policy $w(\cdot,m)$ there exists a unique
continuous value function $V^W(\cdot;m)$ with at most linear growth that
solves the stationary HJB obstacle problem
\begin{equation}
  \max\Bigl\{
    r V^W(z;m)
    - \sup_{a \in A}
        \bigl[
          w(z,m) - c(a) + \mathcal{L}^{Z}_{a,m} V^W(z;m)
          + G\bigl(z,V^W,m;a\bigr)
        \bigr],
    \;
    V^W(z;m) - V^U(m)
  \Bigr\} = 0
  \label{eq:worker_HJB_app}
\end{equation}
in the viscosity sense. Moreover, there exists a measurable search policy
$a^\ast(\cdot,m)$ that, for each $z$, attains the pointwise supremum
over $a \in A$ appearing in~\eqref{eq:worker_HJB_app}, and the mapping
\[
  (m,w) \mapsto \bigl(V^W(\cdot;m), a^\ast(\cdot,m)\bigr)
\]
is continuous from the admissible set of distributions and wage policies
into $C_{\mathrm{loc}}(\mathbb{R}) \times L^\infty_{\mathrm{loc}}(\mathbb{R})$
(with the topology of local uniform convergence).
\end{lemma}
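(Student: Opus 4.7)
The plan is to treat~\eqref{eq:worker_HJB_app} as a discounted stationary variational inequality with a nonlocal coupling through $G$, and to solve it by freezing the nonlocal term and iterating. For each continuous candidate $U$ of at most linear growth, let $\mathcal{T}[U]$ denote the solution to the obstacle problem obtained by replacing $G(z,V^W,m;a)$ by the frozen term $G(z,U,m;a)$. The resulting frozen Hamiltonian
\[
  H^W_U(z,m,p,X)
  := \sup_{a\in A}\Bigl[
    w(z,m) - c(a)
    + \mu(z,a,m)\,p
    + \tfrac{1}{2}\sigma^2(z,m)\,X
    + G(z,U,m;a)
  \Bigr]
\]
is continuous, proper, Lipschitz in $z$ by the standing assumptions, and the diffusion is uniformly elliptic with constant obstacle $V^U(m)$. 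Under these structural properties the frozen obstacle problem admits a unique continuous viscosity solution of at most linear growth in $z$ by classical results for discounted HJB obstacle problems on $\mathbb{R}$ with Lipschitz data (Perron's method plus a comparison principle using $r>0$, in the spirit of \citet{PeskirShiryaev2006OptimalStopping}). It then remains to show that $U \mapsto \mathcal{T}[U]$ has a unique fixed point in this class, which will be $V^W(\cdot;m)$.

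\paragraph{Existence, uniqueness, and value-function interpretation.} First I would construct explicit linear-growth sub- and super-solutions using the linear-growth bounds on $w$, $c$, $\mu$, $\sigma$, $\lambda$, and $V^U(m)$, which pins down the admissible growth class $\mathcal{X}$ of continuous functions $U$ with $|U(z)| \le C(1+|z|)$. Applying Perron's method together with the standard comparison principle for discounted HJB obstacle problems with Lipschitz coefficients produces $\mathcal{T}[U] \in \mathcal{X}$ for each $U \in \mathcal{X}$. Next I would endow $\mathcal{X}$ with a weighted sup-norm $\|U\|_\rho := \sup_z |U(z)|/(1+|z|)$ and verify that $\mathcal{T}$ is a contraction: since $G$ is linear in $U$ with Lipschitz modulus bounded by $\bar\lambda$ (from part~3 of Assumption~\ref{ass:primitives}), the comparison principle gives an estimate of the form $\|\mathcal{T}[U_1] - \mathcal{T}[U_2]\|_\rho \le (\bar\lambda/(r + \bar\lambda))\,\|U_1 - U_2\|_\rho$ with contraction constant strictly less than one. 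Banach's fixed-point theorem then delivers the unique viscosity solution $V^W(\cdot;m)$ of~\eqref{eq:worker_HJB_app}. The identification of $V^W$ as the worker's value function is standard verification via It\^o's formula applied to $e^{-rt}V^W(Z_t;m)$, using the boundedness of $\lambda$ and the linear-growth bounds to control localization at infinity.

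\paragraph{Measurable selector and continuity of the map.} Because $A$ is compact and the integrand in the supremum of~\eqref{eq:worker_HJB_app} is jointly continuous in $(z,a)$ once $V^W$ is continuous, Berge's maximum theorem makes the argmax correspondence upper hemicontinuous and nonempty-compact-valued, so the Kuratowski--Ryll-Nardzewski selection theorem yields a measurable $a^\ast(\cdot,m)$. Continuity of $(m,w) \mapsto V^W$ follows from the stability of viscosity solutions under local uniform convergence of the coefficients: if $(m^n,w^n) \to (m,w)$ in the $W_2$ and local uniform topologies, then the Hamiltonians converge locally uniformly, the fixed-point construction depends continuously on the parameter in $\|\cdot\|_\rho$, and the Barles--Perthame half-relaxed limits method gives $V^W(\cdot;m^n) \to V^W(\cdot;m)$ in $C_{\mathrm{loc}}(\mathbb{R})$. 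For the policies, strict convexity of $c$ on $A$ together with continuity of $\mu$ and $G$ in $a$ makes the argmax a singleton on a set of full measure, so any cluster point in $L^\infty_{\mathrm{loc}}$ is the unique optimal policy, yielding the claimed continuity of $a^\ast$.

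\paragraph{Main obstacle.} The hardest ingredient is the simultaneous presence of the nonlocal term $G$ and the unbounded state space: classical obstacle-problem theory assumes local operators, and the possible linear growth of $V^W$ prevents working directly in $C_b(\mathbb{R})$. The contraction in the weighted space $(\mathcal{X},\|\cdot\|_\rho)$ is what makes the argument go through, and it relies critically on the discount rate $r>0$, the uniform bound $\lambda(a,m) \le \bar\lambda$, and the fact that $G$ enters additively through an offer-driven capital gain rather than nonlinearly through $V^W$. Without these structural ingredients, neither the linear-growth estimate nor the contraction constant would survive, and the fixed-point construction of $V^W(\cdot;m)$ would have to be replaced by a much more delicate compactness argument.
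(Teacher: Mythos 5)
Your strategy is sound and reaches the right conclusion, but it is a genuinely more constructive route than the paper's. The paper's own proof of this lemma is essentially citation-based: existence, uniqueness and the linear-growth bound are invoked from standard results on discounted infinite-horizon HJB obstacle problems with Lipschitz/linear-growth coefficients (Cardaliaguet--Porretta; Carmona--Delarue, Ch.~2), measurable maximizers come from standard measurable selection theorems, and continuity of $(m,w)\mapsto(V^W,a^\ast)$ from stability of viscosity solutions. You instead freeze the nonlocal term $G$, solve the frozen local obstacle problem by Perron plus comparison, and close with a Banach fixed point; your selection and stability steps (Berge, Kuratowski--Ryll-Nardzewski, half-relaxed limits) coincide with what the paper appeals to. What your route buys is an explicit treatment of the nonlocal coupling, which the generic ``standard results'' do not literally cover; what it costs is extra structure on $G$ that the paper's assumption does not impose.

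Two steps need repair. First, the contraction constant $\bar\lambda/(r+\bar\lambda)$ does not follow from Assumption~\ref{ass:primitives}(3) as you claim: that assumption only makes $G$ Lipschitz in $V^W$ (sup norm) uniformly in $a$, with an \emph{unspecified} constant, and $\bar\lambda$ only bounds the arrival rate. The constant you state is obtained only for the canonical offer-gain form, after rewriting $\lambda(a,m)\int[V(\tilde z)-V(z)]^+\,\nu(d\tilde z\mid m)=\lambda(a,m)\int\max\{V(\tilde z),V(z)\}\,\nu(d\tilde z\mid m)-\lambda(a,m)V(z)$ and absorbing the $-\lambda V(z)$ term into an effective discount $r+\lambda(a,m)$; a naive freezing of $G$ only gives a factor of order $2\bar\lambda/r$, which need not be below one, and for a general $G$ satisfying only Assumption~\ref{ass:primitives}(3) the iteration need not contract at all. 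You acknowledge relying on this structure, but the absorption step should be made explicit, since it is what makes the fixed point work without a smallness condition on $r$. Second, the weighted-norm estimate is not correct as written: for linear-growth $U_1,U_2$ the difference of frozen nonlocal terms is controlled by $\|U_1-U_2\|_\rho$ only up to a factor involving $\int(1+|\tilde z|)\,\nu(d\tilde z\mid m)$, so the modulus in $\|\cdot\|_\rho$ is not $\bar\lambda/(r+\bar\lambda)$. The simplest fix is to observe that solutions of the frozen problems with the same obstacle, wage, and cost data differ by a bounded function, and to run the contraction in the sup norm on such differences (i.e.\ on an affine class $U_0+C_b(\mathbb{R})$); relatedly, the comparison principle in the linear-growth class on all of $\mathbb{R}$ requires the usual penalization argument with the Lipschitz, linearly growing drift, which you gesture at via sub/supersolutions but should state as part of the comparison step.
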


\begin{proof}
Existence, uniqueness, and the linear-growth bound for $V^W$ follow from standard results on infinite-horizon HJB obstacle problems with discounting and coefficients satisfying Lipschitz and linear-growth conditions; see, again, \citet{CardaliaguetPorretta2020IntroMFG} and \citet[Ch.~2]{CarmonaDelarue2018BookI}. Compactness of $A$, continuity of the Hamiltonian in $(z,a)$, and the linear-growth bounds on $\mu$ and $\sigma$ guarantee the existence of measurable maximizers $a^\ast(z,m)$ by standard measurable selection theorems. Stability of viscosity solutions with respect to coefficients under standard maximum theorem results yields continuity of the best-response mapping.
\end{proof}

A symmetric argument applies to the firm. For a fixed distribution $m$ and a given worker strategy $(a^\ast(\cdot,m),\mathcal{S}(m))$ with stopping region $\mathcal{S}(m)$, the firm chooses a wage policy $w(\cdot,m)$ to maximize
\[
  V^F(z;m)
  = \sup_{w}
    \mathbb{E}_z\Biggl[
      \int_0^\tau e^{-rt} \bigl(\Pi(Z_t,m) - w(Z_t,m)\bigr)\,dt
      + e^{-r\tau} V^V(m)
    \Biggr],
\]
subject to the same surplus dynamics, where $\tau$ is the separation time implied by the worker strategy and exogenous shocks, and $V^V(m)$ is the value of a vacancy. The firm's Hamiltonian is linear in $V^F$ and depends on $m$ through profits and the implied separation and entry terms.

\begin{lemma}[Firm best responses]\label{lem:firm_best_response}
Under Assumption~\ref{ass:primitives}, for each fixed $m$ and each worker strategy $(a^\ast,\mathcal{S})$ there exists a continuous value function $V^F(\cdot;m)$ with at most linear growth and a bounded measurable wage policy $w^\ast(\cdot,m)$ that solve the firm's control problem. The pair $(V^F,w^\ast)$ satisfies the firm's stationary HJB equation in the viscosity sense, and the mapping
\[
  (m,a^\ast,\mathcal{S}) \mapsto \bigl(V^F(\cdot;m),w^\ast(\cdot,m)\bigr)
\]
is continuous on the corresponding product space.
\end{lemma}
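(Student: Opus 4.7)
The plan is to mirror the argument of Lemma~\ref{lem:worker_best_response} as closely as possible, since the firm's problem shares the same infinite-horizon, scalar-diffusion structure as the worker's, with three cosmetic differences: the control is the wage schedule rather than the search intensity; the obstacle term disappears because the stopping rule $\mathcal{S}(m)$ is imposed exogenously by the worker strategy rather than selected inside the firm's own HJB; and the dependence on $V^F$ enters through the non-local operator $H$ rather than through $G$. Fix $(m,a^\ast(\cdot,m),\mathcal{S}(m))$ and write the firm's Hamiltonian as
\[
  H^F\bigl(z,m,p,X,V^F\bigr)
  = \sup_{w \in \mathcal{W}_z}
    \Bigl[
      \Pi(z,m) - w
      + \mu\bigl(z,a^\ast(z,m),m\bigr)\,p
      + \tfrac{1}{2}\sigma^2(z,m)\,X
      - H\bigl(z,V^F,m;w\bigr)
    \Bigr],
\]
so that the firm's stationary problem reduces to finding $V^F \in C(\mathbb{R})$ of at most linear growth satisfying $r V^F(z) = H^F\bigl(z,m,\partial_z V^F(z),\partial_{zz} V^F(z), V^F\bigr)$ on $\mathbb{R}\setminus\mathcal{S}(m)$, together with the boundary condition $V^F = V^V(m)$ on $\mathcal{S}(m)$.

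The first step is to verify that Assumption~\ref{ass:primitives} supplies the regularity needed for standard viscosity-solution arguments: the Lipschitz and linear-growth bounds on $\mu,\sigma,\Pi$ are inherited from Lemma~\ref{lem:worker_best_response}; the admissible class $\mathcal{W}$ restricts $w(z,m)$ to a compact interval at each $z$, bounded above by profitability and below by the worker's individual-rationality constraint derived from $V^U(m)$, both with linear growth in $z$; and $H$ is Lipschitz in $V^F$ uniformly in $(z,m,w)$. I would then invoke the Crandall-Ishii-Lions comparison principle for second-order HJB on $\mathbb{R}$ with discount $r>0$ and linear-growth coefficients, using Perron's method on expanding bounded truncations with Dirichlet data $V^V(m)$ on $\mathcal{S}(m)$ and linear-growth barrier functions at infinity, and pass to the limit to obtain a unique continuous viscosity solution $V^F(\cdot;m)$ of at most linear growth. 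A measurable optimal wage policy $w^\ast(\cdot,m)$ is then produced by the Kuratowski-Ryll-Nardzewski selection theorem, since the argmax correspondence at each $z$ is nonempty, convex, and compact: nonemptiness and compactness follow from compactness of $\mathcal{W}_z$ and upper semicontinuity of the maximand in $w$, and convexity from concavity of the maximand in $w$ (which holds under the structural assumption that $H$ is convex in $w$, a property I would verify directly from the specification of $H$ adopted in Section~\ref{sec:MFG}).

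Continuity of the best-response map follows from a Barles-Souganidis stability argument: perturbations of $m$ in $W_2$, of $a^\ast$ in $L^\infty_{\mathrm{loc}}$, and of $\mathcal{S}$ in Hausdorff distance on compacts induce uniformly small perturbations of $H^F$ and of the boundary condition, so that viscosity solutions to the perturbed problems converge locally uniformly to the solution of the limit problem; Berge's maximum theorem then yields continuity of the argmax correspondence, and a single-valued continuous selection can be extracted whenever the maximizer is unique. The main obstacle I anticipate is the non-local dependence of $H^F$ on $V^F$ through the operator $H$: if $H$ encodes integrals of $V^F$ against kernels on $\mathbb{R}$, a purely pointwise comparison principle no longer suffices. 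I would close the loop by embedding the HJB into a fixed-point problem on the linear-growth class of $C(\mathbb{R})$, treating $H$ initially as an exogenous functional, solving the resulting ``frozen'' HJB by the machinery above, and then invoking a Banach contraction argument whose contraction constant is controlled by the Lipschitz constant of $H$ in $V^F$ divided by the discount rate $r$, which is precisely the kind of inequality Assumption~\ref{ass:primitives} is designed to guarantee.
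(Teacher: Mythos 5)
Your proposal is correct and follows essentially the same route as the paper, whose own proof of this lemma is simply omitted with the remark that it parallels Lemma~\ref{lem:worker_best_response} (discounting, compact admissible wage policies, Lipschitz/linear-growth coefficients, measurable selection, and viscosity-solution stability) — exactly the machinery you spell out via Perron/comparison, Kuratowski--Ryll-Nardzewski selection, and Barles--Souganidis stability with Berge's theorem. The one caveat is your final contraction step for the non-local operator $H$: the smallness condition $\mathrm{Lip}(H)/r<1$ is not actually stated in Assumption~\ref{ass:primitives} (which never explicitly constrains $H$), so that step rests on an implicit assumption — though the paper itself is no more explicit, and monotonicity of $H$ (as a killing-rate-type term) would serve equally well in place of a contraction.
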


\begin{proof}
The proof parallels that of Lemma~\ref{lem:worker_best_response} and is therefore omitted. Bounded and Lipschitz flow profits, discounting at rate $r>0$, compactness of the admissible set of wage policies, and linear growth of the coefficients ensure well-posedness and continuity of the best-response mapping.
\end{proof}

Lemmas~\ref{lem:worker_best_response} and~\ref{lem:firm_best_response} define, for each $m$, a set of stationary Markov best-response policies $(a^\ast(\cdot,m),w^\ast(\cdot,m))$ and associated value functions $(V^W(\cdot;m),V^F(\cdot;m))$. Under the monotonicity and single-crossing conditions on the primitives in Sections~\ref{sec:MFG} and~\ref{sec:existenceUniqueness}, the worker's optimal stopping rule can be shown to be of free-boundary type; this is the content of Proposition~\ref{prop:free_boundary} in Section~\ref{sec:equilibriumCharacterization}, whose proof is given in Section~\ref{app:free_boundary} below.

\subsubsection*{Step 2: Invariant distribution induced by best responses}

Given $m$ and the associated best responses $(a^\ast(\cdot,m),w^\ast(\cdot,m),\mathcal{S}^\ast(m))$, consider the controlled surplus process $Z_t$ that solves~\eqref{eq:Z_SDE_app} with control $a^\ast(Z_t,m)$, with killing at the stopping region $\mathcal{S}^\ast(m)$, and with re-entry according to the entry mechanism in Assumption~\ref{ass:primitives}. Under these regularity conditions, the resulting Markov process on $\mathbb{R}$ is Feller and, by standard one-dimensional diffusion theory with killing and regeneration (see, for example, \citealp{StroockVaradhan2006MultidimensionalDiffusions}), admits a unique invariant probability measure $m' \in \mathcal{P}_2(\mathbb{R})$ solving the stationary Kolmogorov equation~\eqref{eq:FP_stationary_app} in the weak sense:
\[
  \int_{\mathbb{R}} \mathcal{L}^{Z}_{a^\ast(z,m),m}\varphi(z)\,m'(dz)
  - \int_{\mathbb{R}} q(z,m)\,\varphi(z)\,m'(dz)
  + \int_{\mathbb{R}} \varphi(z)\,\Gamma_{\text{entry}}(z,m)\,dz = 0
\]
for all test functions $\varphi \in C_c^\infty(\mathbb{R})$.

We denote this unique invariant law by
\[
  \Gamma(m) := m'.
\]
Standard perturbation results for invariant measures of Feller processes with killing and regeneration imply that $\Gamma$ is continuous in $m$ with respect to the $W_2$ topology.

\subsubsection*{Step 3: Fixed point and existence}

We now complete the proof of Theorem~\ref{thm:existence}.

\begin{proof}[Proof of Theorem~\ref{thm:existence}]
Under Assumption~\ref{ass:primitives}, the set of admissible stationary distributions (for example, measures in $\mathcal{P}_2(\mathbb{R})$ with uniformly bounded second moment) is nonempty, convex, and compact in $W_2$. Steps~1 and~2 define a single-valued operator
\[
  \Gamma: \mathcal{P}_2(\mathbb{R}) \to \mathcal{P}_2(\mathbb{R}),
  \qquad
  m \mapsto \Gamma(m),
\]
which is continuous and maps this convex compact set into itself. By Schauder's fixed-point theorem there exists $m^\ast$ such that $\Gamma(m^\ast) = m^\ast$. By construction, $m^\ast$ is invariant under the surplus dynamics induced by the best responses to $m^\ast$, and the associated policies and value functions satisfy the conditions in the definition of a stationary mean field equilibrium in Section~\ref{subsec:MFG_equilibrium_def}. This proves Theorem~\ref{thm:existence}.
\end{proof}

The operator $\Gamma$ is the infinite-state analogue of the finite-state MFG operators studied by \citet{GomesEtAl2010FiniteStateDiscrete,GomesEtAl2024FiniteStateContinuous}. In Section~\ref{sec:quantitative} and Appendix~\ref{appendix:numerical}, our numerical scheme can be interpreted as computing aproximate fixed points of $\Gamma$ via a finite-state approximation in the spirit of these papers.

\subsection{Monotonicity and uniqueness}\label{app:uniqueness}

We next prove the uniqueness result in Theorem~\ref{thm:uniqueness} under Lasry-Lions monotonicity condition in Assumption~\ref{ass:monotonicity}. Recall that that that assumption requires that, for any pair of distributions $m^1,m^2$ and any admissible value function $V$,
\[
  \int_{\mathbb{R}}
    \Bigl(
      H^W\bigl(z,m^1,\cdot,\cdot,V\bigr)
      - H^W\bigl(z,m^2,\cdot,\cdot,V\bigr)
    \Bigr)\,
  \bigl(m^1 - m^2\bigr)(dz)
  \;\le 0,
\]
with equality only if $m^1=m^2$. As we stated also in the main body of the paper, this rules out then any positive feedback from the cross-sectional distribution into outside options: if the distribution shifts in a way that raises outside values, workers' and firms' best responses cannot amplify that shift into a second distinct stationary wage distribution.

\begin{proof}[Proof of Theorem~\ref{thm:uniqueness}]
Suppose, for contradiction, that there exist two distinct stationary equilibria
\[
  \bigl(m^1,a^{1,\ast},w^{1,\ast},V^{W,1},V^{F,1}\bigr),
  \qquad
  \bigl(m^2,a^{2,\ast},w^{2,\ast},V^{W,2},V^{F,2}\bigr),
\]
with $m^1 \neq m^2$. Let $\Delta m := m^1 - m^2$ denote the signed measure corresponding to their difference. Let $H(z,m,V)$ denote the worker Hamiltonian (including wages, search costs, and the effect of $m$ through the outside value, the arrival of offers, and the re-entry distribution), so that on the continuation region the worker HJB can be wirtten as
\[
  rV^W(z;m) = H\bigl(z,m,V^W\bigr).
\]

Under stationarity and optimality, $V^{W,k}$, $k=1,2$, satisfy the stationary HJB equations associated with $m^k$ and controls $(a^{k,\ast},w^{k,\ast})$. Subtracting the two equations and integrating against $\Delta m$ yields
\[
  r \int_{\mathbb{R}}
    \bigl(V^{W,1}(z;m^1) - V^{W,2}(z;m^2)\bigr)\,\Delta m(dz)
  =
  \int_{\mathbb{R}}
    \Bigl(
      H\bigl(z,m^1,V^{W,1}\bigr)
      - H\bigl(z,m^2,V^{W,2}\bigr)
    \Bigr)\,\Delta m(dz).
\]
Decomposing the right-hand side gives
\begin{align*}
  &\int_{\mathbb{R}}
    \Bigl(
      H\bigl(z,m^1,V^{W,1}\bigr)
      - H\bigl(z,m^1,V^{W,2}\bigr)
    \Bigr)\,\Delta m(dz) \\
  &\qquad+ \int_{\mathbb{R}}
    \Bigl(
      H\bigl(z,m^1,V^{W,2}\bigr)
      - H\bigl(z,m^2,V^{W,2}\bigr)
    \Bigr)\,\Delta m(dz).
\end{align*}
The first term collects the effect of changing the value function, holding $m$ fixed. By the structure of the Hamiltonian, the convexity of the search cost $c$, and the properties of the diffusion operator, the associated quadratic form is nonpositive; this is the usual energy estimate in the Lasry-Lions argument. The second term collects all terms that depend explicitly on $m$, which, by
Assumption~\ref{ass:monotonicity}, enter only through $w$, $\lambda$, $\nu$, and $V^U$.
It is therefore nonpositive by the Lasry–Lions monotonicity condition, with equality only if $m_1=m_2$. Hence
\[
  r \int_{\mathbb{R}} \bigl(V^{W,1}(z;m^1) - V^{W,2}(z;m^2)\bigr)\,\Delta m(dz) \leq 0,
\]
with equality only if $m^1 = m^2$.

Repeating the argument with the roles of $(1,2)$ reversed shows that the left-hand side must also be nonnegative. Hence the integral is zero and both inequalities are equalities. By strict monotonicity, this implies $m^1 = m^2$, contradicting our assumption. Given $m^\ast$, uniqueness of the associated value functions and policies then follows from the uniqueness results for the individual HJB problems in Lemmas~\ref{lem:worker_best_response} and~\ref{lem:firm_best_response}.
\end{proof}

Uniqueness rules out multiple stationary job ladders that differ only through self-fulfilling beliefs about the cross-sectional distribution of surplus and wages. This is crucial for the quantitative analysis, as it guarantees that the counterfactuals in Section~\ref{sec:results} are taken with respect to a well-defined equilibrium object.

\subsection{Free-boundary separation rule}\label{app:free_boundary}

We now prove the free-boundary characterization of the worker's optimal separation rule stated in Proposition~\ref{prop:free_boundary} in Section~\ref{sec:equilibriumCharacterization}. The key step is to show that the continuation value is strictly increasing in the surplus $z$, so that the stopping region is an interval of the form $(-\infty,z^\ast(m)]$.

\begin{proof}[Proof of Proposition~\ref{prop:free_boundary}]
Fix a stationary distribution $m$ and wage policy $w(\cdot,m)$, and define
\[
  \Delta(z;m) := V^W(z;m) - V^U(m).
\]
By Lemma~\ref{lem:worker_best_response}, $V^W(\cdot;m)$ is continuous in $z$ and has at most linear growth, and $V^U(m)$ is finite. 

By Assumption~\ref{ass:primitives}, the drift $\mu(\cdot,a,m)$ is weakly
increasing in $z$ for each $(a,m)$, and the single-crossing and monotonicity
conditions on wages and the offer-arrival rate ensure that higher surplus
shifts continuation payoffs up in a pointwise sense. Standard comparison
results for one-dimensional diffusions then imply that for $z_1 < z_2$ the
surplus process starting from $z_2$ stochastically dominates the one starting
from $z_1$ and yields a weakly higher continuation value; see, for example,
\citet[Ch.~3]{CarmonaDelarue2018BookI}. Hence $\Delta(\cdot;m)$ is
nondecreasing, and strictly increasing on any interval where continuation is
optimal with positive probability.

Since $V^W(\cdot;m)$ is continuous, so is $\Delta(\cdot;m)$. The stopping region is
\[
  \mathcal{S}(m)
  = \bigl\{z \in \mathbb{R} \colon \Delta(z;m) \leq 0\bigr\},
\]
and the continuation region is its complement. Because $\Delta(\cdot;m)$ is continuous and nondecreasing, $\mathcal{S}(m)$ is either empty, all of $\mathbb{R}$, or an interval of the form $(-\infty,z^\ast(m)]$ for some $z^\ast(m)$. The behavior of $w(z,m)$ and of the diffusion at the tails rules out the first two cases: for sufficiently low $z$ continuation is dominated by the outside option, while for sufficiently high $z$ the continuation payoff strictly exceeds immediate separation. Therefore $\mathcal{S}(m)$ has the claimed threshold form, and $z^\ast(m)$ is characterized by $\Delta(z^\ast(m);m) = 0$.

On the continuation region $(z^\ast(m),\infty)$ the value function solves the linear elliptic HJB equation obtained from~\eqref{eq:worker_HJB_app} with the stopping constraint inactive, while on the stopping region the value is constant at $V^U(m)$. Standard one-dimensional optimal stopping arguments for diffusions then imply the smooth fit condition
\[
  \partial_z V^W\bigl(z^\ast(m);m\bigr) = 0;
\]
see, for example, \citet[Ch.~3]{CarmonaDelarue2018BookI}. This establishes the free-boundary characterization stated in Proposition~\ref{prop:free_boundary}.
\end{proof}

\subsection{Auxiliary lemmas and comparative statics}\label{app:auxiliary}

We conclude with two auxiliary results used in the comparative statics and in Proposition~\ref{prop:comparative_statics}.

\begin{lemma}[Continuity of the free boundary]\label{lem:boundary_continuity}
Under Assumption~\ref{ass:primitives} and the regularity conditions of Proposition~\ref{prop:free_boundary}, the free boundary $z^\ast(m)$ characterized in Proposition~\ref{prop:free_boundary} is continuous as a function of $m$ with respect to the $W_2$ topology.
\end{lemma}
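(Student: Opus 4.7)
The plan is to view $z^\ast(m)$ as an implicitly defined zero of the gap function $\Delta(z;m) := V^W(z;m) - V^U(m)$ and to transfer joint continuity of $\Delta$ to its free boundary by combining the strict monotonicity of $\Delta$ in $z$ established in Proposition~\ref{prop:free_boundary} with a non-degeneracy estimate at the boundary itself. Joint continuity is essentially gratis: by Lemma~\ref{lem:worker_best_response}, $(m,w)\mapsto V^W(\cdot;m)$ is continuous into $C_{\mathrm{loc}}(\R)$, and by Assumption~\ref{ass:primitives} the outside value $V^U(m)$ is continuous in $W_2$, so $\Delta(z;m)$ depends continuously on $m$ uniformly in $z$ on compact subsets of $\R$.

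Upper semicontinuity of $z^\ast$ is then immediate. For any $z > z^\ast(m)$, strict positivity of $\Delta$ on the continuation region gives $\Delta(z;m) > 0$, which by joint continuity is preserved along any sequence $m_n \to m$ in $W_2$ for $n$ large, forcing $z^\ast(m_n) \le z$; letting $z \downarrow z^\ast(m)$ yields $\limsup_n z^\ast(m_n) \le z^\ast(m)$.

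The main obstacle is lower semicontinuity, because $\Delta$ vanishes identically on the stopping region, so continuity alone does not produce a strict sign at generic points below the boundary. To fix this, I would establish a \emph{uniform quadratic non-degeneracy} of $\Delta$ just above the free boundary, in the spirit of obstacle-problem theory. The worker HJB on the continuation side, combined with value matching $V^W(z^\ast(m);m)=V^U(m)$ and smooth pasting $\partial_z V^W(z^\ast(m);m)=0$, gives
\[
  \tfrac{1}{2}\sigma^2\bigl(z^\ast(m),m\bigr)\,\partial_{zz} V^W\bigl(z^\ast(m);m\bigr)
  = r V^U(m) - \sup_{a\in A}\bigl[w(z^\ast(m),m) - c(a) + G\bigl(z^\ast(m),V^W,m;a\bigr)\bigr].
\]
The right-hand side is strictly positive at the boundary of the stopping region, since otherwise continuation would be weakly optimal at $z^\ast(m)$, contradicting the definition of the threshold. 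Uniform ellipticity $\sigma^2 \ge \underline{\sigma}^2 > 0$ and continuity of the payoff block in $m$ then yield, in a $W_2$-neighborhood of any $m_0$, a uniform lower bound $\partial_{zz} V^W(z^\ast(m);m) \ge c_0 > 0$, which by Taylor expansion upgrades to $\Delta(z;m) \ge (c_0/4)(z-z^\ast(m))^2$ for $z \in [z^\ast(m),z^\ast(m)+\delta]$, uniformly for $m$ near $m_0$ and some $\delta>0$.

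Lower semicontinuity then follows by contradiction. If $m_n \to m$ but $z^\ast(m_n) \to z^\infty < z^\ast(m)$, picking any $z \in (z^\infty, z^\ast(m))$ gives $z - z^\ast(m_n) \ge (z-z^\infty)/2$ for large $n$, so uniform non-degeneracy forces $\Delta(z;m_n) \ge c_0 (z-z^\infty)^2 / 16 > 0$, while joint continuity combined with $z < z^\ast(m)$ forces $\Delta(z;m_n) \to \Delta(z;m) = 0$, a contradiction. The hard part throughout is the uniform non-degeneracy step: one must verify that the strict inequality on the right-hand side of the $\partial_{zz}V^W$ identity survives $W_2$-perturbations of $m$, which rests on continuity of $(w,\Pi,\lambda,\nu,V^U)$ in $m$ together with uniform ellipticity of $\sigma$.
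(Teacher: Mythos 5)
Your route differs from the paper's. The paper's proof is a short implicit-function argument: joint continuity of $V^W$ in $(z,m)$ (from Lemma~\ref{lem:worker_best_response}), continuity of $V^U(m)$, and the claimed strict monotonicity of $z\mapsto V^W(z;m)$ are used to treat $z^\ast(m)$ as the unique zero of $V^W(\cdot;m)-V^U(m)$, after which a subsequence/contradiction argument gives continuity. You instead split into semicontinuities and attack the delicate direction (lower semicontinuity) with a quadratic non-degeneracy estimate at the free boundary. Your diagnosis of where the difficulty sits is exactly right — since $V^W\equiv V^U$ on $(-\infty,z^\ast(m)]$, the equation $V^W(z;m)=V^U(m)$ does \emph{not} have a unique solution, and continuity of $\Delta$ alone cannot rule out the boundary collapsing downward — so your architecture is, in principle, the more careful one.

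The genuine gap is the non-degeneracy step itself. From the obstacle problem you only obtain the weak inequality $r V^U(m)-\sup_{a}\bigl[w(z,m)-c(a)+G(z,V^W,m;a)\bigr]\ge 0$ on the stopping region (because $V^W\equiv V^U$ there forces the supersolution inequality), and by continuity in $z$ this passes to $z^\ast(m)$ only with ``$\ge$''. Your justification for strict positivity — ``otherwise continuation would be weakly optimal at $z^\ast(m)$, contradicting the definition of the threshold'' — does not work: at the threshold continuation \emph{is} weakly optimal by value matching, so equality of the obstacle drift at the single point $z^\ast(m)$ (a degenerate, second-order-contact boundary) is perfectly consistent with $(-\infty,z^\ast(m)]$ being the stopping region. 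Ruling it out requires additional structure (e.g., strict monotonicity of the effective flow payoff in $z$ together with an option-value argument placing $z^\ast(m)$ strictly below the myopic threshold), which is neither in Assumption~\ref{ass:primitives} nor supplied in your argument. Two further points would also need repair: the claimed uniform bound $\partial_{zz}V^W(z^\ast(m);m)\ge c_0>0$ over a $W_2$-neighborhood is circular as stated, since it is evaluated at $z^\ast(m)$, the very object whose continuity is being proved (it can be salvaged inside your contradiction, where $z^\ast(m_n)\to z^\infty$ lies in the interior of the stopping region of the limit $m$, but then you need strict positivity of the obstacle drift in that interior, again an unproved structural condition); and the Taylor expansion requires $C^2$ regularity of $V^W$ up to the boundary with bounds locally uniform in $m$, which Lemma~\ref{lem:worker_best_response} (continuity in $C_{\mathrm{loc}}$ only) does not provide.
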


\begin{proof}
By Lemma~\ref{lem:worker_best_response}, the worker's value function $V^W(z;m)$ is continuous jointly in $(z,m)$ and strictly increasing in $z$. The boundary is defined implicitly by
\[
  V^W\bigl(z^\ast(m);m\bigr) = V^U(m),
\]
where $V^U(m)$ is continuous in $m$ by construction. Let $(m_n)_{n \geq 1} \subset \mathcal{P}_2(\mathbb{R})$ with $m_n \to m$ in $W_2$. By continuity of $(z,m) \mapsto V^W(z;m)$ and of $m \mapsto V^U(m)$, we have
\[
  V^W\bigl(z^\ast(m_n);m_n\bigr) - V^U(m_n) = 0
  \quad\text{and}\quad
  V^W\bigl(z^\ast(m);m\bigr) - V^U(m) = 0.
\]
Strict monotonicity of $z \mapsto V^W(z;m)$ implies that $z^\ast(m)$ is the unique solution to $V^W(z;m) = V^U(m)$. If $(z^\ast(m_n))$ did not converge to $z^\ast(m)$, we could extract a subsequence converging to some $\bar z \neq z^\ast(m)$. By joint continuity,
\[
  V^W(\bar z;m)
  = \lim_{k \to \infty} V^W\bigl(z^\ast(m_{n_k});m_{n_k}\bigr)
  = \lim_{k \to \infty} V^U(m_{n_k})
  = V^U(m),
\]
contradicting uniqueness of $z^\ast(m)$. Hence $z^\ast(m_n) \to z^\ast(m)$, which proves continuity.
\end{proof}

\begin{lemma}[Monotone comparative statics for the threshold]\label{lem:boundary_monotone}
Let $\theta$ be a scalar parameter that affects the primitives as in Assumption~\ref{ass:comparative_statics}. Suppose that, for each fixed $m$, higher values of $\theta$ make continuation more attractive at a given surplus in the sense that there exists an interval $I$ containing the relevant thresholds such that, for any $\theta_2 > \theta_1$,
\[
  V^W_{\theta_2}(z;m) - V^U_{\theta_2}(m)
  \;\geq\;
  V^W_{\theta_1}(z;m) - V^U_{\theta_1}(m)
  \qquad \text{for all } z \in I,
\]
where $V^W_\theta(\cdot;m)$ and $V^U_\theta(m)$ denote the worker's value function and outside option under parameter $\theta$. Then, for each fixed $m$, the free boundary $z^\ast_\theta(m)$ is weakly decreasing in $\theta$.
\end{lemma}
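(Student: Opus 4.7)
The plan is to reduce the claim to a direct consequence of the implicit definition of the free boundary from Proposition~\ref{prop:free_boundary} together with the monotone-in-$\theta$ hypothesis on the gap $\Delta_\theta(z;m) := V^W_\theta(z;m) - V^U_\theta(m)$. I expect the write-up to be short: once the right objects are in place, the argument is essentially a one-line contrapositive read off from strict monotonicity in $z$.

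First I would recall that, for each $(\theta,m)$, Proposition~\ref{prop:free_boundary} delivers a unique threshold $z^\ast_\theta(m)$ characterized by the value-matching identity $\Delta_\theta(z^\ast_\theta(m);m) = 0$, together with strict monotonicity of $z \mapsto \Delta_\theta(z;m)$ across the threshold. Fixing $\theta_2 > \theta_1$ and evaluating the hypothesis at the point $z = z^\ast_{\theta_1}(m) \in I$ gives $\Delta_{\theta_2}(z^\ast_{\theta_1}(m);m) \geq \Delta_{\theta_1}(z^\ast_{\theta_1}(m);m) = 0$. Since $\Delta_{\theta_2}(\cdot;m)$ is strictly increasing across its own threshold and vanishes at $z^\ast_{\theta_2}(m)$, this pointwise inequality forces $z^\ast_{\theta_2}(m) \leq z^\ast_{\theta_1}(m)$, which is the weak monotonicity claimed.

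Two points deserve brief attention for completeness. First, the argument requires that the interval $I$ supplied by the hypothesis contains both $z^\ast_{\theta_1}(m)$ and $z^\ast_{\theta_2}(m)$; this is by design, since the statement refers to $I$ as an interval containing the relevant thresholds, and by Lemma~\ref{lem:boundary_continuity} together with the a priori bounds on the boundary implied by the linear-growth conditions in Assumption~\ref{ass:primitives} one can always take $I$ to be a compact interval containing the thresholds uniformly over any bounded range of parameters. Second, if one wishes to promote the weak inequality to a strict one it suffices to assume that the pointwise ordering of $\Delta_\theta$ in $\theta$ is strict at $z^\ast_{\theta_1}(m)$; no strengthening is needed for the weak comparative statics as stated.

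The hard part, if any, lies upstream rather than in the lemma itself: verifying the pointwise ordering of $\Delta_\theta$ is exactly the step that must be established in any given application, for instance when perturbing the volatility $\sigma_\theta$ in Proposition~\ref{prop:comparative_statics}, typically via a coupling of the surplus diffusions under $\theta_1$ and $\theta_2$ or a convex-order comparison of the induced hitting-time distributions. Lemma~\ref{lem:boundary_monotone} is best read as an abstract threshold-monotonicity lemma that packages this reduction once and for all, so its proof is intentionally short and the substantive content of the comparative statics lives in its hypothesis.
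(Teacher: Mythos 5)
Your proof is correct and follows essentially the same route as the paper: both reduce the lemma to the single-crossing/value-matching characterization of $\Delta_\theta(\cdot;m) = V^W_\theta(\cdot;m) - V^U_\theta(m)$ from Proposition~\ref{prop:free_boundary} and read the threshold ordering directly off the pointwise ordering in $\theta$. The only cosmetic difference is that the paper phrases the step as containment of stopping regions on $I$ (whenever $\Delta_{\theta_2}(z)\le 0$ one also has $\Delta_{\theta_1}(z)\le 0$) rather than evaluating the ordering at the single point $z^\ast_{\theta_1}(m)$; the set-containment phrasing is marginally more robust in the obstacle formulation where $\Delta_{\theta_2}$ is identically zero (rather than strictly negative) on its stopping region, but under the strict single-crossing asserted in Proposition~\ref{prop:free_boundary} the two arguments coincide.
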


\begin{proof}
Fix $m$ and two parameter values $\theta_1 < \theta_2$. Denote the corresponding worker value functions by $V^W_{\theta_k}(\cdot;m)$ and the outside values by $V^U_{\theta_k}(m)$. For each $\theta_k$, the difference
\[
  z \mapsto V^W_{\theta_k}(z;m) - V^U_{\theta_k}(m)
\]
is continuous, strictly increasing, and crosses zero exactly once, at $z^\ast_{\theta_k}(m)$, by the argument in the proof of Proposition~\ref{prop:free_boundary}. The assumed inequality implies that whenever
\[
  V^W_{\theta_2}(z;m) - V^U_{\theta_2}(m) \le 0,
\]
we also have
\[
  V^W_{\theta_1}(z;m) - V^U_{\theta_1}(m) \le 0.
\]
This is important since, otherwise said, the stopping region for $\theta_2$ is contained in the stopping region for $\theta_1$ on $I$. Hence the zero of $z \mapsto V^W_{\theta_2}(z;m) - V^U_{\theta_2}(m)$ cannot lie to the right of the zero of $z \mapsto V^W_{\theta_1}(z;m) - V^U_{\theta_1}(m)$, and therefore
\[
  z^\ast_{\theta_2}(m) \;\le\; z^\ast_{\theta_1}(m).
\]
This proves that $z^\ast_\theta(m)$ is weakly decreasing in $\theta$.
\end{proof}

Lemma~\ref{lem:boundary_monotone} yields the monotonicity of the separation threshold reported in Proposition~\ref{prop:comparative_statics}; firing costs, search subsidies, and volatility shifts satisfy the single-crossing property in Assumption~\ref{ass:comparative_statics}. Combined with the continuity of the mapping from $(m,z^\ast)$ to the invariant surplus distribution solving~\eqref{eq:FP_stationary_app}, it also underpins the comparative statics for the stationary wage distribution discussed in Section~\ref{sec:results}.

\section{Numerical methods and additional figures}
\label{appendix:numerical}

This appendix gives a self-contained description of the numerical implementation of the stationary mean field equilibrium in Section~\ref{sec:quantitative} and documents the figures referenced in Section~\ref{sec:results}. We start by describing the state-space truncation and the finite-difference schemes for the worker and firm HJB obstacle problems and for the stationary Kolmogorov equation. We then lay out the fixed-point algorithm and its interpretation as a finite-state mean field game. Finally, we report convergence diagnostics, robustness checks, and detailed descriptions of Figures~\ref{fig:B1_value_function}-\ref{fig:B8_policy_experiments}, which are produced by our replication code in Python.

\subsection{State space discretization and boundary conditions}
\label{app:state_space}

We approximate the continuous surplus state space $\mathbb{R}$ by a compact interval
\[
  [z_{\min}, z_{\max}],
\]
chosen so that in the calibrated stationary equilibrium the probability that $Z_t$ lies outside $[z_{\min},z_{\max}]$ is numerically negligible. In practice, we select $[z_{\min},z_{\max}]$ so that the stationary distribution assigns essentially zero mass to both tails and verify in Appendix~\ref{app:diagnostics} that further enlarging the interval leaves all equilibrium objects of interest unchanged at the precision reported in Section~\ref{sec:results}. The lower bound $z_{\min}$ is chosen strictly below the equilibrium free boundary, and the upper bound $z_{\max}$ in a region where continuation is strictly optimal, so that the numerical truncation is innocuous in reality.

The interval is discretized on an equidistant grid
\[
  z_k = z_{\min} + (k-1)\,\Delta z,
  \qquad k = 1,\dots,K,
  \qquad \Delta z = \frac{z_{\max}-z_{\min}}{K-1}.
\]
All value functions, policies, and densities are represented on this grid. We denote by
\[
  V^W_k \approx V^W(z_k), \quad
  V^F_k \approx V^F(z_k), \quad
  m_k \approx m(z_k), \quad
  a_k \approx a^\ast(z_k,m), \quad
  w_k \approx w^\ast(z_k,m)
\]
the corresponding discrete approximations. The feasible search-intensity set $A$ is discretized on a finite grid $A^h = \{a_1,\dots,a_{N_a}\}$, turning the continuous maximization in the HJB equation into a finite maximization at each node.

At $z_{\min}$ and $z_{\max}$ we impose no-flux (reflecting) boundary conditions for the stationary Kolmogorov equation, approximating the natural boundary conditions of the continuous diffusion:
\begin{equation}
  \mu(z,a(z,m),m)\,m(z)
  - \tfrac{1}{2}\,\partial_z\bigl(\sigma^2(z,m)\,m(z)\bigr)
  = 0
  \quad \text{at } z \in \{z_{\min},z_{\max}\}.
  \label{eq:no_flux_bc_app}
\end{equation}
These are implemented via one-sided finite-volume discretizations (Appendix~\ref{app:FP_scheme}), which guarantee that total mass $\sum_k m_k \Delta z$ is preserved up to machine precision.

The free boundary $z^\ast(m)$ that characterizes the optimal separation rule is constrained to lie inside $[z_{\min},z_{\max}]$. In all calibrations, $z_{\min}$ is chosen so low that separation is always optimal for $z \leq z_{\min}$, while $z_{\max}$ is high enough that continuation is strictly optimal for $z \geq z_{\max}$. Numerical boundaries therefore do not bind in the region de facto relevant in real economies.

\subsection{Finite-difference discretization of the HJB obstacle problem}
\label{app:HJB_scheme}

We discretize the stationary worker HJB obstacle problem on $\{z_k\}$ using a monotone finite-difference scheme that is first order in space for the drift term and second order for the diffusion term, in the spirit of \citet{AchdouCapuzzoDolcetta2010NumericsMFG} and \citet{AchdouEtAl2022Restud}. For a fixed stationary distribution $m$ and wage policy $w(\cdot,m)$, the worker HJB in the continuation region can be written as
\begin{equation}
  r V^W(z;m)
  = \sup_{a\in A}
      \Bigl\{
        w(z,m) - c(a)
        + \mu(z,a,m) \, \partial_z V^W(z;m)
        + \tfrac{1}{2}\sigma^2(z,m)\,\partial_{zz} V^W(z;m)
        + G\bigl(z,V^W,m;a\bigr)
      \Bigr\},
  \label{eq:HJB_cont_app}
\end{equation}
subject to the obstacle $V^W(z;m) \geq V^U(m)$ that captures the option to separate.

\subsubsection*{Discrete operators and upwinding}

For interior nodes we use standard finite differences. Given a vector
$x = (x_1,\dots,x_K)$,
\[
  (D^- x)_k = \frac{x_k - x_{k-1}}{\Delta z},
  \qquad
  (D^+ x)_k = \frac{x_{k+1} - x_k}{\Delta z},
\]
and
\[
  (D^{zz} x)_k = \frac{x_{k+1} - 2x_k + x_{k-1}}{\Delta z^2},
  \qquad k = 2,\dots,K-1.
\]
At the boundaries we use one-sided second-order stencils. At the lower
boundary,
\[
  (D^- x)_1 = (D^+ x)_1 = \frac{x_2 - x_1}{\Delta z},
  \qquad
  (D^{zz} x)_1 = \frac{x_3 - 2x_2 + x_1}{\Delta z^2},
\]
and at the upper boundary,
\[
  (D^- x)_K = (D^+ x)_K = \frac{x_K - x_{K-1}}{\Delta z},
  \qquad
  (D^{zz} x)_K = \frac{x_K - 2x_{K-1} + x_{K-2}}{\Delta z^2}.
\]
These one-sided formulas implement homogeneous conditions
$\partial_z V^W = 0$ at $z_{\min}$ and $z_{\max}$; in the calibration
we choose $[z_{\min},z_{\max}]$ so wide that these numerical boundaries
do not bind in the economically relevant region.

To preserve monotonicity, we approximate $\partial_z V^W$ at $z_k$ using an upwind scheme that depends on the sign of the local drift. Let
\[
  \mu_k(a,m) := \mu(z_k,a,m), \qquad
  \mu_k^+(a,m) = \max\{\mu_k(a,m),0\}, \qquad
  \mu_k^-(a,m) = \min\{\mu_k(a,m),0\},
\]
and write $\sigma_k^2(m) := \sigma^2(z_k,m)$. Then
\[
  \mu(z_k,a,m)\,\partial_z V^W(z_k;m)
  \approx
  \mu_k^+(a,m)\,(D^- V^W)_k
  +
  \mu_k^-(a,m)\,(D^+ V^W)_k,
\]
and
\[
  \tfrac{1}{2}\sigma^2(z_k,m)\,\partial_{zz} V^W(z_k;m)
  \approx \tfrac{1}{2}\sigma_k^2(m)\,(D^{zz} V^W)_k.
\]

For a given candidate control $a_k$ at node $k$, the discrete HJB operator is
\begin{align*}
  \mathcal{H}_k(V^W;a_k,m)
  &:= w_k - c(a_k)
      + \mu_k^+(a_k,m)\,(D^- V^W)_k
      + \mu_k^-(a_k,m)\,(D^+ V^W)_k
      + \tfrac{1}{2}\sigma_k^2(m)\,(D^{zz} V^W)_k \\
  &\quad + G_k\bigl(V^W,m;a_k\bigr),
\end{align*}
where $G_k(\cdot)$ is a discrete, monotone approximation to the operator $G$ in~\eqref{eq:HJB_cont_app}, implemented as a quadrature rule in the code.

The stationary HJB at node $k$ in the continuation region is approximated by
\[
  r V^W_k = \sup_{a \in A^h} \mathcal{H}_k(V^W;a,m).
\]

\subsubsection*{Policy iteration and obstacle projection}

We solve the discrete obstacle problem via policy iteration, as is standard for stationary HJB equations in MFG and also in heterogeneous-agent models.\footnote{See, among others, \citet{AchdouEtAl2022Restud} and \citet{CarmonaLauriere2021DeepLearningMFG}.} Let $V^{W,(n)}$ denote the worker value vector at iteration $n$.

\begin{enumerate}
  \item \emph{Policy improvement.} Given $V^{W,(n)}$, compute
  \[
    a^{(n+1)}_k
    \in \arg\max_{a \in A^h} \mathcal{H}_k\bigl(V^{W,(n)};a,m\bigr)
  \]
  at each node $k$, and define the continuation set
  \[
    \mathcal{C}^{(n)} = \{k \colon V^{W,(n)}_k > V^U(m)\},
  \]
  with stopping set $\mathcal{S}^{(n)} = \{1,\dots,K\} \setminus \mathcal{C}^{(n)}$.

  \item \emph{Policy evaluation.}

  Given $a^{(n+1)}$, solve
\[
  r V^{W,(n+1)}_k
    = \mathcal{H}_k\bigl(V^{W,(n+1)};a^{(n+1)}_k,m\bigr),
    \quad k \in \mathcal{C}^{(n)},
\]
using the one-sided discrete derivatives at $k=1$ and $k=K$ defined
above, and impose
\[
  V^{W,(n+1)}_k = V^U(m),
  \qquad k \in \mathcal{S}^{(n)}.
\]
\end{enumerate}

Under the assumptions in Section~\ref{sec:existenceUniqueness}, this scheme converges to the unique solution of the discrete obstacle problem, and the discrete free-boundary index $k^\ast$ is identified as the smallest $k$ with $V^W_k > V^U(m)$. The associated surplus threshold is $z^\ast = z_{k^\ast}$. The firm HJB is discretized analogously on the same grid and with the same operators, with a separate policy iteration for the wage schedule $(w_k)_k$.

\subsection{Discretization of the stationary Kolmogorov equation}
\label{app:FP_scheme}

Given a stationary distribution $m$ and best-response policies $(a^\ast,w^\ast)$, the stationary Kolmogorov equation for the density $m(z)$ can be written in flux form as
\begin{equation}
  0 = -\partial_z J(z) - q(z,m)\,m(z) + \Gamma(z,m),
  \label{eq:FP_flux_app}
\end{equation}
where
\[
  J(z) 
  = \mu(z,a^\ast(z,m),m)\, m(z)
    - \tfrac{1}{2}\,\partial_z\bigl(\sigma^2(z,m)\, m(z)\bigr)
\]
is the probability flux, $q(z,m)$ is the separation intensity implied by the free-boundary rule, and $\Gamma(z,m)$ is the entry term.

We approximate \eqref{eq:FP_flux_app} by a conservative finite-volume scheme. Let $J_{k+1/2}$ denote the discrete flux across the cell boundary between $z_k$ and $z_{k+1}$:
\[
  J_{k+1/2}
  := \mu_{k+1/2}\,m_{k+1/2}
    - \tfrac{1}{2}\,\frac{\sigma^2_{k+1}\,m_{k+1} - \sigma^2_{k}\,m_{k}}{\Delta z},
\]
where $\mu_{k+1/2}$ and $m_{k+1/2}$ are suitable averages of $\mu_k,\mu_{k+1}$ and $m_k,m_{k+1}$, and $\sigma_k^2 = \sigma^2(z_k,m)$. 

The discrete stationary Kolmogorov equation on cell $k$ is
\[
  0
  = -\frac{J_{k+1/2} - J_{k-1/2}}{\Delta z}
    - q_k\,m_k + \Gamma_k,
  \qquad k = 2,\dots,K-1,
\]
with no-flux boundary conditions
\[
  J_{1/2} = J_{K+1/2} = 0
\]
at the left and right edges of the grid, implementing~(22). This yields a sparse,
banded linear system
\[
  \mathbf{A}(m,a^\ast,w^\ast)\,\mathbf{m} = \boldsymbol{\Gamma},
\]
where $\mathbf{A}$ is the discrete Kolmogorov operator and
$\mathbf{m} = (m_1,\dots,m_K)'$. We normalize so that
\[
  \sum_{k=1}^{K} m_k \,\Delta z = 1.
\]

In the benchmark calibration of Section~\ref{sec:quantitative}, the entry term $\Gamma(z,m)$ implements the assumption that new matches start at a common surplus $z_0$.\footnote{See the discussion of the inverse-Gaussian hitting-time calibration for $(\mu_P,\sigma_P,z_0)$ in Section~\ref{sec:quantitative}.} On the grid we therefore set $\Gamma_k(m)$ to be proportional to an indicator that $z_k$ is the grid point closest to $z_0$, so that inflows of new employment relationships enter at that node. The numerical implementation in the replication code is slightly more general and allows $\Gamma$ to place mass on an arbitrary set of grid points if one wishes to study robustness to a dispersed initial surplus distribution. As stated earlier, we keep the degenerate $z_0$ specification in the baseline to tie cross-sectional heterogeneity tightly to the diffusion-and-selection dynamics and because the tenure and hazard moments we target are not separately informative about dispersion in initial surpluses.

The finite-volume scheme is conservative by construction and, combined with the monotone HJB discretization, preserves non-negativity of the stationary density. In one dimension it is algebraically equivalent (up to $O(\Delta z)$ terms) to the backward/central-difference representation used in Section~\ref{sec:quantitative}.

\subsection{Fixed-point algorithm and finite-state MFG interpretation}
\label{app:fixed_point}

The stationary mean field equilibrium is computed as a fixed point of the mapping from a conjectured distribution $m$ to the invariant distribution induced by best-response policies to $m$. Numerically, we implement:

\begin{enumerate}
  \item \emph{Initialization.} Choose an initial guess $m^{(0)}$ on the grid (for instance, the stationary distribution of the partial-equilibrium benchmark in Section~\ref{sec:environment} or a diffuse lognormal density) and an associated outside value $V^{U,(0)}$.

  \item \emph{Policy step.} Given $m^{(\ell)}$, solve the worker and firm HJB problems using the finite-difference and policy-iteration schemes above, obtaining discrete policies
  \[
    a^{(\ell)}_k \approx a^\ast(z_k,m^{(\ell)}), 
    \qquad
    w^{(\ell)}_k \approx w^\ast(z_k,m^{(\ell)}),
  \]
  and a separation threshold $z^{\ast,(\ell)}$ with corresponding discrete stopping region.

  \item \emph{Distribution step.} Given $(a^{(\ell)},w^{(\ell)},z^{\ast,(\ell)})$, solve the discrete stationary Kolmogorov equation as in Appendix~\ref{app:FP_scheme} to obtain a new distribution $\tilde m^{(\ell+1)}$, normalized to integrate to one.

  \item \emph{Relaxation.} Update the distribution using
  \[
    m^{(\ell+1)}_k 
    = \omega \,\tilde m^{(\ell+1)}_k + (1-\omega)\,m^{(\ell)}_k,
    \quad k=1,\dots,K,
  \]
  with $\omega \in (0,1]$ chosen to stabilize the fixed-point iteration. Update $V^{U,(\ell+1)}$ accordingly.

  \item \emph{Stopping rule.} Iterate steps~2-4 until both the distribution and the wage schedule converge:
  \[
    \max_k \bigl|m^{(\ell+1)}_k - m^{(\ell)}_k\bigr| < \varepsilon_m,
    \qquad
    \max_k \bigl|w^{(\ell+1)}_k - w^{(\ell)}_k\bigr| < \varepsilon_w,
  \]
  for tolerances $(\varepsilon_m,\varepsilon_w)$.
\end{enumerate}

The grid-based system defines a finite-state mean field game in the sense of \citet{GomesEtAl2010FiniteStateDiscrete,GomesEtAl2024FiniteStateContinuous}: each grid point is a state of a continuous-time Markov chain, with transition rates determined by $(\mu,\sigma)$ and the separation rule, and payoffs given by $(w,c,\Pi)$. For a given $m$ and policy pair $(a^\ast,w^\ast)$, we can define a generator matrix $Q(m)$ with entries $q_{kj}(m)$ that approximate the transition rates of the surplus process from $z_k$ to $z_j$, including separation and re-entry. The stationary density $m$ is then a fixed point of the map
\[
  m \mapsto \Phi^h(m),
\]
where $\Phi^h(m)$ is the invariant distribution of the finite-state Markov chain with generator $Q(m)$. Existence and trend-to-equilibrium for such finite-state MFGs are established by \citet{GomesEtAl2010FiniteStateDiscrete,GomesEtAl2024FiniteStateContinuous}, and Theorem~\ref{thm:existence} shows that the continuous-state limit inherits this structure.

This fixed-point iteration can be interpreted as a sort of eductive learning procedure in the sense of \citet{Gueant2021ContinuousTimeOptimalControl}: agents conjecture a stationary distribution $m$, solve their HJB problems, obtain a Markov chain over states, and update their conjecture to the invariant distribution of that chain. A stationary mean field equilibrium is a fixed point of this learning map. Under the monotonicity conditions in Assumption~\ref{ass:monotonicity}, Theorem~\ref{thm:uniqueness} implies that this fixed point is unique; numerically, starting from very different initial guesses $m^{(0)}$ always leads to the same limiting distribution in the parameter region we study.

\subsection{Convergence diagnostics and sensitivity}
\label{app:diagnostics}

We summarize the diagnostics used to assess convergence of the numerical scheme and its sensitivity to discretization and algorithmic choices. Full numerical tables and additional plots are provided in the replication files.

\subsubsection*{Grid refinement}

We consider a sequence of grids with increasing numbers of points $K$ and decreasing step size $\Delta z$. For each grid we compute the stationary equilibrium $(V^h, m^h, z^{\ast,h})$ and record:
\begin{itemize}
  \item the $L^1$ distance between densities across grids,
    $\sum_k |m_k^{h} - m_k^{h'}| \Delta z$;
  \item the change in the free-boundary location $z^{\ast,h}$;
  \item key moments of the wage distribution (mean, variance, selected quantiles, and the model objects reported in Section~\ref{sec:results} such as the job-to-job rate and tenure-hazard profile).
\end{itemize}
Over the range of grids we consider, these objects are stable up to small numerical differences, and changes as $\Delta z \to 0$ are monotone. This behavior is consistent with the convergence of monotone schemes for coupled HJB–Fokker–Planck systems documented in \citet{AchdouCapuzzoDolcetta2010NumericsMFG} and \citet{AchdouEtAl2022Restud}.

\subsubsection*{Time-step robustness in policy iteration}

Although we work directly with the stationary HJB equation, policy iteration can be viewed as the limit of a time-discrete value iteration with step size $\Delta t$. In auxiliary experiments we replace policy iteration with a damped value iteration that solves
\begin{equation}
  V^{(n+1)}_k
  = (1-\theta)\, V^{(n)}_k
    + \theta \left[
      \frac{1}{r + 1/\Delta t}
      \max\Bigl\{
        w_k - c(a)
        + (\mathcal{L}^h_{a,m} V^{(n)})_k
        + G^h_k\bigl(a;V^{(n)},m\bigr),
        \; r\,V^U(m)
      \Bigr\}
    \right],
  \label{eq:value_iteration_backwards_euler}
\end{equation}
for a small $\Delta t$ and damping parameter $\theta \in (0,1)$. Inside the maximum both arguments are flows (utils per unit of time), i.e. the current-flow term from continuation and the flow equivalent $r V^U(m)$ of the outside option, and the common factor $1/(r + 1/\Delta t)$ converts these flows into value levels, so~\eqref{eq:value_iteration_backwards_euler} can be interpreted as a standard backward-Euler value-iteration step for the stationary HJB obstacle problem. The stationary solution obtained in this way coincides with the policy-iteration solution within numerical tolerance, consistent with the contraction-mapping properties of the discounted HJB operator.

\subsubsection*{Sensitivity to boundary truncation}

We vary $z_{\min}$ and $z_{\max}$ over a wide range while keeping the grid resolution fixed. For each configuration we recompute the stationary equilibrium and compare the implied wage and surplus distributions in the central region where $m^h$ has non-negligible mass. Note that as long as the boundaries are placed sufficiently far in the tails, the central part of the distribution and the wage-dispersion statistics reported in Section~\ref{sec:results} are virtually unchanged. This confirms that the results do not depend on arbitrary truncation choices or on the particular implementation of the no-flux conditions in~\eqref{eq:no_flux_bc_app}.

\subsubsection*{Search-intensity discretization}

We also vary the cardinality and placement of the search-intensity grid $A^h$. Increasing $N_a$ beyond the baseline has negligible effects on equilibrium outcomes, while very coarse grids generate small but measurable changes in the search policy at low surplus levels. The baseline calibration uses a grid dense enough that further refinement has no effect that is relevant in practice, in line with standard practice in numerical dynamic programming.

\subsubsection*{Mass conservation, non-negativity, and internal consistency}

For each converged equilibrium we verify that the discrete stationary Kolmogorov system implies
\[
  \sum_{k=1}^{K} m_k \,\Delta z = 1
\]
up to numerical round-off, and that all masses satisfy $m_k \geq 0$ to machine precision. The finite-volume scheme and the monotone HJB discretization guarantee these properties.

As an additional check, we simulate long sample paths of the controlled surplus diffusion using the converged policies and separation rule, construct the empirical stationary distribution, and compare it to the solution of the discrete stationary Kolmogorov equation. The two distributions coincide up to small sampling noise, confirming the internal consistency of the discretization and the finite-state MFG interpretation.

\subsubsection*{Robustness to algorithmic parameters}

We explore robustness to algorithmic parameters such as the relaxation coefficient $\omega$, the stopping tolerances $(\varepsilon_m,\varepsilon_w)$, and the initial distribution $m^{(0)}$. For a broad range of choices, the algorithm converges to the same stationary equilibrium, and the equilibrium objects reported in Section~\ref{sec:results} change by amounts far smaller than the empirical uncertainty in the calibration targets. Under the monotonicity and uniqueness conditions in Section~\ref{sec:existenceUniqueness}, this is the unique stationary equilibrium; in numerical experiments where we deliberately push parameters outside that region, multiple fixed points can in principle arise, in which case the algorithm selects the one that is locally stable under the eductive iteration in Appendix~\ref{app:fixed_point}.

\subsection{Additional figures}
\label{app:extra_figures}

We now document the additional figures that accompany the numerical implementation, calibration, and policy experiments in Sections~\ref{sec:quantitative} and~\ref{sec:results}. All figures are generated by the same solver that computes the benchmark stationary equilibrium, using the replication scripts.

For each figure we state the main objects plotted and highlight what they show about the model’s mechanics, in particular the free boundary, the job ladder, wage dispersion, and the variance decomposition.

\medskip
\noindent\textbf{Figure~\ref{fig:B1_value_function}: Value functions and the free boundary.}

\begin{figure}[t]
  \centering
  \includegraphics[width=0.8\textwidth]{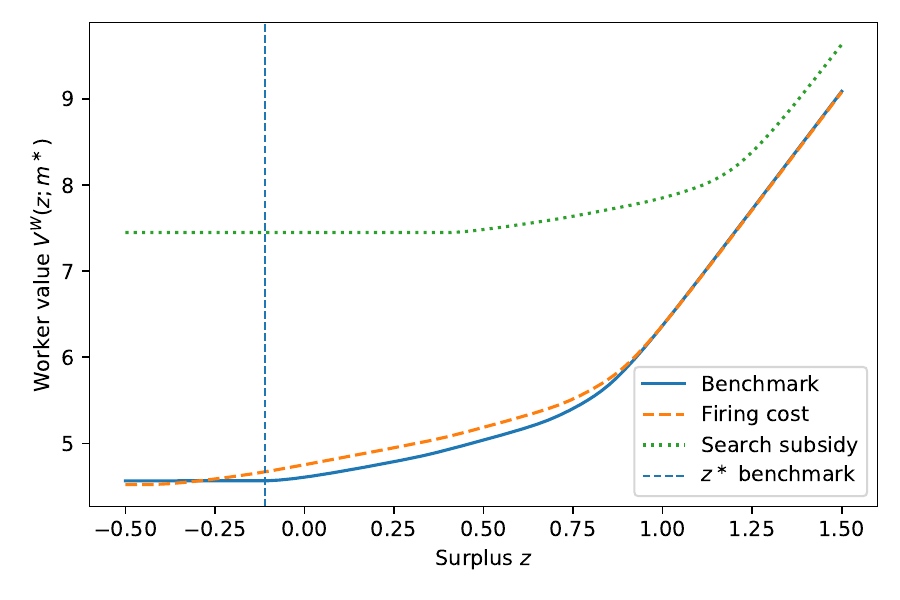}
  \caption{Worker value function $V^W(z;m^\ast)$ and free boundary in the benchmark calibration (solid line), with selected policy counterfactuals (dashed lines). The vertical line marks the separation threshold $z^\ast(m^\ast)$.}
  \label{fig:B1_value_function}
\end{figure}

Figure~\ref{fig:B1_value_function} plots the worker value function
$V^W(z;m^\ast)$ in the benchmark stationary equilibrium together with
the free boundary $z^\ast(m^\ast)$, as discussed in
Section~\ref{sec:results}. The smooth pasting of the derivative at
$z^\ast(m^\ast)$, where $V^W$ meets the flat outside-option level with
a continuous slope, illustrates the free-boundary structure proved in
Proposition~\ref{prop:free_boundary}; the visible change in curvature
around the threshold reflects the switch between the stopping and
continuation regions. The dashed curves show how $V^W$ and $z^\ast$
shift under representative policy experiments (higher firing costs,
lower search costs), making transparent the option-value and selection
forces behind the comparative statics in Section~\ref{sec:results}.

\medskip
\noindent\textbf{Figure~\ref{fig:B2_surplus_density}: Stationary surplus distribution and the job ladder.}

\begin{figure}[t]
  \centering
  \includegraphics[width=0.8\textwidth]{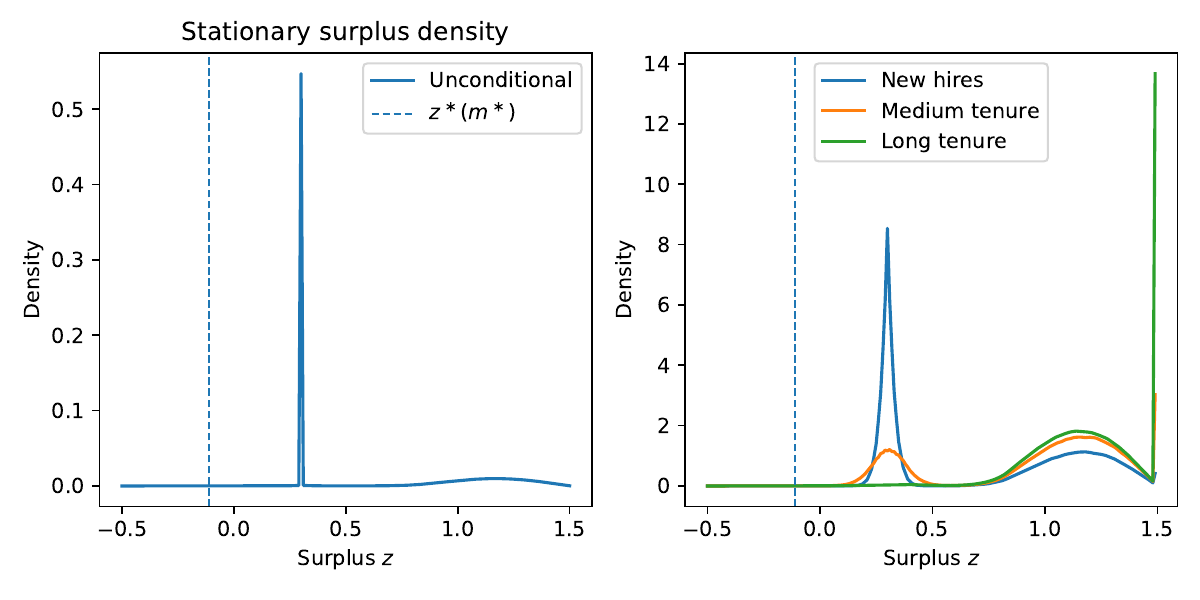}
  \caption{Stationary surplus density $m^\ast(z)$ and its decomposition by job age. Left panel: unconditional stationary density $m^\ast(z)$. Right panel: densities for new hires, medium-tenure workers, and long-tenure workers. The vertical line marks $z^\ast(m^\ast)$.}
  \label{fig:B2_surplus_density}
\end{figure}

Figure~\ref{fig:B2_surplus_density} shows the stationary surplus density $m^\ast(z)$ and decomposes it by job age. Mass is concentrated in a region of positive surplus, with a thin but economically meaningful left tail approaching the free boundary. New matches enter close to the threshold, while surviving high-tenure matches are disproportionately located in the upper tail. This is the diffusion-and-selection job ladder emphasized in Section~\ref{sec:results}: the combination of idiosyncratic shocks and the stopping rule gradually reallocates workers away from low-surplus matches toward high-surplus ones.

\medskip
\noindent\textbf{Figure~\ref{fig:B3_wage_distribution}: Wage distribution by tenure and mobility.}

\begin{figure}[t]
  \centering
  \includegraphics[width=0.8\textwidth]{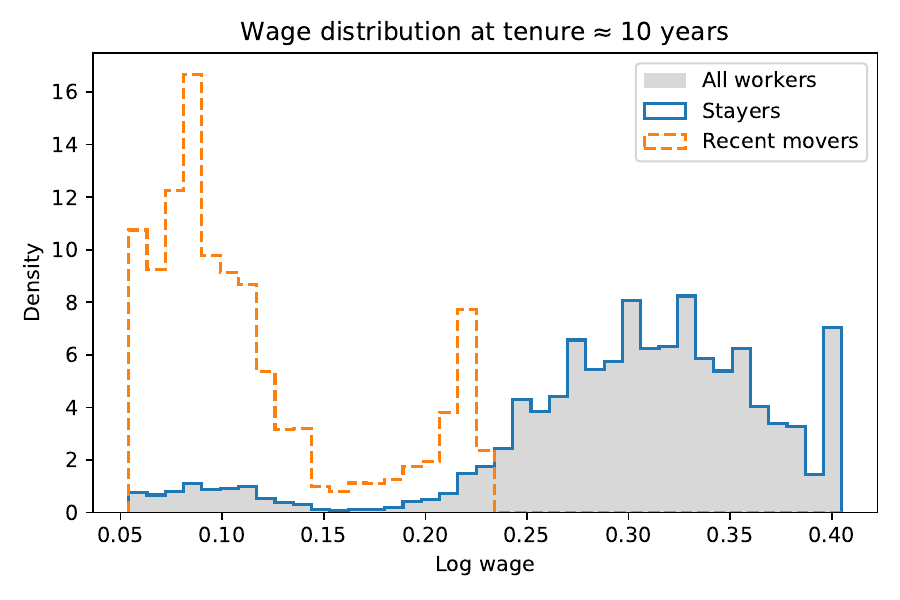}
  \caption{Wage distribution by tenure and mobility. Notes: histograms of stationary wages for (i) all workers in a representative bin around ten years of current-job tenure, (ii) stayers (workers who have never experienced a job-to-job move up to that point), and (iii) job-to-job movers (workers whose employment history includes at least one employer-to-employer transition). The figure highlights how job-to-job mobility thickens the upper tail of the wage distribution even conditional on tenure.}
  \label{fig:B3_wage_distribution}
\end{figure}

Figure~\ref{fig:B3_wage_distribution} reports, for this representative tenure bin, the stationary wage distribution decomposed by mobility status. Within the cohort of workers with roughly ten years of current-job tenure, the wage distribution for job-to-job movers is more dispersed and shifted toward the upper tail relative to that of stayers. This pattern reflects the fact that job-to-job transitions are the main mechanism for reaching high-surplus, high-wage states, and it links the stochastic surplus dynamics directly to observed wage dispersion conditional on tenure.

\medskip
\noindent\textbf{Figure~\ref{fig:B4_tenure_hazard}: Fit to the tenure-hazard profile.}

\begin{figure}[t]
  \centering
  \includegraphics[width=0.8\textwidth]{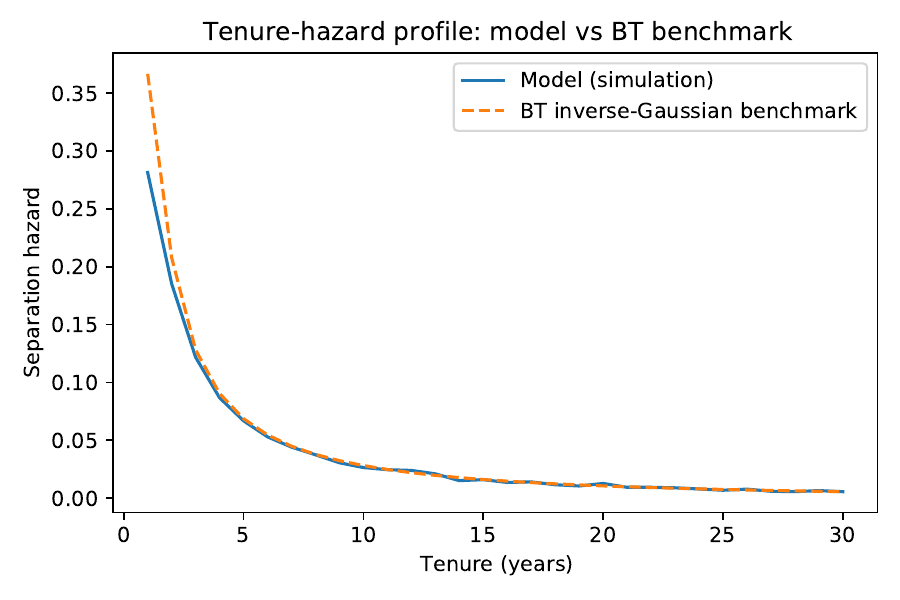}
  \caption{Tenure-hazard profile: model versus inverse-Gaussian benchmark. The solid line shows the model-implied separation hazard by tenure; the dashed line shows the inverse-Gaussian hazard implied by \citet{BuhaiTeulings2014} for a worker with average characteristics.}
  \label{fig:B4_tenure_hazard}
\end{figure}

Figure~\ref{fig:B4_tenure_hazard} compares the tenure-dependent separation hazard in the calibrated mean field equilibrium to the inverse-Gaussian hazard estimated by \citet{BuhaiTeulings2014}. The model reproduces the early-tenure hazard peak, the subsequent decline, and the nontrivial mass of very long spells, while embedding these duration patterns in a full equilibrium with search, wages, and endogenous outside options.

\medskip
\noindent\textbf{Figure~\ref{fig:B5_wage_tenure}: Wage-tenure profile and deterministic versus random components.}

\begin{figure}[t]
  \centering
  \includegraphics[width=0.8\textwidth]{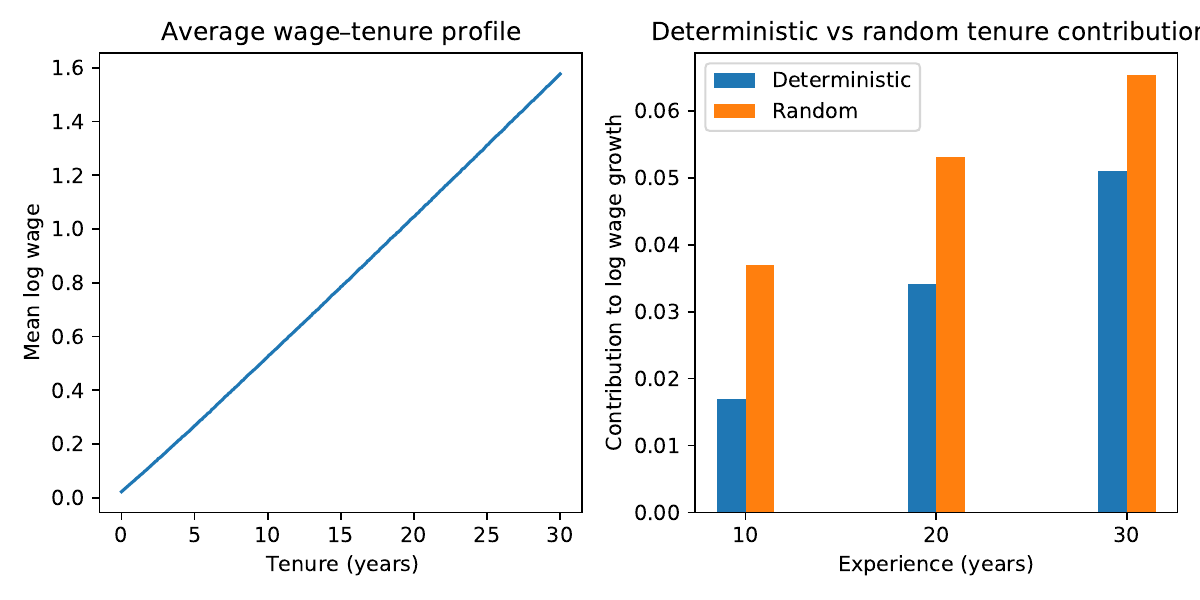}
  \caption{Average wage-tenure profile and deterministic versus random contributions to wage growth. Left panel: mean log wage by tenure. Right panel: decomposition of wage growth at 10, 20, and 30 years into deterministic and random tenure components, in the sense of \citet{BuhaiTeulings2014}.}
  \label{fig:B5_wage_tenure}
\end{figure}

Figure~\ref{fig:B5_wage_tenure} shows the model-implied wage-tenure profile and the decomposition of wage growth into deterministic and random components at 10, 20, and 30 years of experience. The left panel confirms that the calibrated model matches the concave average wage-tenure profile emphasized in the data. The right panel reproduces the deterministic-versus-random contributions from the structural decomposition of \citet{BuhaiTeulings2014}: most of the apparent returns to tenure come from selection on outside options rather than deterministic wage growth, even though the environment here is a full equilibrium mean field game rather than the partial-equilibrium benchmark.

\medskip
\noindent\textbf{Figure~\ref{fig:B6_convergence_paths}: Convergence of the fixed-point iteration.}

\begin{figure}[t]
  \centering
  \includegraphics[width=0.8\textwidth]{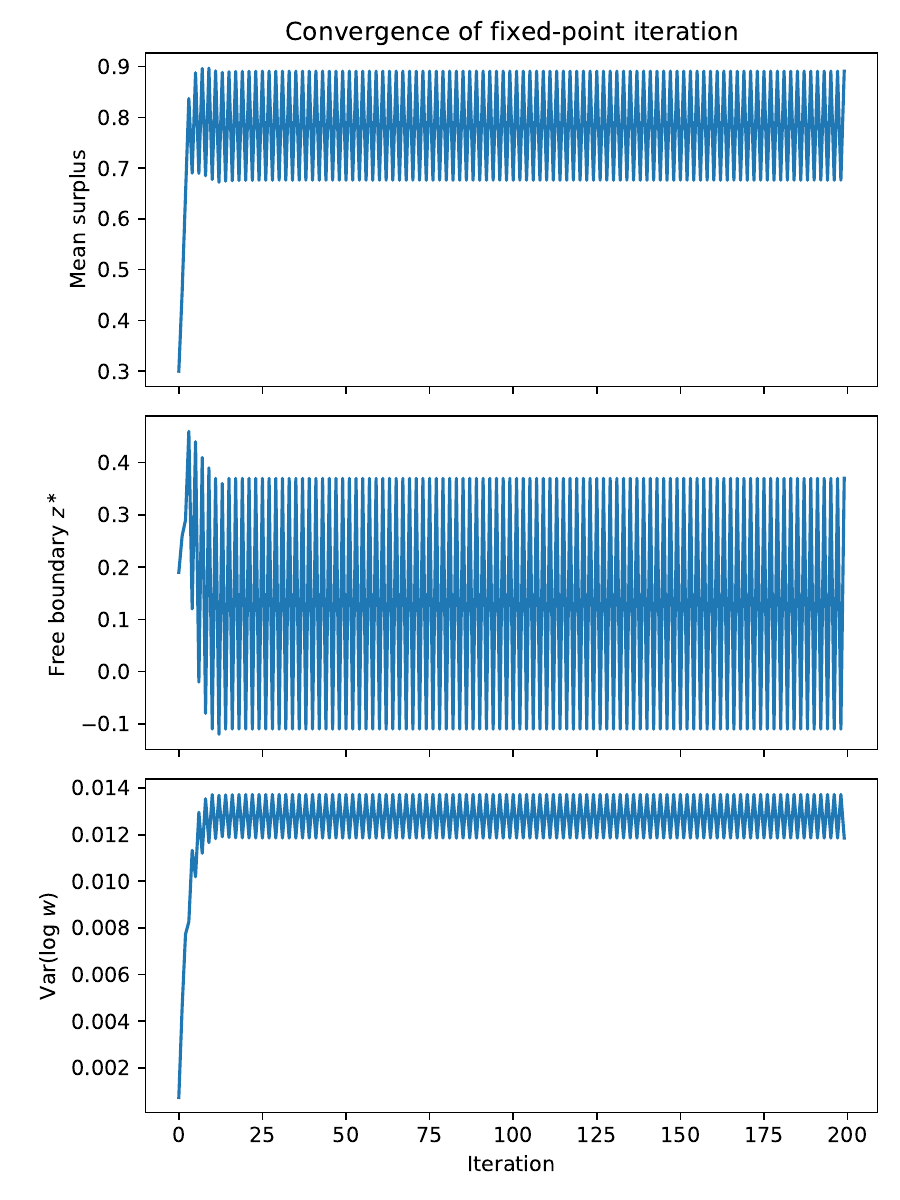}
  \caption{Convergence of the fixed-point iteration. Evolution of mean surplus, the free boundary $z^\ast$, and the cross-sectional variance of log wages along the sequence $(m^{(\ell)})_{\ell\geq 0}$ starting from a partial-equilibrium benchmark.}
  \label{fig:B6_convergence_paths}
\end{figure}

Figure~\ref{fig:B6_convergence_paths} illustrates the convergence of the fixed-point algorithm described in Appendix~\ref{app:fixed_point}. Starting from a partial-equilibrium distribution, the sequence of conjectured distributions and best-response policies converges quickly to the stationary equilibrium. Mean surplus, the free boundary, and wage dispersion stabilize together, highlighting the eductive-learning interpretation of the mean field fixed point and providing a visual check that the numerical implementation is converging to the equilibrium characterized in Section~\ref{sec:existenceUniqueness}.

\medskip
\noindent\textbf{Figure~\ref{fig:B7_variance_decomp}: Variance decomposition by tenure and parameter variation.}

\begin{figure}[t]
  \centering
  \includegraphics[width=0.8\textwidth]{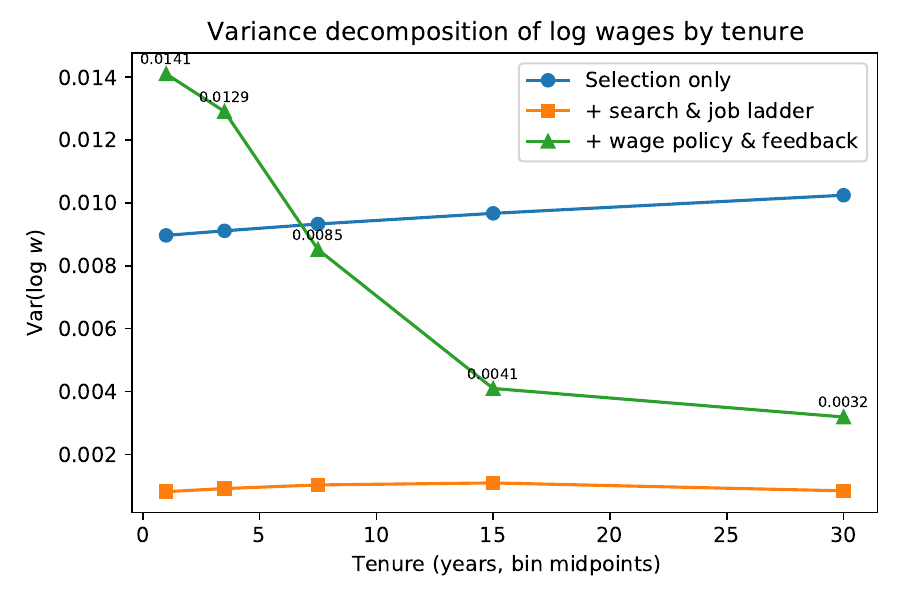}
  \caption{Decomposition of $\Var(\log w)$ by tenure bin. Circles: selection-only economy; squares: selection-plus-search economy; triangles: full mean field equilibrium with wage policies and feedback. Lines connect tenure bins. Additional curves (when present) show how the decomposition shifts under alternative search costs, bargaining weights, and volatility.}
  \label{fig:B7_variance_decomp}
\end{figure}

Figure~\ref{fig:B7_variance_decomp} reports the decomposition of $\Var(\log w)$ by tenure described in Section~\ref{sec:results}. Circles show the selection-only benchmark, squares the selection-plus-search economy, and
triangles the full mean field equilibrium. For very short tenures, search strongly equalizes
wages within a tenure cohort, so almost all dispersion comes from wage policies and the
endogenous outside option. At longer tenures, selection remains the dominant driver of
dispersion, but equilibrium search and wage setting substantially compress the variance
relative to the purely selection-driven benchmark. Additional curves (when present) show
how the decomposition shifts when we consider alternative calibrations of search costs,
bargaining power, and volatility in the underlying numerical experiments. For the volatility
and wage-sensitivity experiments, the direction of these shifts follows from the comparative
statics in Proposition~\ref{prop:comparative_statics} together with the discussion that follows it: higher match volatility
pushes the separation boundary $z^\ast$ downward, which, through its effect on the stationary
surplus distribution, raises dispersion at short tenures and reduces it at long tenures (see
the discussion after Proposition~\ref{prop:comparative_statics}), while stronger wage sensitivity mechanically amplifies
the vertical spread between the three components.

\medskip
\noindent\textbf{Figure~\ref{fig:B8_policy_experiments}: Policy experiments: wage dispersion and job-to-job mobility.}

\begin{figure}[t]
  \centering
  \includegraphics[width=0.8\textwidth]{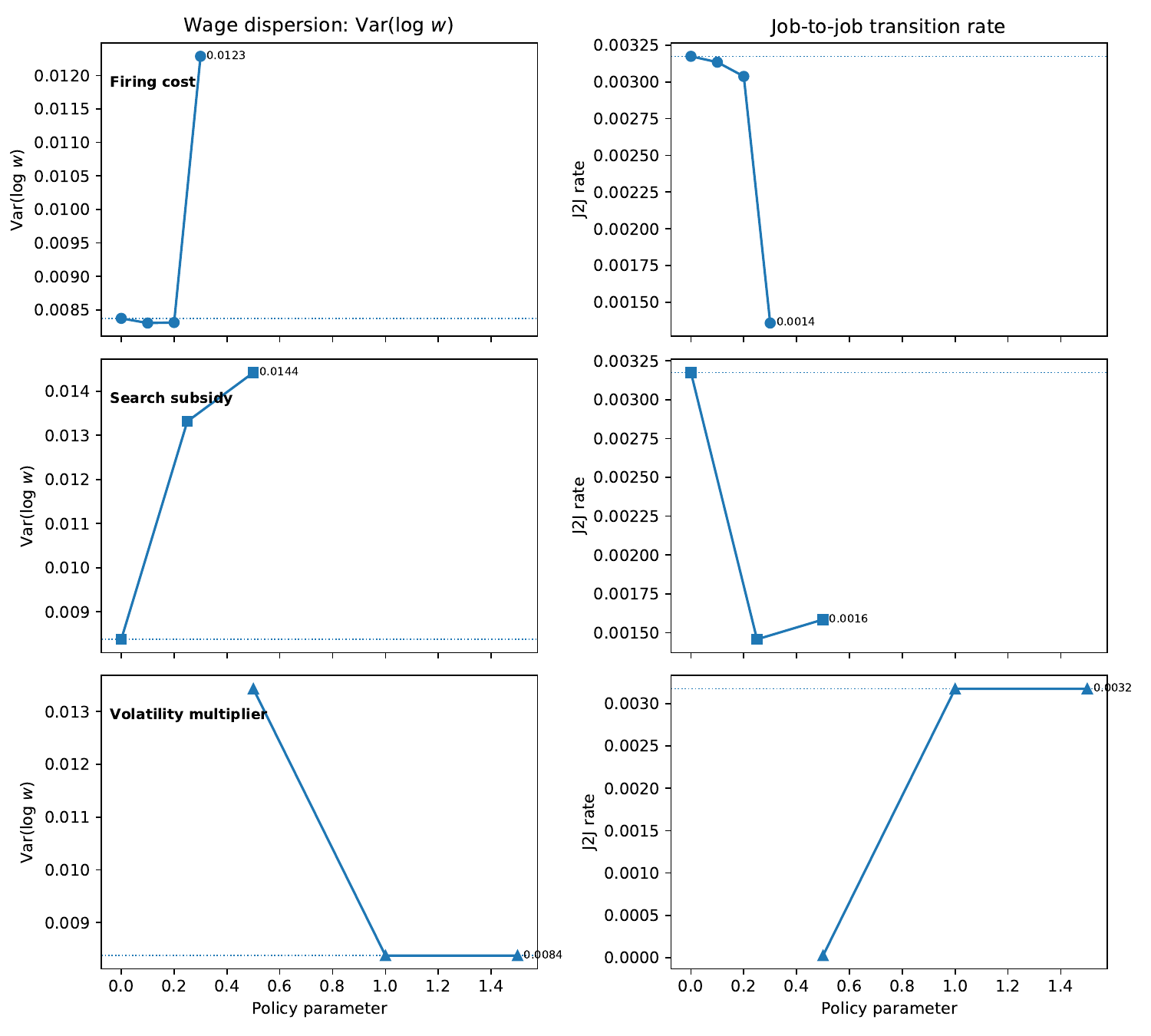}
  \caption{Policy experiments. Panels show the stationary variance of log wages and the aggregate job-to-job transition rate as functions of firing costs, effective search costs, and the volatility of match productivity, relative to the benchmark calibration.}
  \label{fig:B8_policy_experiments}
\end{figure}

Finally, Figure~\ref{fig:B8_policy_experiments} summarizes the comparative statics in Section~\ref{sec:results}. Increasing firing costs shifts the free boundary downward, reduces mobility, and eventually raises wage dispersion by trapping more workers in low-surplus matches while preserving a small set of very persistent high-surplus jobs. Reducing effective search costs (or raising outside options) raises average wages and, in our calibration, leads to more dispersion driven by a thicker right tail of high-surplus matches, even when aggregate job-to-job mobility changes little. Changes in volatility have more modest effects at the calibrated benchmark: they alter the shape of the surplus distribution and the duration mix, but within the empirically relevant range their impact on $\Var(\log w)$ and mobility is second-order relative to firing costs and search incentives.

\medskip

Taking stock, Figures~\ref{fig:B1_value_function}-\ref{fig:B8_policy_experiments} show that the numerical scheme is stable and internally consistent and that the calibrated MFG equilibrium matches the key tenure and wage-growth facts from the stochastic-productivity benchmark. As documented in Appendix~\ref{app:diagnostics}, the variance decomposition and the policy experiments in Section~\ref{sec:results} are also robust in the sense that their levels and qualitative rankings move only marginally when we (i) refine the state and action grids or tighten the convergence tolerances and (ii) symmetrically perturb each calibrated primitive by up to $\pm 5\%$ around its benchmark value.

\bibliographystyle{ecta}
\bibliography{references}

\end{document}